\documentclass[12pt]{article}
\usepackage{amsmath}
\usepackage{graphicx}
\usepackage{enumerate}
\usepackage{url} % not crucial - just used below for the URL 
\usepackage{mathrsfs}
\usepackage{amsfonts}
\usepackage{amssymb}
\usepackage{amsthm}
\usepackage{cases}
\usepackage{indentfirst}
\usepackage{epsfig}
\usepackage {graphicx}
\usepackage{color}
\usepackage{CJK}
\usepackage{hyperref}
\usepackage{setspace}
\usepackage{algorithm}
\usepackage{algorithmicx}
\usepackage{algpseudocode}
\usepackage{natbib}
\usepackage{soul}
\usepackage{xr}
\usepackage{comment}
\usepackage{multirow}
\usepackage{lipsum}

\newcommand{\blind}{0}
\newtheorem{theorem}{Theorem}[section]

\newtheorem{lemma}{Lemma}[section]
\newtheorem{condition}{Condition}[section]

\newtheorem{definition}{Definition}[section]
\numberwithin{equation}{section}
\renewcommand{\theequation}{\thesection.\arabic{equation}}
\begin{document}
\def\htau{\hat{\tau}}
\def\calI{\mathcal{I}}
\def\calT{\mathcal{T}}
\def\calB{\mathcal{B}}
\def\by{\mathbf{y}}
\def\bX{\mathbf{X}}
\def\beps{\boldsymbol{\epsilon}}
\def\btheta{\boldsymbol{\theta}}
\def\bM{\mathbf{M}}
\def\bI{\mathbf{I}}
\def\tbX{\tilde{\mathbf{X}}}
\def\bSigma{\boldsymbol{\Sigma}}
\def\bs{\mathbf{s}}
\def\bS{\mathbf{S}}
\def\tbU{\tilde{\mathbf{U}}}
\def\calG{\mathcal{G}}
\def\tbtheta{\tilde{\boldsymbol{\theta}}}
\def\bw{\mathbf{w}}
\def\bZ{\mathbf{Z}}
\def\bP{\mathbf{P}}
\def\hgamma{\hat{\gamma}}
\def\supp{\operatorname{supp}}
\def\calC{\mathcal{C}}
\def\hbSigma{\widehat{\boldsymbol{\Sigma}}}
\def\hsigma{\hat{\sigma}}
\def\hcalA{\widehat{\mathcal{A}}}
\def\calS{\mathcal{S}}
\def\calH{\mathcal{H}}
\def\hbbeta{\hat{\boldsymbol{\beta}}}
\def\tbbeta{\tilde{\boldsymbol{\beta}}}
\def\htbeta{\hat{\tilde{\boldsymbol{\beta}}}}
\def\tr{\operatorname{tr}}
\def\calA{\mathcal{A}}
\def\hcalS{\hat{\mathcal{S}}}
\def\httheta{\hat{\tilde{\boldsymbol{\theta}}}}
\def\hbtheta{\hat{\boldsymbol{\theta}}}
\def\hpi{\hat{\pi}}
\def\bX{\mathbf{X}}
\def\calZ{\mathcal{Z}}
\def\calY{\mathcal{Y}}
\def\bx{\mathbf{x}}
\def\calE{\mathcal{E}}
\def\calM{\mathcal{M}}
\def\bM{\mathbf{M}}
\def\bm{\mathbf{m}}
\def\bo{\mathbf{o}}
\def\be{\mathbf{e}}
\def\bu{\mathbf{u}}
\def\bv{\mathbf{v}}
\def\bvartheta{\boldsymbol{\vartheta}}
\def\bO{\mathbf{O}}
\def\bE{\mathbf{E}}
\def\bV{\mathbf{V}}
\def\bU{\mathbf{U}}
\def\tbz{\tilde{\mathbf{z}}}
\def\bGamma{\boldsymbol{\Gamma}}
\def\tbeps{\tilde{\boldsymbol{\epsilon}}}
\def\calV{\mathcal{V}}
\def\teps{\tilde{\epsilon}}
\def\calL{\mathcal{L}}
\def\bzeta{\boldsymbol{\zeta}}
\def\bc{\mathbf{c}}
\def\P{\operatorname*{P}}
\def\E{\operatorname*{E}}
\def\Var{\operatorname*{Var}}
\def\Cov{\operatorname*{Cov}}
\def\Cor{\operatorname*{Cor}}
\def\sign{\operatorname*{sign}}
\def\R{\operatorname*{\mathbb{R}}}
\def\F{\operatorname*{\mathcal{F}}}
\def\Rank{\operatorname*{Rank}}
\def\diag{\operatorname*{diag}}
\def\swap{\operatorname*{swap}}
\def\FDR{\operatorname{FDR}}
\def\Power{\operatorname{Power}}
\def\Pr{\operatorname{Pr}}
\def\D{\operatorname*{D}}
\def\I{\operatorname{I}}
\def\FDP{\operatorname{FDP}}
\def\red{\color{red}}
\def\blue{\color{blue}}
\def\cyan{\color{cyan}}
\newcommand{\bbeta}{\boldsymbol{\beta}}
\def\mb{\mathbf{b}}
\def\bD{\mathbf{D}}
\def\bd{\mathbf{d}}
\def\htheta{\hat{\theta}}
\def\ttheta{\tilde{\theta}}
\def\bmu{\boldsymbol{\mu}}
\def\tby{\tilde{\mathbf{y}}}
\def\ba{\mathbf{a}}
\def\bv{\mathbf{v}}
\def\bA{\mathbf{A}}
\def\bT{\mathbf{T}}
\def\tbM{\hat{\mathbf{M}}}
\def\hJ{\hat{J}}
\def\bE{\mathbf{E}}
\def\bW{\mathbf{W}}
\def\bgamma{\boldsymbol{\gamma}}
\def\hbv{\hat{\mathbf{v}}}
\def\bY{\mathbf{Y}}
\def\bz{\mathbf{z}}
\def\hbM{\hat{\mathbf{M}}}
\def\hG{\hat{G}}
\def\hV{\hat{V}}
\def\vec{\operatorname{vec}}
\def\bdel{\boldsymbol{\delta}}
\def\hK{\hat{K}}
\def\tG{\tilde{G}}
\def\PWER{\operatorname{PWER}}
\def\tbc{\tilde{\mathbf{c}}}
\def\bg{\mathbf{g}}
\def\PFER{\operatorname{PFER}}
\def\CUS{\operatorname{CUS}}
\def\FD{\operatorname{FD}}
\def\TD{\operatorname{TD}}
\def\hSigma{\operatorname{\hat{\Sigma}}}
\def\hGamma{\operatorname{\hat{\Gamma}}}
\def\CV{\operatorname{CV}}
\def\vech{\operatorname{vech}}

\def\spacingset#1{\renewcommand{\baselinestretch}%
{#1}\small\normalsize} \spacingset{1}

%%%%%%%%%%%%%%%%%%%%%%%%%%%%%%%%%%%%%%%%%%%%%%%%%%%%%%%%%%%%%%%%%%%%%%%%%%%%%%

\if0\blind
{
  \title{\bf OPTICS: Order-Preserved Test-Inverse Confidence Set for Number of Change-Points}
  \author{Ao Sun\hspace{.2cm}\\
    Data Sciences and Operations Department, University of Southern California\\
    and \\
    Jingyuan Liu \\
     Department of Statistics, Xiamen University}
  \maketitle
} \fi

\if1\blind
{
  \bigskip
  \bigskip
  \bigskip
  \begin{center}
    {\LARGE\bf OPTICS: Order-Preserved Test-Inverse Confidence Set for Number of Change-Points}
\end{center}
  \medskip
} \fi

\bigskip
\begin{abstract}
	Determining the number of change-points is a first-step and fundamental task in change-point detection problems, as it lays the groundwork for subsequent change-point position estimation. While the existing literature offers various methods for consistently estimating the number of change-points, these methods typically yield a single point estimate without any assurance that it recovers the true number of changes in a specific dataset. Moreover, achieving consistency often hinges on very stringent conditions that can be challenging to verify in practice. To address these issues, we introduce a unified test-inverse procedure to construct a confidence set for the number of change-points. The proposed confidence set provides a set of possible values within which the true number of change-points is guaranteed to lie with a specified level of confidence. We further proved that the confidence set is sufficiently narrow to be powerful and informative by deriving the order of its cardinality. Remarkably, this confidence set can be established under  more relaxed conditions than those required by most point estimation techniques. {We also advocate multiple-splitting procedures to enhance stability and extend the proposed method to heavy-tailed and dependent settings.}  As a byproduct, we may also leverage this constructed confidence set to assess the effectiveness of point-estimation algorithms. Through extensive simulation studies, we demonstrate the superior performance of our confidence set approach. Additionally, we apply this method to analyze a  bladder tumor microarray dataset. {Supplementary Material, including proofs of all theoretical results, computer code, the R package, and extended simulation studies, are available online.}
\end{abstract}

\noindent%
{\it Keywords:}  Change-point detection, Cross-validation, Order-preserved data splitting, Hypothesis testing, Confidence level
\vfill

\newpage
\spacingset{1.75} % DON'T change the spacing!

\section{Introduction}
\label{sec:intro}
\subsection{Motivation and intuition}
Estimating the number of change-points is a fundamental task in change-point detection problems, as a consistent estimation of the number usually leads to consistent estimation of change locations \citep{harchaoui2010multiple, wang2021statistically}. Therefore, the consistency in change-point detection problems can typically be formulated as 
\begin{equation}
	\label{equ:consistency}
	\liminf_{n \to \infty}\Pr\{\hK = K^*\} = 1,
\end{equation}
where $\hK$ is the estimated number of change-points, and $K^*$ is the true number of change-points. Classical methods for obtaining a consistent $\hK$ are mainly based on the Bayesian information criterion (BIC, \cite{schwarz1978estimating}). See for instance \cite{yao1988estimating, bai1998estimation, braun2000multiple,zhang2012model, fryzlewicz2014wild} and \cite{cho2015multiple}. Additionally, some recent approaches,  such as \cite{padilla2019optimal, wang2021statistically, padilla2021optimal},  have embraced a hard threshold approach. Both the BIC-based methods and hard-threshold techniques require parameter tuning, and the consistency of their estimates highly relies on these tuning parameters. To mitigate this dependence on tuning, \cite{zou2020consistent} introduced an Order-Preserved Sample-Splitting Procedure (COPSS) that selects the number of change-points by optimizing out-of-sample prediction, thus is tuning-free.

{ Nevertheless, these methods only provide point estimates for the number of change-points, without any guarantee that the true number $K^*$ can be recovered on a finite-sample basis. As a result, the consistency result may not be sufficient in practice.} To intuitively illustrate, we conduct a simple simulation for COPSS, using Binary Segmentation (BS, \cite{fryzlewicz2014wild}) as the change-position detector. The detailed model settings are provided in Section \ref{sec:simulation}. According to COPSS, the true number $K^*$ can be correctly identified only when $K^*$ is the minimizer of the out-of-sample prediction error; that is, in an ascending order,  $K^*$ should rank the first in terms of prediction error among all the candidate numbers of change-points. Left picture in Figure \ref{fig:illus1} depicts the piechart of such ranks for $K^*$ over 500 simulation runs. We observe that in only 55.8\% runs, $K^*$ can rank the first and hence be successfully recovered. 

In brief, a single-point estimate can be unreliable; a misjudgment in the number of change-points could further result in erroneous estimations of change positions. To this end, inspired by the concept of confidence interval, a more prudent approach is constructing a confidence set, denoted as $\calA$, for $K^*$, with a specified confidence level $1- \alpha$, such that
$$
\liminf_{n \to \infty}\Pr\{K^* \in \calA\} \ge 1 - \alpha.
$$
This task appears challenging since the construction of confidence sets for the number of change-points demands an investigation into the asymptotic behavior of a discrete random variable. However, Right picture in Figure \ref{fig:illus1} offers an insight for this problem. The figure depicts a histogram of differences in prediction errors between the models fitted with the true number of change-points $K^*$ and those fitted with the minimizer $\hK$. We observe that while the models with $K^*$ may not always lead to optimal out-of-sample prediction performance, they typically exhibit relatively small deviations from the optimum. In light of this and motivated by \cite{lei2020cross}, we address the challenge of constructing a confidence set by framing it as a testing problem. In this testing framework, the null hypothesis posits that the selected number is indeed optimal from the perspective of out-of-sample prediction error. Note that in hypothesis testing, the null hypothesis is not rejected unless there exists substantial evidence to the alternative. Therefore, although the true number need not be the minimizer, it will likely not be rejected unless the observed difference is statistically significant. We then collect those numbers that are not rejected to form the confidence set $\calA$. When the significance level is set as $\alpha$, we will show that the true number of change-points lies in the confidence set with probability at lease $1-\alpha$. 
\begin{figure}[h]
	\centering
	\includegraphics[width=4.5cm, height=4.5cm]{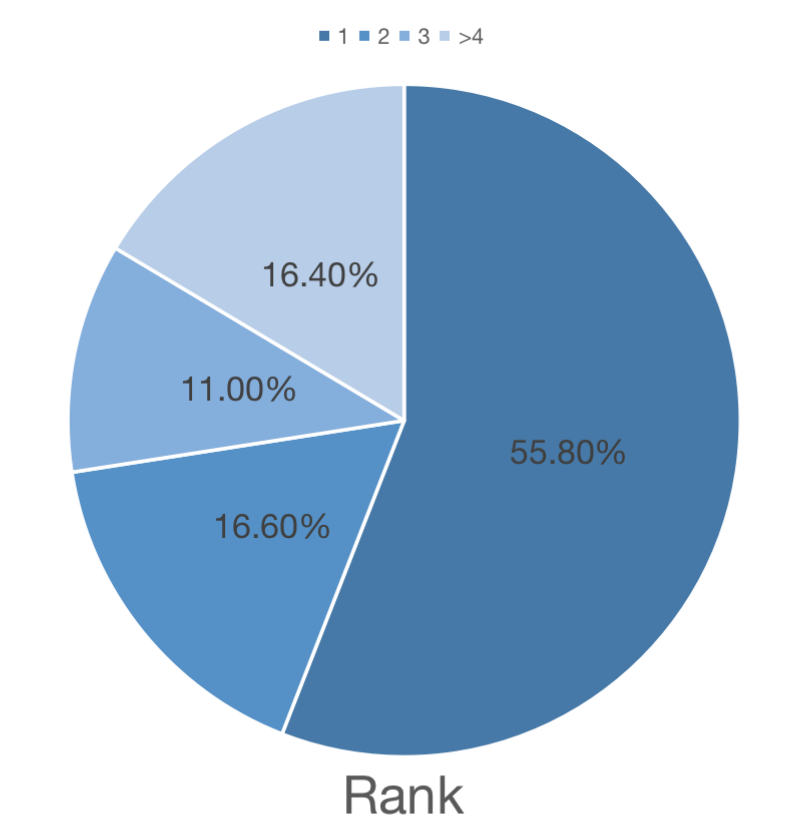}
	\includegraphics[width=6cm, height=4cm]{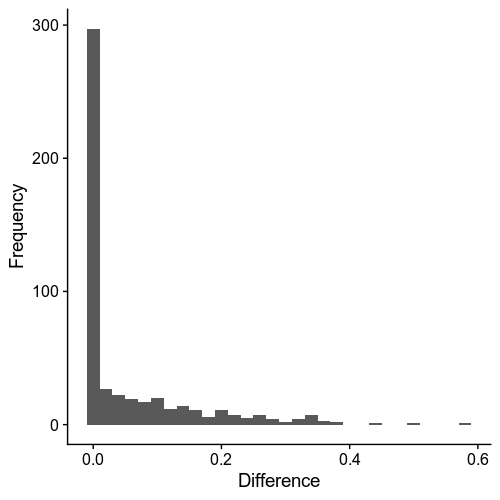}
	\caption{Left figure: Piechart for ranks of $K^*$ in an ascending order according to prediction error. Right figure: Histogram of the difference in the out-of-sample prediction errors between the models fitted with the true number of change-points $K^*$ and those fitted with the minimizer $\hK$. \label{fig:illus1}}
\end{figure}

Apart from the desirable coverage probability, we also delve into the cardinality of the proposed confidence set, which reflects the power of tests from the perspective of statistical inference, or the rate of false negatives in the context of model selection. Therefore, the theory of cardinality holds its own significance. We establish that, with an overwhelming probability, the selected confidence set possesses nontrivial power with a bounded cardinality.

Furthermore, the cardinality of the confidence set can serve as a metric for assessing the efficacy and stability of a change-point detection algorithm in the detection stage. When an inefficient algorithm is employed, it becomes challenging to distinguish the true number of change-points $K^*$ from other potential choices, resulting in the confidence set containing many candidates to achieve the desired coverage rate. Conversely, with an efficient and stable detection algorithm, the true number $K^*$ often exhibits significantly superior performance compared to other choices. Consequently, the null hypothesis will only be retained when $K$ is closely aligned with $K^*$, leading to a small cardinality for the confidence set.

\subsection{Our contributions}

Meanwhile, some works in the change-point literature could potentially facilitate constructing confidence intervals or sets for change locations. For instance, \cite{yao1989least} derived the asymptotic distribution of change locations in the context of one-dimensional mean change problems. Similarly, \cite{bai1998estimating} investigated the asymptotic behavior of change locations in structural breaking models. However, the asymptotic distributions in these studies rely on unknown population-level parameters, such as the means and variances of noises, as well as the true number of change-points. To the best of our knowledge, no research has delved into the asymptotic distributions of the number of change-points. 

In light of this, we propose a test-based framework to circumvent the asymptotic distribution of number of change-points. This framework was inspired by \cite{lei2020cross}, which focused on linear models and constructed the confidence interval for number of active predictors via cross-validation. However, it's crucial to recognize the substantial differences between linear models and change-point detection problems in both methodology and the whole theoretical foundation; even the conventional cross-validation techniques are not applicable in the context of change-point detection. For instance, in change-point problems, both the mean values and variances of individual data points might be non-stationary. Hence, obtaining a consistent estimator of the covariance matrix becomes challenging. This, in turn, causes failure of the Gaussian multiplier bootstrap procedure, which is crucial for constructing confidence intervals. Furthermore, \cite{lei2020cross} did not take into account the length of their proposed confidence intervals, which in turn impacts the power of tests. Consequently, if we include an excessive number of candidate values, the resulting interval could become uninformative even if the coverage rate is guaranteed. 

In this paper, we establish the confidence set for the number of change-points by incorporating order-preserved splitting and multiplier bootstrap, and systematically establish the theoretical properties of the proposed set in the change-point framework. Interestingly, the theory of the proposed confidence set can be built upon weaker conditions than the corresponding point estimates of number of change-points. Additionally, beyond the scope of coverage rate, we derive a sharp upper bound for the cardinality of the confidence set, which plays a crucial role in controlling the false negative rate and ensuring that the associated test has nontrivial power. { The proposed OPTICS method is then extended to handle heavy-tailed and $m$-dependent settings. The effectiveness of these extensions is verified through simulations. As a byproduct, we provide an easy-to-implement R package, \textit{OPTICS}, available on GitHub to implement the proposed method.} \footnote{\url{https://github.com/suntiansheng/OPTICS}}

%First, the random data-splitting used in their algorithm is not suitable for detecting change-points since the order structure is critical information. Secondly, the comparison criterion $\calC_K$ used in their method for linear models is unsuitable for detecting change-points. Therefore, we should take effort to redefine a suitable criteria for change-point detection and derive its asymptotic behavior. Finally,  \cite{lei2020cross} assumed that the data points are independent and identically distributed (i.i.d) in their theory. In contrast, change-points occur at least when the moments of two adjoining data points differ, which makes the i.i.d assumption unrealistic. As a result, we need to prove our results using newly developed technical lemmas.

\subsection{Notations}
We introduce the following notations used throughout this paper. For two sequences $a_n$ and $b_n$, $a_n \lesssim b_n$ ($a_n \gtrsim b_n$) means with probability approaching one, $a_n\le c b_n $ ($b_n\le c a_n $) for some $c > 0$ and sufficiently large $n$; 
{$a_n \gg b_n$ stands for $a_n/b_n\rightarrow\infty$}. 
$a:=b$ represents that $a$ is defined as $b$. Given a vector $\mathbf{x} = [x_1, \ \ldots, \ x_d]^{\top} \in \mathbb{R}^d$, its $\ell_2$-norm is defined as $\|\mathbf{x}\|_2 = \big(\sum_{j=1}^d |x_j|^2\big)^{1/2}$, and its $\ell_\infty$-norm is $\| \bx\|_{\infty}=\max_{j=1,\ldots, d}|x_j|$. For a random variable $X$, its Orlicz norm $\| X\|_{\psi_{\beta}} = \inf\left\{C > 0: \E\left[ \psi_{\beta}\left(|X|/C\right) \right] \le  1\right\}$, where $\psi_{\beta}(x) := \exp(x^{\beta}) - 1$ for $\beta = 1,2$. Let $\calT_K = \{\tau_1, \tau_2, \ldots, \tau_K\}$ be a set of  $K$ change-points for a size-$m$ sequence $\{\bx_i, i=1,2,\ldots, m\}$, where $\tau_0 < \tau_1 < \ldots < \tau_K$. Denote $\bar{\bx}_{\tau_{k},\tau_{k+1}}$ as the average of $\{\bx_i, i = \tau_k+1, \tau_k+2, \ldots, \tau_{k+1}\}$, and set
$
\calS_{\bx}^2(\calT_K) = \sum_{k = 0}^K \sum_{i = \tau_{k}+1}^{\tau_{k+1}} \|\bx_i - \bar{\bx}_{\tau_{k},\tau_{k+1}} \|_2^2
$. 
Moreover, let $\tilde{\calT}_{\tilde{K}}$ be another set of $\tilde{K}$ change-points. Denote $\calS_{\bx}^2(\calT_K \cup \tilde{\calT}_{\tilde{K}})  = \calS_{\bx}^2(\operatorname{sort}(\calT_K \cup \tilde{\calT}_{\tilde{K}})) $, where $\operatorname{sort}(\calA)$ represents the set of sorted elements of $\calA$ in ascending order. { For a matrix $\bX$, $\vech(\bX)$ be the vectorization of the lower half matrix of $\bX$, otherwise, if $\bx$ is a vector, we define $\vech(\bx) = \bx$. We denote $\lambda_{\min}(\bX)$ and $\lambda_{\max}(\bX)$ as the smallest and largest eigenvalues of the matrix $\bX$, respectively.} For a set $\mathcal{A}$, $|\mathcal{A}|$ denotes its cardinality.  $\mathrm{1}_{\mathcal{A}}$ is the indicator function that takes value 1 when $\mathcal{A}$ is true and 0 otherwise. 

\subsection{Organization of the paper}

The rest of the paper is organized as follows. In Section \ref{sec2}, we propose an Order-Preserved Test-Inverse Confidence Set (OPTICS) for the number of change-points -  the methodology, intuition, algorithm and practical guidelines are provided. In Section \ref{sec:3}, we systematically study the theoretical properties of the OPTICS, including but not limited to the coverage rate and the asymptotic bound for the cardinality of OPTICS. The finite-sample performance of OPTICS is empirically verified through several simulation studies in Section \ref{sec:simulation}. In Section \ref{sec:5}, we apply construct the OPTICS for a bladder
tumor microarray dataset. {Section \ref{sec:6}} concludes the paper.  The Supplementary Material contains the proofs of theoretical results, the additional literature review, simulations and real data results of the main paper.

\section{Methodology: OPTICS for number of change-points}\label{sec2}
\subsection{A general change-point model and its score transformation}
Suppose a sequence of independent data observations $\calZ = \{\bz_1, \ldots, \bz_{2n}\}$ are collected from the following multiple change-point model:
\begin{equation}
	\label{sec1:e1}
	\mathbf{z}_i \sim m\left(\cdot \mid \boldsymbol{\beta}^*_k\right), \quad \tau_{k-1}^*<i \leq \tau_{k}^*, k=1, \ldots, K^*+1; \  i=1, \ldots, 2n.
\end{equation}
In model (\ref{sec1:e1}), the sample size is set to be $2n$ for the later notation convenience. $K^*$ is the true number of change-points that is allowed to vary with $n$. $\tau_k^{*'}s$ are the locations of true change-points; for convention, set $\tau_0^*=0$ and $\tau_{K^*+1}^* = 2n$, so that the change-points in $\calT^* = \{\tau_1^*, \ldots, \tau_{K^*}^*\}$ partition the $2n$ sample points into $k+1$ segments. $m(\cdot\mid \bbeta_k^*)$ represents a certain model structure for the $k$th segment, with a $d$-dimensional parameter vector $\bbeta_k^*$, where $\bbeta_{k}^* \ne \bbeta_{k+1}^*$. {We assume $d$ is fixed throughout the manuscript.} This model setting is quite general, including mean changes \citep{hao2013multiple}, variance changes \citep{chen1997testing}, structural changes in regression model \citep{bai1998estimating}, Covariance change-points model \citep{aue2009break} and network change-points model \citep{wang2021optimal, fan2025mosaic}. See more discussio in Section~\ref{sup:loss} of Supplementary Material.
%See more discussion in \cite{zou2020consistent}. 
Our primary goal in this paper is to construct a confidence set for $K^*$. 

Inspired by \cite{zou2020consistent}, we embed the generic model \eqref{sec1:e1} into a multiple mean change-point detection problem via a score-type transformation. Specifically, let $\ell(\bbeta; \bz_i)$ be a plausible loss function for data point $\bz_i$, thus the score function can be defined as its derivative {$ \bs_{\bbeta}(\bz_i) = \vech( \partial \ell(\bbeta; \bz_i)/{\partial \bbeta})$}. Intuitively, if the score is identifiable, for $i \in (\tau_{k-1}^*, \tau_{k}^*]$ and $i^\prime \in (\tau_{k}^*, \tau_{k+1}^*]$, we should have $\E\{\bs_{\bgamma}(\bz_i)\}  = 0$ if and only if $\bgamma = \bbeta^*_{k}$, and $\E\{\bs_{\bgamma}(\bz_{i^\prime})\}  = 0$ if and only if $\bgamma = \bbeta^*_{k+1}$. Hence, given a fixed $d$-dimensional vector $\bgamma$, typically $\E\{\bs_{\bgamma}(\bz_i)\} \ne \E\{\bs_{\bgamma}(\bz_{i^\prime})\}$.  This motivates us to decompose the score into 
\begin{equation}
	\label{model:score}
	\bs_i :=\bs_{\bgamma}(\bz_i) = \bmu_i + \beps_i, i =1,\ldots, 2n,  
\end{equation}
where $\bmu_i = \E[\bs_{\bgamma}(\bz_i)]$, and $\beps_i = \bs_{\bgamma}(\bz_i)  - \E[\bs_{\bgamma}(\bz_i)]$. Denote the covariance matrix $\Cov(\beps_i)  = \bSigma^{(k)}$ when $i \in (\tau_k^*, \tau_{k+1}^*]$. Note that the score-type transformation generally remains invariant regardless of the choice of $\bgamma$. Hence, we can choose any $\bgamma$, such as $\bgamma=0$ or $\bgamma := \arg\min_{\bbeta} \sum_{\bz_i \in \calZ} \ell(\bbeta; \bz_i)$.

\subsection{Construction of OPTICS}

In this subsection, we propose an order-preserved test-inverse confidence set (OPTICS)  for the number of change-points. Let $\calM = \{1, 2, \ldots, K_{\max}\}$ be a tentative candidate set of the number of change-points in model \eqref{sec1:e1}, where $K_{\max} > K^*$ is allowed to increase with the sample size $2n$; a commonly adopted convention for $K_{\max}$ is $K_{\max}=\log(n)$ \citep{zou2020consistent}. Our objective is to identify a small subset of $\calM$ that covers the true number $K^*$ with a specified level of confidence. To begin with, we  introduce a criterion for evaluating the model-fitting capability of each $K\in\calM$, utilizing an order-preserved data-splitting technique \citep{zou2020consistent}. 

Considering the intrinsic order structure of change-point problems, we divide the data into the following ``Odd sample" $\mathcal{Z}_O$  and ``Even sample" $\mathcal{Z}_E$ based on the parity of temporal order: 
$$
\mathcal{Z}_O=\left\{\bz_{2 i-1}, i=1, \ldots, n\right\} \quad \text { and } \quad \mathcal{Z}_E=\left\{\bz_{2 i}, i=1, \ldots, n\right\}.
$$
Given each $K \in \calM$, certain base change-point detection algorithm can be adopted to estimate the change-position set $\calT_{K} = \{\tau_{1}^K, \ldots, \tau_{K}^K\}$ using the odd sample $\mathcal{Z}_O$. Then a natural criterion can be defined, based on the even sample $\mathcal{Z}_E$, as  
$$
\mathcal{C}^\prime\left(\mathcal{T}_K ; \mathcal{Z}_E\right):=\frac1n\mathcal{S}^2_{\bs^E}(\mathcal{T}_K)=\frac{1}{n}\sum_{k=0}^K \sum_{i=\tau_k^K+1}^{\tau_{k+1}^K}\|\bs_i^E - \bar{\bs}_{\tau_k^K,\tau_{k+1}^K}^E\|^2_2, 
$$
where $\bs_i^E$ is the score calculated using $\mathcal{Z}_E$ and $\bar{\bs}_{\tau_k^K,\tau_{k+1}^K}^E$ is the average of $\{\bs_i^E, \ i \in[\tau_{k}^K + 1, \tau_{k+1}^K]\}$. However, $\mathcal{C}^\prime\left(\mathcal{T}_K ; \mathcal{Z}_E\right)$ gauges the predictive performance within the even sample $\mathcal{Z}_E$ itself, wherein over-fitting always appears advantageous. In addition, the term $\bar{\bs}_{\tau_k^K,\tau_{k+1}^K}^E$ in $\mathcal{C}^\prime\left(\mathcal{T}_K ; \mathcal{Z}_E\right)$ renders $\|\bs_i^E - \bar{\bs}_{\tau_k^K,\tau_{k+1}^K}^E\|^2_2$ and $\|\bs_j^E - \bar{\bs}_{\tau_k^K,\tau_{k+1}^K}^E\|^2_2$ dependent for those $i$ and $j$ in the same sub-interval. Hence,  we replace $\bar{\bs}_{\tau_k^K,\tau_{k+1}^K}^E$ with its counterpart in $\mathcal{Z}_O$, denoted by $\bar{\bs}_{\tau_k^K,\tau_{k+1}^K}^O$, and refine the above criterion as
\begin{eqnarray}\label{CK}\mathcal{C}\left(\mathcal{T}_K ; \mathcal{Z}_E\right):=\frac{1}{n}\sum_{k=0}^K \sum_{i=\tau_k^K+1}^{\tau_{k+1}^K}\|\bs_i^E - \bar{\bs}_{\tau_k^K,\tau_{k+1}^K}^O\|^2_2.
\end{eqnarray}
Further let $\bar{\bs}_{K,i}^O = \sum_{k = 0}^{K} \mathrm{1}_{\{\tau_k^K+1, \tau_{k+1}^K\}} \bar{\bs}_{\tau_k^K,\tau_{k+1}^K}^O$, then \eqref{CK} can be rewritten as
$$
\mathcal{C}\left(\mathcal{T}_K; \calZ_E\right)  =n^{-1} \sum_{i =1}^{n} \| \bs_i^E  - \bar{\bs}_{K,i}^O\|_2^2.
$$
As discussed in Section 1, for the true number of change-points $K^*$, while $\mathcal{C}\left(\mathcal{T}_{K^*}; \calZ_E\right)$ might not be the minimum among all candidate models in $\mathcal{M}$, it is often reasonably close to the minimum. Formally, this motivates us to establish the following hypotheses for each candidate $K\in\calM$:
{\small\begin{equation}
		\label{equ:hypo}
		H_{0,K}: E[\mathcal{C}\left(\mathcal{T}_K; \calZ_E\right)] \text{ is the minimum in } \calM \quad \text{v.s.} \quad H_{1,K}: E[\mathcal{C}\left(\mathcal{T}_K; \calZ_E\right)] \text{ is not the minimum in } \calM.
\end{equation}}Following the philosophy of hypothesis testing, we do not reject $H_{0,K}$ unless $\mathcal{C}\left(\mathcal{T}_K; \calZ_E\right)$ significantly departs from the minimum. Then $H_{0,K^*}$, the null hypothesis corresponding to the true number $K^*$, is expected not rejected with an overwhelming probability. Therefore, the confidence set $\calA$ can be naturally defined as the collection of $K\in\calM$ whose $H_{0,K}$'s are not rejected. In other words, let $p_K$ be the associated $p$-value for the testing problem \eqref{equ:hypo}, then for a predetermined significance level $\alpha$, $$\calA := \{ K \in \calM: p_K > \alpha\}.$$  
By this means, the members in $\calA$ are statistically equivalent with respect to the criterion $\calC(\cdot,\mathcal{Z}_E)$; that is, all the number of change-points in this set are highly competitive. Indeed, as to be seen in Section \ref{sec:3}, the selected set $\calA$ covers the true number $K^*$ with probability at least $1-\alpha$: $$\Pr\{K^* \in \calA\} = \Pr\{p_{K^*} > \alpha\} \ge 1- \alpha.$$

The remaining task is to obtain the $p$-value $p_K$ associated with the testing problem \eqref{equ:hypo} for each $K\in\calM$. To accomplish this, for any $J, K\in \calM$, we further define $$\delta_{K, J} = \E[\mathcal{C}\left(\mathcal{T}_K; \calZ_E\right) - \mathcal{C}\left(\mathcal{T}_J; \calZ_E\right)].$$ Hence, the hypotheses in \eqref{equ:hypo} are equivalent to 
\begin{equation}
	\label{sec2:hypo}
	H_{0,K} : \max_{J \in \calM, J \ne K}  \delta_{K,J}\le 0\quad  \text{ v.s. } \quad H_{1,K} : \max_{J \in \calM, J \ne K}\delta_{K,J}> 0.
\end{equation}
One possible point estimate of $\delta_{K,J}$ in \eqref{sec2:hypo} is
\begin{eqnarray}\label{Xi}
	\hat{\delta}_{K, J}  =  \frac{1}{n}\sum_{i =1}^{n} \left(\| \bs_i^E  -\bar{\bs}_{K,i}^O\|^2_2 -  \| \bs_i^E  - \bar{\bs}_{J,i}^O\|^2_2\right) : = \frac{1}{n}\sum_{i =1}^{n} \xi_{K,J}^{(i)},
\end{eqnarray}
where $\xi_{K,J}^{(i)} := \| \bs_i^E  -\bar{\bs}_{K,i}^O\|^2_2 -  \| \bs_i^E  - \bar{\bs}_{J,i}^O\|^2_2, \ i = 1,\ldots, n$, are independent random variables given the odd sample $\mathcal{Z}_O$. Then naturally we can further take
$$
\max_{J \in \calM, J \ne K} \hat{\delta}_{K, J} = \max_{J \in \calM, J \ne K} \frac{1}{n} \sum_{i =1}^{n} \xi_{K,J}^{(i)},
$$
and the test statistic can be set as its studentized version: 
\begin{eqnarray}\label{TK}T_K = \max_{J \ne K}  \frac{\sqrt{n}\hat{\delta}_{K,J}}{\hat{\sigma}_{K,J}}=\max_{K\ne J} \frac{1}{\sqrt{n}} \sum_{i=1}^{n} \frac{\xi_{K,J}^{(i)}}{\hat{\sigma}_{K,J}},
\end{eqnarray}
where $\hat{\sigma}_{K,J}^2 = n^{-1}\sum_{i=1}^n (\xi_{K,J}^{(i)})^2$ is the estimated second moment.

Next, we calculate the $p$-values corresponding to $T_K$ using a Gaussian comparison and bootstrap method analogous to \cite{chernozhukov2013gaussian, chernozhukov2017central}. To be specific, we first generate independent standard Gaussian random variables $\zeta_i, \ i =1,\ldots, n$. Then for $b=1,\ldots, B$, where $B$ is the total number of bootstrap runs, define the $b$th bootstrap statistic as 
\begin{eqnarray}\label{bootT}
	{ T^{\sharp}_{K,b}} = \max_{K\ne J} \frac{1}{\sqrt{n}} \sum_{i=1}^{n} \frac{\xi_{K,J}^{(i)}}{\hat{\sigma}_{K,J}} \zeta_i.
\end{eqnarray}
Then the $p$-value is naturally set to be $\hat{p}_K = B^{-1} \sum_{b=1}^B \I({ T^{\sharp}_{K,b}} > T_K)$. In \eqref{bootT}, a non-centered bootstrap statistic is used. The reason is, as the random variables $\{\xi_{K,J}^{(i)},i=1,\ldots n\}$ may have varying means and variance, it is intricate to verify the sample variance converges to population variance under such a fluctuated scenario. 
However, the regular law of large numbers still implies that the sample second moment is expected to approximate its population counterpart. Notably, since the second moment serves as an upper bound of variance, the non-centered bootstrap statistic tends to be slightly conservative, while ensuring the type-I error.

In sum, the confidence set $\calA$ for the true number of change-points $K^*$ is constructed upon a test-based method with the order-preserved sample splitting technique. Thus, we name the confidence set to be Order-Preserved Test-Inverse Confidence Set (OPTICS). The entire procedure for obtaining OPTICS is summarized in the following steps. 
\begin{enumerate}
	\item \textbf{(Initialization)}. Given a proper { $\bgamma$ in \eqref{model:score}}, calculate the score functions $\bs_i$ for $i=1,\ldots, 2n$.
	\item For each given candidate number of change-points $K \in \calM$: 
	\begin{itemize}
		\item [2.1] \textbf{(Training)}. Obtain the estimated change-position set $\calT_K$ based on the odd sample $\calZ_O$. Compute the piecewise averages $\bar{\bs}_{\tau_k^K,\tau_{k+1}^K}^O$ in \eqref{CK}, or equivalently, $\bar{\bs}_{K,i}^O$. 
		
		\item [2.2] \textbf{(Validation)}. For $i = 1,\ldots, n$ and $K \ne J$, compute $\xi_{K,J}^{(i)}$ in \eqref{Xi} using the even sample $\calZ_E$. Further obtain the test statistic $T_K$ in \eqref{TK}.
		
		\item [2.3] \textbf{(Bootstrapping)}. For $b = 1,\ldots, B$, calculate the Gaussian multiplier bootstrap statistic { $T_{K,b}^{\sharp}$} in \eqref{bootT}, as well as the associated $p$-value $\hat{p}_K$.
	\end{itemize}
	\item \textbf{(OPTICS)}.  Given a significance level $\alpha$, the OPTICS is taken to be
	\begin{eqnarray}\label{set.est}
		\calA = \{ K \in \calM: \hat{p}_K > \alpha\}.
	\end{eqnarray}
\end{enumerate}

\subsection{Practical guidelines of OPTICS}\label{sec2:3}

{ 
\subsubsection{Choice of loss function and computational complexity}
The choice of the loss function and change-point detection methods is crucial in change-point analysis. Table \ref{tab:condition} in the Supplementary Material provides a comprehensive overview of settings for mean, variance, regression coefficient, nonparametric change-point models, covariance change-points model and Network change-points model \citep{fan2025mosaic}.

The overall computational complexity of OPTICS depends on the change-point detection methods and the bootstrap testing procedure. The complexity of change-point detection varies depending on the model and the specific detection method used (see \cite{truong2020selective} for details). For bootstrap testing, if we treat basic mathematical operations such as addition, subtraction, and multiplication as $O(1)$, each computation of \eqref{Xi} has a complexity of $O(n d_p)$, where $d_p$ is the dimension of $\bs_1$. Completing the entire testing procedure requires $O(B K_{\max} n d_p)$ operations, where $B$ is the number of bootstrap samples and $K_{\max}$ is the maximum number of change-points. However, by parallelizing the bootstrap procedure, the computational burden can be significantly reduced.
}

\subsubsection{Reduce a set to a single number of change-points}
The OPTICS produces a set of numbers of change-points $\calA$. It is eligible to guarantee the predetermined confidence and thus is typically more informative in practice. However, in some circumstances, a single estimated number is still desirable, especially when the ultimate goal is to estimate the change-positions.  One suggestion is to adopt the rightmost number $\bar{K}$ in $\calA$, as it is guaranteed to be larger than the true number $K^*$ with probability $1-\alpha$:
$$
\Pr\{\bar{K} \ge K^*\} \ge \Pr\{K^* \in \calA\} \ge 1-\alpha.
$$
$\bar{K}$ tends to be a slight overestimate of $K^*$. Additionally, the leftmost number $\underline{K}$ could also be chosen if the primary goal is to control the FWER, as
$$
\operatorname{FWER} = \Pr\{\underline{K} > K^*\}  \le  \Pr\{K^* \not\in \calA\} \le \alpha.
$$

Another suggestion is the post-hoc strategy, which incorporates the data-driven OPTICS with domain knowledge. That is, pick the member in $\calA$ with the most compelling scientific or industrial interpretation. For instance, in time-series change-point detection, if certain specific positions are known to be true changes, it is advised to adopt the member in $\calA$ with which these change-positions can be successfully detected. This strategy allows us to make informative and interpretable decisions, as well as to precisely estimate change-positions.

{ 
We emphasize that OPTICS returns a set of change-point numbers that are no worse than any other number of change-points in the criterion $\calC$ defined in (2.3). If OPTICS were to return an empty set, it would imply that for any given number of change-points, there exists another number of change-points that significantly outperforms the candidate, leading to a contradiction. Therefore, OPTICS always returns at least one change-point number.
}

{
\subsubsection{Multiple-splitting OPTICS for finite-sample stability}

In practice, the resulting confidence set may be sensitive to the particular split used in the OPTICS procedure, especially when the sample size is moderate or the signal is weak. To enhance finite-sample stability, we can apply OPTICS with multiple order-preserving splits and then combine the resulting evidence.

To be specific, let $L\ge 2$ be a fixed integer, and suppose for simplicity that $n=Lm$ for some integer $m$. For each $r=1,\ldots,L$, define the $r$-th order-preserving subsample by
$$
\calZ^{(r)}
=
\{\bz_{r+Lj}: j=0,\ldots,m-1\}.
$$
Under independent observations, each $\calZ^{(r)}$ remains an independent sample and preserves the original temporal order. We further split each subsample $\calZ^{(r)}$ into an ``odd sample'' $\calZ_O^{(r)}$ and an ``even sample'' $\calZ_E^{(r)}$.

For each split $r$, we apply the ordinary OPTICS to $(\calZ_O^{(r)},\calZ_E^{(r)})$. Specifically, for each candidate $K\in\calM$, we use $\calZ_O^{(r)}$ to fit the $K$-change-point model and $\calZ_E^{(r)}$ to evaluate its out-of-sample performance. Let $\bar{\bs}_{K,i}^{(r,O)}$ denote the analogue of $\bar{\bs}_{K,i}^{O}$ constructed from $\calZ_O^{(r)}$, and let $\bs_i^{(r,E)}$ denote the score vector from $\calZ_E^{(r)}$. For $J\in\calM\backslash\{K\}$, define
$$
\hat{\delta}_{K,J}^{(r)}
=
\frac{1}{n_r}
\sum_{i=1}^{n_r}
\left(
\bigl\|\bs_i^{(r,E)}-\bar{\bs}_{K,i}^{(r,O)}\bigr\|_2^2
-
\bigl\|\bs_i^{(r,E)}-\bar{\bs}_{J,i}^{(r,O)}\bigr\|_2^2
\right),
$$
where $n_r:=|\calZ_E^{(r)}|$. Based on $\hat{\delta}_{K,J}^{(r)}$, we compute the split-specific OPTICS test statistic and the corresponding \(p\)-value \(\hat p_K^{(r)}\) exactly as in the original procedure.

To aggregate the evidence across different order-preserving splits, we adopt the Cauchy combination method \citep{liu2020cauchy}. For each $K\in\calM$, define
$$
T_K
=
\sum_{r=1}^{L}\omega_r
\tan\!\left\{\bigl(0.5-\hat p_K^{(r)}\bigr)\pi\right\},
$$
where the weights $\omega_r$ are nonnegative and satisfy $\sum_{r=1}^{L}\omega_r=1$. The corresponding combined $p$-value is
$$
\hat p_K^{\mathrm{MS}}
=
\frac{1}{2}
-
\frac{1}{\pi}\arctan\!\bigl(T_K\bigr).
$$
We then define the multiple-splitting OPTICS (MS-OPTICS) confidence set by
$$
\calA_{\mathrm{MS}}
=
\{K\in\calM:\hat p_K^{\mathrm{MS}}>\alpha\}.
$$

The multiple-splitting version has the same interpretation as the original OPTICS, but is typically less sensitive to the choice of a particular sample split (See Subsection \ref{app:simu:ms-optics} in Supplementary Material for comparison). It therefore provides a simple and practical refinement when greater numerical stability is desired in moderate samples. In our view, the original single-splitting OPTICS remains the default choice because of its simpler presentation and lower computational cost, whereas the multiple-splitting version is best viewed as a practical enhancement.
}

\section{Theory of OPTICS}
\label{sec:3}
In this section, we systematically study the theoretical properties of OPTICS. We first prove that the Gaussian multiplier bootstrap procedure indeed produces valid $p$-values under the desired null space. The second subsection discusses the coverage rate of OPTICS. The theoretical rate of cardinality of OPTICS, which corresponds to the power of test, is provided in the third subsection. 

\subsection{Validity of the bootstrap procedure}
\label{sec3.1}
To study the theoretical guarantee of the Gaussian multiplier bootstrap procedure in our problem setting, we first impose the following  technical assumptions. 
\begin{condition}[Tails and moments] 
	\label{con1}
	For $i=1,\ldots, 2n$, suppose there exist some positive constants $M_1$ and $M_2$, such that 
	\begin{itemize}
		\item [(i)] for any constant vector $\mathbf{b}$, $\| \mathbf{b}^\top \bs_i \|_{\psi_1} \le M_1 \|\mathbf{b} \|_2$;
		\item [(ii)] for any $j = 1,\ldots, d$, we have $\frac{1}{n} \sum_{i=1}^n \E\left[|s_{ij}|^{3} \right] \le M_1$ and $\frac{1}{n} \sum_{i=1}^n \E\left[|s_{ij}|^{4} \right] \le M_1^2$,  where $s_{ij}$ is the $j$th element in $\bs_i$;
		\item[(iii)]$ \lambda_{\min} (\Var[\bs_i])\ge M_2$ and $ \lambda_{\max} (\E[\bs_i \bs_i^\top])\le M_1$, where $\Var(\bs_i)$ is the covariance matrix of $\bs_i$.
	\end{itemize}
\end{condition}

\begin{condition}[Distance between change-positions]
	\label{con2}
	For any two distinct change positions $\tau_k$ and $\tau_{k^\prime}$ in $\calT_K$,  $|\tau_k - \tau_{k^\prime}| {\gtrsim}  n/\log(n)$.
\end{condition}

Condition \ref{con1} provides a sub-exponential tail bound for the data and the moment conditions. Similar conditions can be found in other change-point literature; see \cite{liu2020unified} and \cite{yu2021finite} for instances. Condition \ref{con2} requires a sufficient distance between any two change-points in $\mathcal{T}_K$, since distinguishing two change-points becomes challenging if they are too close. This condition is fairly mild when the candidate model $K \ll n$ \citep{chen2021data}. This condition is also imposed in several prominent works, such as Wild Binary Segmentation (WBS); see Assumption 3.2 in \cite{fryzlewicz2014wild}, and more recently, the Narrowest-Over-Threshold (NOT) method; see Theorem 1 in \cite{baranowski2019narrowest}.

\begin{theorem}\label{thm1}
Suppose Model \eqref{sec1:e1} holds. Let $\mathcal F_O$ denote the $\sigma$-field generated by the observed sample used to construct $\{\bar{\bs}_{K,i}^O:K\in\calM,\ 1\le i\le n\}$. For $J\in\calM\backslash\{K\}$, define $\delta_{K,J}= \frac{1}{n}\sum_{i=1}^n \E\!\left(\xi_{K,J}^{(i)}\mid \mathcal F_O\right)$, and $\sigma^2_{K,J} = \frac{1}{n}\sum_{i=1}^n \E\left((\xi_{K,J}^{(i)})^2\mid \mathcal F_O\right)$. Assume $K_{\max}\asymp \log(n)$ and Conditions \ref{con1} and \ref{con2} hold. In addition, assume there exists a constant $c_0>0$ such that $\min_{J\in\calM\backslash\{K\}} \sigma^2_{K,J}\ge c_0$ with probability tending to one. Then,
\begin{itemize}
    \item[(1)] If $\max_{J \ne K}\delta_{K,J} /\sigma_{K,J}\le x_n (n \log (n)) ^{-1/2}$ for some $x_n = o(1)$, then {for $n \to \infty$},
    $$
    \Pr\left\{H_{0,K} \text{ is not rejected at level } \alpha \right\} \ge 1- \alpha + o(1).
    $$
    
    \item[(2)] If $\alpha \ge n^{-1}$ and $\max_{J \ne K} \delta_{K,J}/ \sigma_{K,J} \ge c n^{-1/2}\log(n)$ for a sufficiently large constant $c>0$, then {for $n \to \infty$},
    $$
    \Pr\left\{H_{0,K} \text{ is not rejected at level } \alpha \right\} = o(1).
    $$
\end{itemize}
\end{theorem}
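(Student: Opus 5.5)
Conditionally on $\mathcal F_O$ the summands $\xi_{K,J}^{(i)}$, $i=1,\ldots,n$, are independent, with $|\calM|-1\lesssim\log n$ coordinates. The plan is therefore to apply the high-dimensional Gaussian approximation and multiplier-bootstrap theory of \cite{chernozhukov2013gaussian, chernozhukov2017central} conditionally on $\mathcal F_O$, and then integrate out $\mathcal F_O$.

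\textbf{Step 1: conditional moment bounds.} Write $\xi_{K,J}^{(i)}=\|\bar{\bs}_{J,i}^O\|_2^2-\|\bar{\bs}_{K,i}^O\|_2^2-2(\bs_i^E)^\top(\bar{\bs}_{K,i}^O-\bar{\bs}_{J,i}^O)$. Under Condition~\ref{con2}, every segment of $\calT_K$ contains $\gtrsim n/\log n$ odd points. Condition~\ref{con1}(i) and a Bernstein inequality for sub-exponential averages, with a union bound over $O(K_{\max}^2)$ segments, give $\max_{K,i}\|\bar{\bs}_{K,i}^O\|_2=O_p(1)$. Hence, given $\mathcal F_O$, each $\xi_{K,J}^{(i)}$ is a sub-exponential variable in $\bs_i^E$ with $\psi_1$-norm bounded by a constant $B_n=O_p(1)$. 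Condition~\ref{con1}(ii) then yields conditional third and fourth moment bounds uniformly in $J$. Together with $\sigma^2_{K,J}\ge c_0$, the normalized summands $\xi_{K,J}^{(i)}/\sigma_{K,J}$ satisfy conditions (E.1)/(M.1) of Chernozhukov et al.\ on an event of probability $1-o(1)$.

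\textbf{Step 2: Gaussian approximation and bootstrap.} Decompose
\[
T_K=\max_J\Big\{\tfrac{1}{\sqrt n}\sum_i(\xi_{K,J}^{(i)}-\E[\xi_{K,J}^{(i)}\mid\mathcal F_O])/\hat\sigma_{K,J}+\sqrt n\,\delta_{K,J}/\hat\sigma_{K,J}\Big\}.
\]
First show $\max_J|\hat\sigma_{K,J}/\sigma_{K,J}-1|=O_p(\sqrt{\log\log n/n})$ via a conditional Bernstein inequality on $(\xi^{(i)})^2$. Then the centered part satisfies the high-dimensional CLT with error $(B_n^2\log^7(pn)/n)^{1/6}=o(1)$, where $p\asymp\log n$. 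The bootstrap statistic $T^\sharp_{K,b}$ is conditionally Gaussian with covariance $\hat\Sigma^{\sharp}$, the matrix of non-centered second moments normalized by $\hat\sigma$. By Gaussian comparison, its quantiles dominate those of the Gaussian limit of the centered part, whose covariance is $\hat\Sigma^{\sharp}$ minus the positive semidefinite matrix of mean products. This domination is the delicate step.

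\textbf{Step 3: domination argument.} I plan to handle it by writing the non-centered Gaussian vector as the centered one plus an independent Gaussian with covariance $\frac1n\sum_i m_i m_i^\top$, where $m_i$ is the conditional mean vector. The maximum of a sum of independent centered Gaussians is stochastically larger in the upper tail at levels $\alpha<1/2$. This needs an anti-concentration plus symmetry argument, and it is where I expect the main obstacle. A fallback is to require the quantile inequality only up to $o(1)$, via the Gaussian comparison inequality after noting that the max-entry covariance difference is $\Delta_n\to0$ under $H_0$ approximately.

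\textbf{Step 4: part (1).} Under the null drift condition, $\sqrt n\,\delta_{K,J}/\hat\sigma_{K,J}\le x_n/\sqrt{\log n}+o_p(1)$. By Nazarov anti-concentration, a shift of size $o(1/\sqrt{\log p})$ changes the probability by $o(1)$ since $\sqrt{\log p}\cdot x_n/\sqrt{\log n}=o(1)$. Hence $\Pr\{T_K>c^\sharp_{1-\alpha}\}\le\alpha+o(1)$. Taking expectation over $\mathcal F_O$ together with $B\to\infty$ (Monte Carlo error $o(1)$) gives the claim.

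\textbf{Step 5: part (2).} Here $T_K\ge\sqrt n\,\delta_{K,J^*}/\hat\sigma_{K,J^*}-|\text{noise}|\ge c\log n(1+o_p(1))-O_p(\sqrt{\log\log n})$. The bootstrap quantile obeys $c^\sharp_{1-\alpha}\le C\sqrt{\log p}+C\sqrt{2\log(1/\alpha)}\le C'\sqrt{\log n}$ for $\alpha\ge n^{-1}$, by Gaussian maxima concentration and the bound $\max_J\hat\Sigma^\sharp_{JJ}=1$. Thus for large $c$ the null is rejected with probability $1-o(1)$.
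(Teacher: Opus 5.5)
Your Steps 1, 2 and 5 are essentially sound; Step 5 is the paper's own argument for part (2). Part (1), however, rests entirely on Step 3, and neither route you offer there works as stated.

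Write $m_{i,J}:=\E(\xi_{K,J}^{(i)}\mid\mathcal F_O)$ and $\bar m:=\frac1n\sum_i m_i$. Let $D^{(K)}$ be the diagonal matrix of the $\sigma_{K,J}$. Let $H^{(K)}$ and $F^{(K)}$ be the $D^{(K)}$-normalized non-centered and centered conditional covariance matrices. Then, as you correctly say,
$H^{(K)}-F^{(K)}=M^{(K)}:=(D^{(K)})^{-1}\bigl(\frac1n\sum_i m_im_i^\top\bigr)(D^{(K)})^{-1}$.

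Your fallback needs $\|M^{(K)}\|_\infty\to0$, and this is false. We have $M^{(K)}_{JJ}=\frac1n\sum_i m_{i,J}^2/\sigma_{K,J}^2\ge\delta_{K,J}^2/\sigma_{K,J}^2$, whereas the approximate null controls only the averages $\delta_{K,J}$, and only from above. For $K=K^*$ and an underfitting $J$, $\delta_{K^*,J}/\sigma_{K^*,J}$ can be negative and of constant order. Even when $\delta_{K,J}\approx0$, the $m_{i,J}$ can be of order one with signs cancelling across segments. Strongly negative coordinates could be discarded, since they do not affect $T_K$ and dropping coordinates only lowers the bootstrap maximum, but the cancellation case remains.

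The paper's proof is in effect your fallback. It asserts $H^{(K)}-F^{(K)}=\mu^{(K)}(\mu^{(K)})^\top$ with $\mu^{(K)}_J=\delta_{K,J}/\sigma_{K,J}$ and $\|\mu^{(K)}\|_\infty=o(1)$. This replaces $\frac1n\sum_i m_im_i^\top$ by $\bar m\bar m^\top$ and ignores the one-sidedness, so the paper does not supply the missing step either.

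Your main route is the right one, but anti-concentration will not deliver it. Nazarov's inequality handles small perturbations, whereas here the covariance increase is of order one, and for general laws adding independent symmetric noise can lower upper quantiles. The missing idea is Ehrhard's inequality.

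1. Let $U\sim N(0,F^{(K)})$ and $V\sim N(0,M^{(K)})$ be independent, so $U+V\sim N(0,H^{(K)})$, and let $C_t=\{x:\max_jx_j\le t\}$.
2. Since $C_t$ is convex, $C_t-(\lambda v+(1-\lambda)w)=\lambda(C_t-v)+(1-\lambda)(C_t-w)$. Ehrhard's inequality, applied to $U=(F^{(K)})^{1/2}Z$ with $Z$ standard Gaussian, then shows that $\psi(v):=\Phi^{-1}\{\Pr(U\in C_t-v)\}$ is concave.
3. Suppose $c:=\psi(0)\ge0$, i.e.\ $\Pr(\max_jU_j\le t)\ge1/2$. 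Then $\psi(v)+\psi(-v)\le2c$.
4. Monotonicity of $\Phi$, together with the fact that $d\mapsto\Phi(c+d)+\Phi(c-d)$ is nonincreasing on $[0,\infty)$ when $c\ge0$, gives $\Phi(\psi(v))+\Phi(\psi(-v))\le2\Phi(c)$.
5. Averaging over $V\stackrel{d}{=}-V$ yields $\Pr\{\max_j(U_j+V_j)\le t\}\le\Pr\{\max_jU_j\le t\}$.

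Hence the $(1-\alpha)$-quantile of $\max_j(U_j+V_j)$ dominates that of $\max_jU_j$ for every $\alpha<1/2$. The restriction is genuine: in one dimension, inflating the variance lowers quantiles below the median. Combined with Gaussian comparison between $\widehat H^{(K)}$ and $H^{(K)}$, and with Nazarov's inequality for $N(0,F^{(K)})$, this closes part (1) for $\alpha<1/2$.

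That last step, and the CLT conditions in your Step 1, need $\min_JF^{(K)}_{JJ}$ bounded below. This is not the same as $\sigma^2_{K,J}\ge c_0$, since $\sigma^2_{K,J}$ is a non-centered second moment. It does follow, because $|m_{i,J}|\lesssim\|\bar{\bs}^O_{K,i}-\bar{\bs}^O_{J,i}\|_2\lesssim\Var(\xi_{K,J}^{(i)}\mid\mathcal F_O)^{1/2}$ by Condition~\ref{con1}. So the conditional variance is a fixed fraction of the second moment.
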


Theorem \ref{thm1} (1) depicts an ``approximate null" space, i.e, $
\max_{J \ne K}\delta_{K,J} /\sigma_{K,J}\le x_n (n \log (n)) ^{-1/2}$, where we do not reject $H_0$. It implies that the p-value obtained from the bootstrap is preserves the type I error under this approximate null. Meanwhile, (2) claims that OPTICS would successfully rule out those candidate models with inferior predictive power. 

A direct implication of Theorem \ref{thm1} is that when COPSS in \cite{zou2020consistent} is consistent, OPTICS also covers the true number of change-points $K^*$ with confidence level $1-\alpha$ asymptotically. To intuitively see this, consider the special scenario when $K^*$ is indeed the minimizer among all the candidate models in $\calM$. In this case, COPSS is consistent, since it selects the minimizer to be the estimated number of change-points. On the other hand, we have $\max_{K \ne K^*}\delta_{K^*,K} /\sigma_{K^*,J}\le 0$, thus Theorem \ref{thm1} (1) indicates that $\Pr\{K^* \in \calA\} \ge 1 - \alpha+o(1)$. Nevertheless, as demonstrated in the next section, OPTICS preserved the confidence (coverage rate) under conditions that are less stringent than those for COPSS. 

%\begin{corollary}
%	\label{cor1}
%	If the COPSS is consistent, then the OPTICS covers the true number of change-points, i.e., $\Pr\{K^* \in \calA\} \ge 1 - \alpha$ for sufficient large $n$.
%\end{corollary}

\subsection{Coverage rate of OPTICS}
\label{sec3.2}

In this section, we systematically study the coverage rate of OPTICS. 

Let $\underline{\lambda} = \min_{1 \le k \le K^*} (\tau_{k+1}^* - \tau_{k}^*)$, $\bar{\lambda} = \max_{1 \le k \le K^*} (\tau_{k+1}^* - \tau_{k}^*)$ be the minimum and maximum of distances between adjacent true change-positions respectively. For $k=1,\ldots, K^*$, denote $\Delta_k := \|\bmu_{k+1} -\bmu_{k} \|_2$ as the jump size of the $k$th change in model \eqref{model:score}, and $\Delta_{(k)}$ as the corresponding $k$th order statistic. Without loss of generality, assume the true number of change-points $K^*$ belongs to the candidate set $\calM$, i.e., $K_{\max} \ge K^*$. Note that $\calM$ can always be chosen conservatively to include $K^*$. Let $\calM_l =\{K \in \calM: K < K^*\}$ be the lack-of-fit set, and $\calM_o = \{K \in \calM: K > K^*\}$ be the over-fit set. Hence, the candidate set $\calM$ is naturally partitioned into $\calM = \calM_o \cup \calM_l \cup \{K^*\}$. We impose the following conditions before introducing the theoretical results for coverage rate of OPTICS. 

\begin{condition}[Number of change-points]
	\label{con3}
	\item	[(i)] $K^* = o(\underline{\lambda})$ and $K^*(\log(K^*\vee e))^2 = o(\log\log(\bar{\lambda}))$ for Euler's number $e$;
	\item [(ii)] $(K^*_{\max} \log K_{\max}^*)^{1/2} = o(\log\log\bar{\lambda})$.
\end{condition}

\begin{condition}[Accuracy of estimation]
	\label{con4}
	\item[(i)](Over-fit) For any $K\in \calM_o$, denote the corresponding estimated change-position set as $\calT_{oK}=\{\tau_{o1}^K,\ldots, \tau_{oK}^K\}$. There exists a subset $\{\tau_{ok_s}^K, s = 1, \ldots, K^*\} \subset \calT_{oK}$, such that
	$$
	\Pr\left(\forall \tau_k^*\in\calT^*, \ \left|\tau^K_{ ok_s}-\tau_k^*\right| \leq b_{n}\right)\rightarrow 1,
	$$
	where $b_n>0$ satisfies $K^* \log\log(b_n \vee e) = o(\log\log(\bar{\lambda}))$ and $\sum_{k=1}^{K^*} b_n \Delta_k^2 = o(M_1\log\log(\bar{\lambda}))$.
	\item[(ii)] (Lack-of-fit) For any $K \in \calM_l$, denote the corresponding estimated change-position set as $\calT_{lK}=\{\tau_{l1}^K,\ldots, \tau_{lK}^K\}$. There exists a subset of true change-positions $\calI^{*}_{lK}\subset\calT^*$, such that for any $\tau_k^*\in\calI^*_{lK}$, no estimated change-point lies within interval $\{\tau_k^* - \underline{\lambda}/2+1, \ldots, \tau_k^* + \underline{\lambda}/2\}$. Furthermore, denote $\bar{\boldsymbol{\mu}}_{\tau_{lk}^K, \tau_{l(k+1)}^K}$ to be the average of $\{\boldsymbol\mu_i, \ i=\tau_{lk}^K+1,\tau_{lk}^K+2, \ldots, \tau_{l(k+1)}^K\}$. Then for some constant $M_3>0$,
	\begin{equation}
		\label{con:LB2}
		\Pr\left(\forall \tau_k^* \in \mathcal{I}_{lK}^*, \sum_{i=\tau_k^*-\frac{\underline{\lambda}}{4}+1}^{\tau_k^*+\frac{\underline{\lambda}}{4}}\|\boldsymbol{\mu}_i-\bar{\boldsymbol{\mu}}_{K, i}\|_2^2 \geq M_3 \underline{\lambda} \Delta_k^2\right) \rightarrow 1,
	\end{equation}
	with $\bar{\boldsymbol{\mu}}_{K, i}:=\sum_{k=0}^K \mathrm{1}_{\left\{\tau^K_{lk}+1 \leq i \leq \tau_{l(k+1)}^K\right\}} \bar{\boldsymbol{\mu}}_{\tau_{lk}^K, \tau_{l(k+1)}^K}$.
\end{condition}

\begin{condition}[Minimum Signal] 
	\label{con5}
	Assume the minimum jump size $\Delta_{(1)}$ satisfies
	$$
	\frac{\underline{\lambda} \Delta_{(1)}^2}{K^* M_1(\log \bar{\lambda})^2} \rightarrow \infty,
	$$
	where $M_1$ is defined in Condition \ref{con1}.
\end{condition}

Condition \ref{con3} imposes some standard assumptions on the number of change-points. Condition \ref{con4} (i) states that under the over-fitting setting where $K>K^*$, for each true change $\tau_k^*\in\calT^*$, there must exist an estimated change-point lying within the $c_n$-neighborhood of $\tau^*_k$ asymptotically. This estimation accuracy ensures the reliability of the order-preserved cross-validation criterion. Condition \ref{con4} (ii) claims that for an under-fitted model with $K<K^*$, some undetected true change-positions have to be isolated from all the estimated ones by length $\underline{\lambda}/2$, and the estimation errors of mean scores due to the undetected change-positions are not neglected. Condition \ref{con5} bounds below the minimum jump size of mean scores across true adjacent intervals. Conditions \ref{con4} and \ref{con5} are imposed to guarantee the consistent selection of change-points; see the discussion in \cite{pein2021cross}.

\begin{theorem}
	\label{thm2}
	Suppose Conditions \ref{con1}--\ref{con5} hold. Assume that the maximum discrepancy satisfies $\min_{K \ne K^*} \max_{i=1,\ldots,n}\|\bar{\bs}^O_{K^*,i}- \bar{\bs}_{K,i}^O \|^2_2 \gtrsim \Delta_{(K^*)}^2$. 
	If the maximum jump size $\Delta_{(K^*)}$ satisfies
	\begin{equation}\label{true.signal}
	    	\Delta_{(K^*)}^2 \gtrsim x_n^{-2} (\log\log\bar{\lambda})^2 \frac{\log^2(n)}{n},
	\end{equation}
	then $\Pr\left\{K^* \in \calA \right\} \ge 1- \alpha + o(1)$.
\end{theorem}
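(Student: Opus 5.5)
The plan is to reduce Theorem \ref{thm2} to part (1) of Theorem \ref{thm1} with $K=K^*$. Since $\Pr\{K^*\in\calA\}=\Pr\{H_{0,K^*}\text{ is not rejected}\}$, it suffices to show two things. First, $\min_{J\ne K^*}\sigma^2_{K^*,J}\ge c_0$ with probability tending to one. Second, the approximate-null condition
$$\max_{J\ne K^*}\delta_{K^*,J}/\sigma_{K^*,J}\le x_n(n\log n)^{-1/2}$$
holds on an event of probability $1-o(1)$. Conditioning on $\mathcal F_O$, Theorem \ref{thm1}(1) then gives the result. I will work on the intersection of the high-probability events in Condition \ref{con4}(i)--(ii) and the concentration events for odd-sample segment averages. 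The latter follow from Condition \ref{con1}(i) (sub-exponential tails), using a union bound over the $K_{\max}\asymp\log n$ candidates and the segments, with Condition \ref{con2} keeping segment lengths $\gtrsim n/\log n$.

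\textbf{Step 1: decomposition of $\xi$.} Write $\bs_i^E=\bmu_i+\beps_i^E$ and expand
$$\xi_{K^*,J}^{(i)}=\|\bmu_i-\bar{\bs}^O_{K^*,i}\|^2-\|\bmu_i-\bar{\bs}^O_{J,i}\|^2+2\langle\beps_i^E,\bar{\bs}^O_{J,i}-\bar{\bs}^O_{K^*,i}\rangle.$$
Given $\mathcal F_O$, the cross term has mean zero. Hence
$$\delta_{K^*,J}=\frac1n\sum_i\bigl(\|\bmu_i-\bar{\bs}^O_{K^*,i}\|^2-\|\bmu_i-\bar{\bs}^O_{J,i}\|^2\bigr),$$
and also $\sigma^2_{K^*,J}\ge \frac4n\sum_i (\bar{\bs}^O_{J,i}-\bar{\bs}^O_{K^*,i})^\top\bSigma_i(\bar{\bs}^O_{J,i}-\bar{\bs}^O_{K^*,i})$. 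By Condition \ref{con1}(iii) this is at least $4M_2$ times the average squared discrepancy.

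\textbf{Step 2: lower bound on $\sigma_{K^*,J}$.} The maximal-discrepancy assumption gives some index $i$ with $\|\bar{\bs}^O_{K^*,i}-\bar{\bs}^O_{J,i}\|^2\gtrsim\Delta_{(K^*)}^2$. Both fitted sequences are piecewise constant on segments of length $\gtrsim n/\log n$ (Condition \ref{con2}), so this discrepancy persists over a block of that length. The average discrepancy is therefore $\gtrsim\Delta^2_{(K^*)}/\log n$, which yields $\sigma_{K^*,J}\gtrsim \Delta_{(K^*)}/\sqrt{\log n}$ and also supplies the constant $c_0$ required by Theorem \ref{thm1}.

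\textbf{Step 3: upper bound on $\delta_{K^*,J}$, split by $J\in\calM_o$ and $J\in\calM_l$.} For $J\in\calM_l$, Condition \ref{con4}(ii) together with Condition \ref{con5} shows the $J$-fit incurs bias at least $M_3\underline\lambda\Delta_k^2/n$. This dominates the $K^*$-fit error, so $\delta_{K^*,J}<0$. For $J\in\calM_o$, Condition \ref{con4}(i) places estimated points within $b_n$ of each true change. Both fits then have bias of order $\sum_k b_n\Delta_k^2/n=o(M_1\log\log\bar\lambda/n)$ plus stochastic errors from the odd-sample segment means. Using a law-of-iterated-logarithm type maximal inequality over the segments (this is where Condition \ref{con3} and the $\log\log\bar\lambda$ terms enter), I expect
$$\delta_{K^*,J}\lesssim \frac{\log\log\bar\lambda}{n}.$$
The required condition then reads
$$\frac{\log\log\bar\lambda}{n}\cdot\frac{\sqrt{\log n}}{\Delta_{(K^*)}}\le \frac{x_n}{\sqrt{n\log n}},$$
that is, $\Delta_{(K^*)}\gtrsim x_n^{-1}\log\log\bar\lambda\,\log n/\sqrt n$, which is exactly \eqref{true.signal}.

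\textbf{Main obstacle.} The hardest part is Step 3 for overfitted $J$. There the extra spurious change-points split true segments, and the odd-sample means on short spurious pieces are noisy. I need $\delta_{K^*,J}$ to be of order $\log\log\bar\lambda/n$ uniformly over $J\in\calM_o$ rather than $\sqrt{\log n/n}$. My first attempt is to write the noise contribution as a difference of variance terms $\sum_{\text{segments}}\|\bar\beps^O\|^2 |\text{seg}|/n$ and bound each via a Darling--Erd\H{o}s/LIL maximal inequality for partial sums within true segments. Condition \ref{con3}(ii) should absorb the union over $K_{\max}$.
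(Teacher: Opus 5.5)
Your proposal is correct in outline and has the paper's skeleton. You reduce to Theorem \ref{thm1}(1) at $K=K^*$, get $\sigma_{K^*,J}\gtrsim\Delta_{(K^*)}/\sqrt{\log n}$ from the maximal-discrepancy assumption and blocks of length $\gtrsim n/\log n$, show $n\delta_{K^*,J}\lesssim M_1\log\log\bar\lambda$, and solve for $\Delta_{(K^*)}$.

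\textbf{Where you differ: Step 3.} The paper writes $n\delta_{K,K^*}=A_K-B_K-C_K+2D_K$ (even-sample signal SSE, odd- and even-sample noise SSE, cross term). It treats $\calM_l$ and $\calM_o$ separately using Lemmas 15, 18 and 19 of \cite{pein2021cross}. For over-fitted $K$ it ultimately discards $\calS_{\bs^E}(\calT_K)-\calS_{\bs^E}(\calT_K\cup\calT^*)$ and $\calS_{\beps^O}(\calT^*)-\calS_{\beps^O}(\calT_K\cup\calT^*)$ as nonnegative. You work from the exact conditional-mean identity of your Step 1, and it gives more than you use. Dropping the nonnegative subtracted term,
\[
\delta_{K^*,J}\le\frac1n\sum_{i=1}^n\bigl\|\bmu_i-\bar{\bs}^O_{K^*,i}\bigr\|_2^2\quad\text{for every }J\ne K^*,\qquad \bmu_i=\E(\bs_i^E).
\]
Consequences of this bound:
\begin{itemize}
\item The lack-of-fit case needs no separate argument, and the bound is automatically uniform in $J$.
\item Your ``main obstacle'' disappears: noise on spurious pieces of an over-fitted $\calT_J$ can only decrease $\delta_{K^*,J}$.
\item What remains is the out-of-sample error of the $K^*$-fit alone. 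It splits into a bias $O(\sum_k b_n\Delta_k^2)$ and $\sum_{\text{seg}}\|\sum_{i\in\text{seg}}\beps_i^O\|_2^2/|\text{seg}|$ over the $K^*+1$ estimated segments.
\item Those endpoints depend on $\beps^O$, so take a maximum over endpoints in $b_n$-windows around the true change-points. Doob's maximal inequality suffices (no Darling--Erd\H{o}s or LIL), together with $b_n=o(\underline\lambda)$, which follows from Conditions \ref{con4}(i) and \ref{con5}. This gives $O_P(M_1K^*)=o_P(M_1\log\log\bar\lambda)$ by Condition \ref{con3}(i).
\item It does require $b_n$-localization of $\calT_{K^*}$ itself, whereas Condition \ref{con4}(i) is stated only for $K\in\calM_o$. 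The paper uses the same thing tacitly in Lemma \ref{tech_lemma2}(iii).
\end{itemize}
For coverage alone, your route is more elementary and self-contained. The paper's heavier route pays off later: \eqref{thm2:e1}--\eqref{thm2:e2} give lower bounds on $\delta_{K,K^*}$ for $K\in\calM_l$ and $K\in\calM_o$, which the proof of Theorem \ref{thm3} reuses to exclude $\calB_{1n}\cup\calB_{2n}$. Your one-sided bound cannot serve that purpose.

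\textbf{Two smaller points in Step 2.}

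First, the block-length argument needs nesting. The difference $\bar{\bs}^O_{K^*,i}-\bar{\bs}^O_{J,i}$ is constant only on cells of the common refinement of $\calT_{K^*}$ and $\calT_J$. Condition \ref{con2} controls spacings only within each $\calT_K$, so the cell carrying the maximal discrepancy may be short. The paper closes this by assuming the segmentations are nested, as Lemmas \ref{lemma2}--\ref{lemma3} behind Theorem \ref{thm1} already do. You need that, or Condition \ref{con2} applied to $\calT_{K^*}\cup\calT_J$.

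Second, $\sigma_{K^*,J}\gtrsim\Delta_{(K^*)}/\sqrt{\log n}$ does not supply the constant $c_0$ of Theorem \ref{thm1}, because \eqref{true.signal} allows $\Delta^2_{(K^*)}/\log n\to0$. The paper never checks $c_0$ either. A clean repair has two parts. Note that $T_K$, $T^\sharp_{K,b}$ and $\delta_{K,J}/\sigma_{K,J}$ are invariant under rescaling each $\xi_{K,J}$. Then note that $c_0$ is used in Lemma \ref{lemma3} only to obtain $\widehat D^{(K)}_{JJ}/D^{(K)}_{JJ}\to1$. That ratio bound already follows from the bound there normalized by $D^{(K)}_{JJ}D^{(K)}_{J'J'}$.
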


% \begin{theorem}
% 	\label{thm2}
% 	{ Suppose Model \eqref{sec1:e1} holds}. Under Conditions \ref{con1}- \ref{con5}, $\min_{K \ne K^*} \max_{i=1,\ldots,n}\|\bar{\bs}^O_{K^*,i}- \bar{\bs}_{K,i}^O \|^2_2 \gtrsim \Delta_{(K^*)}^2$ with { asymptotic probability $1$}, and $\Delta^2_{(K^*)} \gtrsim x_n^{-1} \log\log(\overline{\lambda}) \sqrt{\log^{3}(n)/n}$
% 	% together with
% 	for $x_n = o(1)$, we have { for $n \to \infty$}
% 	\begin{eqnarray}\label{true.signal}
% 		\max_{K \ne K^*}\frac{\delta_{K^*,K} }{\sigma_{K^*,K}}\le x_n \frac{1}{\sqrt{n \log  n}},
% 	\end{eqnarray}
% 	and hence $ \Pr\left\{K^* \in \calA \right\} \ge 1- \alpha + o(1)$.
% \end{theorem}

The inequality \eqref{true.signal} in Theorem \ref{thm2} states that the true number of change-points $K^*$ indeed lies within the ``approximate null" space. Therefore, incorporating Theorem \ref{thm1}, the coverage rate of $\calA$ can be guaranteed. The conditions for Theorem \ref{thm2} relax those in \cite{zou2020consistent} and \cite{pein2021cross}. The latter two papers both additionally impose a divergent lower bound for the ``over-fit" effect; to be specific, they require for some $c_n\to\infty$, 
\begin{eqnarray}\label{overfit.effect}
	S_{\boldsymbol\epsilon^O}(\calT^*) - \max_{K \in \calM_o}S_{\boldsymbol\epsilon^O}(\calT^* \cup \calT_K)\ \ge c_n,
\end{eqnarray}
where $\boldsymbol\epsilon^O=\{\boldsymbol\epsilon_{2i-1}, i=1,\ldots, n\}$ represents the collection of individual errors from the given odd sample, with $\boldsymbol\epsilon_i$ defined in model \eqref{model:score}. 
{We will explore the explicit rate of $c_n$ in Section \ref{sec:cardinality} to enhance the power of the method. As shown in Section \ref{sec:simulation}, the divergence of $c_n$ in (\ref{overfit.effect}) can be fairly stringent in many practical scenarios, potentially leading to the failure of consistency for the respective estimations.}
%\begin{remark}
%Technically, our statement is different with \cite{zou2020consistent} and \cite{pein2021cross} as we give a uniform bound for expectation while they give a probability bound for happenings of events. It is well known that convergence in probability does not imply convergence in moment, hence, we need to take more efforts to prove our results.
%\end{remark}
\subsection{Cardinality of OPTICS}\label{sec:cardinality}

Theorem \ref{thm2} guarantees that OPTICS $\calA$ covers the true number of change-points $K^*$ at the nominal confidence level asymptotically. However, it is always possible to select a sufficiently conservative $\calA$ that encompasses $K^*$, such as the trivial set $\calM$. In this instance, OPTICS is non-informative and has no power. Therefore, in this subsection, we delve into examining the power of OPTICS by analyzing its cardinality. We first define the following two sets to depict the under-fit and over-fit signal-to-noise ratios, respectively. 

\begin{definition}[Signal-to-noise ratio]
	\label{con6} 
	\item[(i)] In a lack-of-fit model $K\in\calM_l$, consider the undetected change-position set $\calI_{lK}^*$ in Condition \ref{con4}. For a sufficiently large $n$, define 
	$$\calB_{1n} := \left\{ K \in \calM_{l}: \sum_{\tau_k^* \in \calI_{lK}^*} \Delta^2_k \gtrsim M_1 \Delta_{(K^*)}^2\sqrt{n\log(n)}/\underline{\lambda}\right\},$$
	where $\Delta_{(K^*)}$ is the maximum jump size between adjacent changes, as defined in Section 3.2.
	\item[(ii)] In an over-fit model $K\in\calM_o$, for a sufficiently large $n$, define 
	{
    $$
    \begin{aligned}
        \calB_{2n} :=& \Big\{K \in \calM_{o}: \E\left\{\calS_{\bs^E}\left(\mathcal{T}_{K}\right)-\calS_{\bs^E}\left(\mathcal{T}_{K} \cup \mathcal{T}^*\right)\right\}  + \left\{\calS_{\boldsymbol\epsilon^O}\left(\mathcal{T}^*\right)-\calS_{\boldsymbol\epsilon^O}\left(\mathcal{T}_{K} \cup \mathcal{T}^*\right)\right\} \\
    &\quad\quad\quad\quad \gtrsim M_1\Delta_{(K^*)}^2 \sqrt{n \log(n)}\Big\}.
    \end{aligned}
    $$}
\end{definition}

Note that set $\calB_{1n}$ in Definition \ref{con6} (i) collects a subset of lack-of-fit models, where the jump sizes of undetected changes are bounded below. Set $\calB_{2n}$ consists of some over-fitted models whose in-sample over-fitting effects are sufficiently large. Based on $\calB_{1n}$ and $\calB_{2n}$, the following theorem provides the asymptotic upper bounds for the cardinality of OPTICS.

\begin{theorem}
	\label{thm3}
	Suppose Model \eqref{sec1:e1} holds. Under Conditions \ref{con1}--\ref{con5} and assuming the maximum discrepancy satisfies $\max_{i=1,\ldots, n}\| \bar{\bs}^O_{K,i}- \bar{\bs}^O_{J,i}\|_2^2 \lesssim \Delta_{(K^*)}^2$ for all pairs $K \ne J$ with asymptotic probability $1$, we have as $n \to \infty$:
	\begin{equation}
		\label{cardinality}
		\Pr\left\{|\calA| \le K_{\max} - |\calB_{1n}|   -|\calB_{2n}|  \right\} \ge 1 - o(1),
	\end{equation}
	and
	\begin{equation}
		\label{power}
		\Pr\left\{ K^* \in \calA, \ |\calA| \le K_{\max} - |\calB_{1n}|  -|\calB_{2n}|  \right\} \ge 1- \alpha+ o(1).
	\end{equation}
\end{theorem}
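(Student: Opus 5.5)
Both displays reduce to Theorem \ref{thm1}(2) together with a union bound over $\calB_{1n}\cup\calB_{2n}$. Theorem \ref{thm2} handles the coverage part.

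\textbf{Reduction of (\ref{power}) to (\ref{cardinality}).} By Theorem \ref{thm2}, $\Pr\{K^*\in\calA\}\ge 1-\alpha+o(1)$. Then
\[
\Pr\{K^*\in\calA,\ |\calA|\le K_{\max}-|\calB_{1n}|-|\calB_{2n}|\}\ge \Pr\{K^*\in\calA\}-\Pr\{|\calA|> K_{\max}-|\calB_{1n}|-|\calB_{2n}|\},
\]
so (\ref{power}) follows once (\ref{cardinality}) is established.

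\textbf{Reduction of (\ref{cardinality}) to a per-model statement.} Since $\calB_{1n}\subset\calM_l$ and $\calB_{2n}\subset\calM_o$ are disjoint subsets of $\calM$, it suffices to show
\[
\Pr\{\calA\cap(\calB_{1n}\cup\calB_{2n})=\emptyset\}\to 1.
\]
For each $K\in\calB_{1n}\cup\calB_{2n}$, I will apply Theorem \ref{thm1}(2) with $J=K^*$. This requires
\[
\delta_{K,K^*}/\sigma_{K,K^*}\ge c\,n^{-1/2}\log n
\]
with probability tending to one.

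\textbf{Bounding $\sigma_{K,K^*}$.} Write $\xi_{K,K^*}^{(i)}=\|\bar\bs^O_{K,i}-\bar\bs^O_{K^*,i}\|_2^2+2(\bs^E_i-\bar\bs^O_{K^*,i})^\top(\bar\bs^O_{K^*,i}-\bar\bs^O_{K,i})$. Condition \ref{con1}(iii) and the maximum-discrepancy assumption give $\sigma^2_{K,K^*}\lesssim M_1\Delta_{(K^*)}^2+\Delta_{(K^*)}^4$. So $\sigma_{K,K^*}\lesssim M_1^{1/2}\Delta_{(K^*)}$ up to constants, using that $\Delta_{(K^*)}$ is bounded under Condition \ref{con1}. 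It therefore suffices to show
\[
n\,\delta_{K,K^*}\gtrsim M_1\Delta_{(K^*)}\sqrt{n}\log n .
\]
The definitions of $\calB_{1n},\calB_{2n}$ supply thresholds of order $M_1\Delta_{(K^*)}^2\sqrt{n\log n}$. To reconcile the two, I will interpret them through the paper's convention that $\gtrsim$ absorbs slowly varying factors; if needed, I will use that the $\sigma$ bound can be sharpened to the empirical second moment.

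\textbf{Lower bound on $n\delta_{K,K^*}$ for $K\in\calB_{1n}$.} Expand $n\delta_{K,K^*}$ as
\[
\sum_i\|\bmu^E_i-\bar\bs^O_{K,i}\|_2^2-\sum_i\|\bmu^E_i-\bar\bs^O_{K^*,i}\|_2^2 .
\]
\emph{First sum.} Using $\bar\bs^O=\bar\bmu+\bar\beps$, Condition \ref{con4}(ii) gives at least $M_3\underline\lambda\sum_{\calI^*_{lK}}\Delta_k^2$ from the neighborhoods of the undetected changes. The noise averages contribute $O_p(K_{\max}\log n)$ by sub-exponential concentration over segments of length $\gtrsim n/\log n$ (Condition \ref{con2}).
\emph{Second sum.} As in the proof of Theorem \ref{thm2}, this is $O_p(\sum_k b_n\Delta_k^2+K^*\log\log\bar\lambda)$ by Conditions \ref{con4}(i) and \ref{con3}. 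Hence $n\delta_{K,K^*}\gtrsim \underline\lambda\sum_{\calI^*_{lK}}\Delta_k^2\gtrsim M_1\Delta_{(K^*)}^2\sqrt{n\log n}$ by the definition of $\calB_{1n}$, and Condition \ref{con5} makes this dominate the remainders.

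\textbf{Lower bound on $n\delta_{K,K^*}$ for $K\in\calB_{2n}$.} Here both fits are asymptotically unbiased for $\bmu$. I will decompose via the refinement $\calT_K\cup\calT^*$:
\[
n\delta_{K,K^*}=\E\{\calS_{\bs^E}(\calT_K)-\calS_{\bs^E}(\calT_K\cup\calT^*)\}+\{\calS_{\beps^O}(\calT^*)-\calS_{\beps^O}(\calT_K\cup\calT^*)\}+R_n .
\]
The first two terms reflect that the out-of-sample fit with $K$ pays the variance of spuriously split segments. The remainder $R_n$ collects bias terms from $b_n$-misplacement, controlled by Condition \ref{con4}(i), and cross terms. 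This is the exact quantity in the definition of $\calB_{2n}$, giving the required rate.

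\textbf{Conclusion.} Apply Theorem \ref{thm1}(2) to each such $K$ and take a union bound over at most $K_{\max}\asymp\log n$ models. This needs the $o(1)$ in Theorem \ref{thm1}(2) to be uniform with rate $o(1/\log n)$. I expect to obtain this from the Gaussian-approximation error bounds of \cite{chernozhukov2017central}, which are polynomial in $n$; alternatively, the events can be taken jointly, since all tests share the same $\max$ structure.

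\textbf{Main obstacle.} The over-fit case is the hardest step: deriving the decomposition of $\delta_{K,K^*}$ into the $\calB_{2n}$ quantity plus a negligible remainder. Out-of-sample odd/even dependence through $\calT_K$ must be handled by conditioning on $\mathcal F_O$. I would first try to reuse the segment-refinement identities of \cite{zou2020consistent}, which express $\calS(\calT)-\calS(\calT\cup\calT^*)$ as sums of squared differences of segment means.
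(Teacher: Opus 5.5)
Your skeleton is the paper's. You bound $\sigma_{K,K^*}\lesssim\Delta_{(K^*)}$ from the maximum-discrepancy assumption, lower-bound $n\delta_{K,K^*}$ on $\calB_{1n}$ and $\calB_{2n}$ via \eqref{thm2:e1} and \eqref{thm2:e2}, exclude those models, and deduce \eqref{power} from Theorem \ref{thm2} and \eqref{cardinality} by a union bound. The genuine gap is the step you flag yourself and then wave through.

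\textbf{The rate does not close.} With those two bounds you only get $\delta_{K,K^*}/\sigma_{K,K^*}\gtrsim M_1\Delta_{(K^*)}\sqrt{\log(n)/n}$. Theorem \ref{thm1}(2), however, needs $c\,n^{-1/2}\log n$ with $c$ large. The quotient of the two is of order $M_1\Delta_{(K^*)}/\sqrt{\log n}\to 0$. This is because Condition \ref{con1}(iii) gives $\|\bmu_i\|_2^2\le\lambda_{\max}(\E[\bs_i\bs_i^\top])\le M_1$, hence $\Delta_{(K^*)}\le 2\sqrt{M_1}$. You must also verify the hypothesis $\min_J\sigma^2_{K,J}\ge c_0$ of Theorem \ref{thm1}, which Theorem \ref{thm3} does not supply. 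Combined with $\sigma_{K,K^*}\lesssim\Delta_{(K^*)}$, it forces $\Delta_{(K^*)}\gtrsim 1$, so the shortfall is exactly a $\sqrt{\log n}$ factor.

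Neither of your remedies closes it. The paper defines $a_n\lesssim b_n$ with a fixed constant $c$, so nothing absorbs $\sqrt{\log n}$. The empirical second moment agrees with $\sigma_{K,J}$ up to relative error $O(n^{-1/2}\log^{3/2}n)$ by Lemma \ref{lemma3}, so switching to it changes no rate. Do not look to the paper's proof for the missing step either. It concludes exclusion because the ratio exceeds $x_n(n\log n)^{-1/2}$, i.e.\ because $K$ leaves the approximate null of Theorem \ref{thm1}(1). That is only a sufficient condition for non-rejection, not a rejection criterion.

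\textbf{A repair that works: self-normalize.} Write $\bmu_i$ for the mean of $\bs^E_i$. Two facts hold on the event of Lemma \ref{lemma1}: $\E(\xi^{(i)}_{K,K^*}\mid\mathcal F_O)=\|\bmu_i-\bar{\bs}^O_{K,i}\|_2^2-\|\bmu_i-\bar{\bs}^O_{K^*,i}\|_2^2$, and $\E\{(\xi^{(i)}_{K,K^*})^2\mid\mathcal F_O\}\lesssim\|\bar{\bs}^O_{K,i}-\bar{\bs}^O_{K^*,i}\|_2^2$. Together they give
\[
\sigma^2_{K,K^*}\lesssim\frac1n\sum_{i=1}^n\|\bar{\bs}^O_{K,i}-\bar{\bs}^O_{K^*,i}\|_2^2\le 2\delta_{K,K^*}+\frac4n\sum_{i=1}^n\|\bmu_i-\bar{\bs}^O_{K^*,i}\|_2^2 .
\]
The last sum is $O_p(\sum_k b_n\Delta_k^2+\log^2 n)$. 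This uses the accuracy of $\calT_{K^*}$ invoked in Lemma \ref{tech_lemma2}(iii), together with Lemma \ref{lemma1}. Hence as soon as $n\delta_{K,K^*}\gg\log^2 n$, you get $\delta_{K,K^*}/\sigma_{K,K^*}\gtrsim\delta_{K,K^*}^{1/2}\gg n^{-1/2}\log n$, and Theorem \ref{thm1}(2) applies.

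For $K\in\calB_{1n}$, the requirement follows from \eqref{thm2:e1} and Condition \ref{con5}, provided $\log\bar{\lambda}\asymp\log n$ (e.g.\ spacings of order $n/K^*$). The same argument in fact covers every $K\in\calM_l$ with $\calI^*_{lK}\neq\emptyset$.

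For $K\in\calB_{2n}$, the threshold $M_1\Delta_{(K^*)}^2\sqrt{n\log n}$ exceeds $\log^2 n$ only if $\Delta_{(K^*)}^2\gg n^{-1/2}\log^{3/2}n$. Conditions \ref{con1}--\ref{con5} do not imply this, so it must be added as an assumption or built into the definition of $\calB_{2n}$.

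\textbf{Uniformity over models.} Your concern here is easy and needs no Gaussian-approximation rates. On the event $F_n$ in the proof of Theorem \ref{thm1}(2), rejection is deterministic. This is because $\widehat H^{(K)}$ has unit diagonal, so $z_{\max}(\alpha,\widehat H^{(K)})\le\sqrt{2\log(2K_{\max}/\alpha)}$. Since $\Pr(F_n^c)\lesssim n^{-c}$ for each $K$, a union bound over $K_{\max}\asymp\log n$ models is free. This requires the event $\min_J D^{(K)}_{JJ}\ge c_0$ to hold jointly in $K$.
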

Theorem \ref{thm3} confines the cardinality of confidence set $\calA$. If for sufficiently large $n$, the minimum jump size on its own satisfies $\Delta_{(1)}^2\geq M_1\Delta_{(K^*)}^2\sqrt{n\log(n)}/\underline{\lambda}$, then all lack-of-fit models belong to $\calB_1$, hence $|\calB_1| = |\calM_{l}|$. Based on \eqref{cardinality} in Theorem \ref{thm3}, the confidence set $\calA$ excludes all lack-of-fit models where $K \in \calM_{l}$ with overwhelming probability. If further all over-fit models belong to $\calB_2$, the cardinality of OPTICS will be 1, and $\Pr\{\calA = K^*\} \to 1$. On the other hand, if unfortunately both under-fitting and over-fitting signals are not sufficiently strong, such that there are no elements in either $\mathcal{B}_1$ or $\mathcal{B}_2$, then \eqref{power} in Theorem \ref{thm3} degenerates to the coverage rate result in Theorem \ref{thm2}.

{
\section{Two extensions of OPTICS}
In this section, we extend OPTICS to the heave-tailed data and m-dependent data.

\subsection{Extension of OPTICS to heavy-tailed data}

We explore the potential extension of the OPTICS method to heavy-tailed data. In this scenario, the tail behavior of the scores $\bs_i$ defined in \eqref{model:score} may exhibit heavy-tailed characteristics, causing the Sub-Gaussian condition in Condition \ref{con1} to no longer hold. This issue could result in the failure of the OPTICS method.

A possible solution is to generalize the criterion in \eqref{CK} from the $\ell_2$ loss to a robust loss function, such as the Huber loss \citep{huber1992robust}, defined as
$$
\ell_{\kappa}(u)=\begin{cases}
			u^2/2, & \text{ if } |u| \le \kappa\\
            \kappa |u| - \kappa^2/2, & \text{ if } |u| > \kappa,
		 \end{cases}
$$
where $\kappa > 0$ is a tuning parameter that balances bias and robustness. Then, we define a robust version of \eqref{CK} as 
\begin{eqnarray}\label{CK_huber}\mathcal{C}_{\kappa}\left(\mathcal{T}_K ; \mathcal{Z}_E\right):=\frac{1}{n}\sum_{k=0}^K \sum_{i=\tau_k^K+1}^{\tau_{k+1}^K}\ell_{\kappa} \left(\bs_i^E - \bar{\bs}_{\tau_k^K,\tau_{k+1}^K}^O\right),
\end{eqnarray}
where $\ell_{\kappa}(\bx) = \sum_{i=1}^p \ell_{\kappa}(x_i)$ for $\bx \in \R^p$. Alternatively, the $\ell_1$ loss function can also be used. Intuitively, the difference $\bs_i^E - \bar{\bs}_{\tau_k^K,\tau{k+1}^K}^O \approx \bs_i^E - \E\left(\bs_i^E\right) = \beps_i^E$. When $\beps_i^E$ exhibits heavy-tailed behavior, the $\ell_2$ loss function becomes unsuitable, as it can be distorted by extreme values. In contrast, the Huber loss is much more robust, as it transitions from the $\ell_2$ loss to the $\ell_1$ loss when $\beps_i^E$ deviates significantly from zero. Then, we can define a robust version of \eqref{Xi} as
\begin{eqnarray}\label{Xi_robust}
	\hat{\delta}_{K, J,\kappa}  =  \frac{1}{n}\sum_{i =1}^{n} \left( \ell_{\kappa}\left(\bs_i^E  -\bar{\bs}_{K,i}^O\right) -  \ell_{\kappa}\left( \bs_i^E  - \bar{\bs}_{J,i}^O\right)\right) : = \frac{1}{n}\sum_{i =1}^{n} \xi_{K,J,\kappa}^{(i)}.
\end{eqnarray}
The corresponding test statistic is
\begin{eqnarray}\label{TK_robust}T_{K,\kappa} = \max_{J \ne K}  \frac{\sqrt{n}\hat{\delta}_{K,J,\kappa}}{\hat{\sigma}_{K,J,\kappa}}=\max_{K\ne J} \frac{1}{\sqrt{n}} \sum_{i=1}^{n} \frac{\xi_{K,J,\kappa}^{(i)}}{\hat{\sigma}_{K,J,\kappa}},
\end{eqnarray}
where $\hat{\sigma}_{K,J,\kappa}^2 = n^{-1}\sum_{i=1}^n (\xi_{K,J,\kappa}^{(i)})^2$ is the estimated second moment. For this robust version of statistic, we can also use the Gaussian multiplier bootstrap statistic similar as \eqref{bootT}, which is
\begin{eqnarray}\label{bootT_robust}
	{ T^{\sharp}_{K,\kappa, b}} = \max_{K\ne J} \frac{1}{\sqrt{n}} \sum_{i=1}^{n} \frac{\xi_{K,J,\kappa}^{(i)}}{\hat{\sigma}_{K,J,\kappa}} \zeta_i.
\end{eqnarray}
Then the $p$-value is naturally set to be $\hat{p}_{K,\kappa} = B^{-1} \sum_{b=1}^B \I({T^{\sharp}_{K,\kappa,b}} > T_{K,\kappa})$. We refer to \cite{chen2020robust} for adaptively selecting the hyperparameter $\kappa$. We name this extension as Huber-OPTICS (H-OPTICS). Subsection \ref{app:simu:heavy} of the Supplementary Material demonstrates the superior performance of our proposed method through simulation studies

\subsection{Extension of OPTICS to $m$-dependent data}
We extend the OPTICS method to handle dependent data, specifically focusing on $m$-dependent data, which is commonly used in econometrics applications \citep{moon2013tests, kokoszka2018dynamic}. We restrict our discussion in multiple mean change-points detection, where
$$
\bz_i = \bbeta_k^* + \beps_i, \quad \tau_{k-1}^*<i \leq \tau_{k}^*, k=1, \ldots, K^*+1; \  i=1, \ldots, (m+1)n,
$$
where $\beps_i \sim (\mathbf{0}, \bSigma)$ for covariance matrix $\bSigma \in \R^{d \times d}$. This model is a special case of \eqref{sec1:e1}.
In our work, we define a stochastic process $\calZ = \{\bz_1, \ldots, \bz_{(m+1)n}\}$ is called $m$-dependent for $m \ge 0$, if
$$
\{\bz_1,\ldots,\bz_{i-1}, \bz_i\} \text{ and } \{\bz_{i+m+1}, \bz_{i+m+2},\ldots, \bz_{(m+1)n}\}
$$
are independent. Here, we assume sample size is $(m+1)n$ to simplify the presentation. When $m = 0$, $m$-dependence reduces to standard independence.

{
For $m$-dependent data, we apply the order-preserving $(m+1)$-splitting
$$
\calZ^{(r)}
=
\{\bz_{r+(m+1)i}: i=0,\ldots,n-1\},\quad 
r=1,\ldots,m+1.
$$
Then each subsequence $\calZ^{(r)}$ consists of independent observations. We may therefore apply the multiple-splitting OPTICS procedure as in Section~\ref{sec2:3}, yielding a split-specific $p$-value for each $K\in\calM$. These $p$-values are then combined to obtain a single combined $p$-value. We refer to this extension as $m$-dependent OPTICS (m-OPTICS). {Subsections~\ref{app:simu:dependent} and \ref{app:simu:vary_dependent} of the Supplementary Material provide thorough simulations to investigate the effects of $m$-dependence. We find that m-OPTICS is robust to various $m$-dependent structures and performs especially well when $m$ is moderate. However, we should emphasize that $m$ cannot be excessively large. A very high $m$ leads to a loss of selection power, as each subsample will contain too few observations to yield a reliable change-point estimation.}
}
}

\section{Simulation studies}
\label{sec:simulation}
In this section, we conduct simulation studies under various change-point model settings to assess the empirical performance of OPTICS. Two base change-point detection algorithms are used in the training stage for constructing OPTICS - the Binary Segmentation (BS, \cite{fryzlewicz2014wild}) and the Segmentation Neighborhood (SN, \cite{auger1989algorithms}). 

For comparison, we consider the consistent estimation method COPSS \citep{zou2020consistent}, the FDR-control method FDRseg \citep{li2016fdr} and the FWER-control method SMUCE \citep{frick2014multiscale}. We also adopt BS and SN as base algorithms for COPSS. Since no state-to-art confidence set construction methods for the number of change-points are available in literature, we artificially create the quasi-confidence set based on each point estimate  from COPSS, FDRseg or SMUCE, respectively, to match the cardinality of OPITCS. This also naturally enhances the coverage rate of these existing methods. Specifically, given the point estimate $\hK$, the associated quasi-confidence set is defined as $\calA^{q} : = \{\hK-q, \hK-q+1, \ldots, \hK, \ldots, \hK+q\}$ with cardinality $2q+1$. We remark that these sets only facilitate a fair numerical comparison with comparable cardinality with OPTICS; they lack theoretical justification of the coverage rate. 

Throughout this section, the candidate set is taken as $\calM = \{1, 2, \ldots, \log (n)\}$ following \cite{zou2020consistent}, and the significance level $\alpha = 0.1$. We conduct $100$ simulation runs for each setting, and report the coverage rate $\sum_{i=1}^{100} \I(K^* \in \calA_i)/100$ and the average cardinality of confidence set $\sum_{i=1}^{100}|\calA_i|/100$, where $\calA_i$ denotes the set obtained in the $i$th simulation run.

\subsection{Multiple mean-change model}
\label{subsec:4.1}
Consider the multiple mean-change model
$$
\by_i = \bmu^*_k + \beps_i, \ \tau_{k-1}^*  < i  \le \tau^*_{k}, \ k = 1, \ldots, K^{*}+1, \ i=1,\ldots,2n,
$$ 
where $\tau_k^*, \ k = 1, \ldots, K^*$ are the true mean change-points, $\bmu^*_{k}$ is the $d$-dimensional mean vector for subject $i$ when $\tau_{k-1}^*  < i  \le \tau^*_{k}$; $\beps_i$ is the independently and identically distributed random error. 

In this example, both the univariate mean with $d=1$ and the multivariate mean vector with $d=5$ are studied. However, as FDRseg and SMUCE only work for univariate mean change-point detection, we only compare OPTICS with COPSS when $d=5$. The sample size is taken to be $2n = 1000$, and the set of true change-points $\calT^* = \{\tau_k^* = 200k, k =1, \ldots, 4\}$, hence $K^*=4$. The $k$th mean vector $\bmu^*_{k} = (-1)^{k-1} A\mathbf{1}_d, \ k = 1, \ldots, 4$, where $A$ is a {scalar}, representing the amplitude, and $\mathbf{1}_d$ is the $d$-dimensional vector with all elements 1's. The amplitude $A$ varies among $\{0.50, 0.625, 0.75, 0.875, 1\}$. Two error distributions are under consideration - the standard normal distribution $N(0,1)$ and a t-distribution $t(10)$. As we observe similar phenomenon for these two types of error distributions, we only exhibit that for the more challenging $t(10)$ case. Refer to Section \ref{sup:simu} of Supplementary Material for the simulation results under $N(0,1)$. 

Table \ref{tab:2} and \ref{tab:4} report the coverage rates when $d=1$ and $d=5$, respectively, under t-distributed errors. The quantities in parentheses are average cardinalities of estimated sets, from which we observe the average cardinality of OPTICS is around 3. Therefore, to match this cardinality, we take $q=1$ for the quasi-confidence sets for COPSS, FDRseg and SMUCE, i.e., $\calA^{1} = \{\hK-1, \hK, \hK+1\}$ with cardinality $3$, and denote the generated sets by $\calA^1_{COPSS}$,  $\calA^1_{FDRseg}$ and $\calA^1_{SMUCE}$, respectively. We also report the coverage rate solely by the point estimates in the last four rows.  

From both tables, the coverage rate of OPTICS gradually meets the nominal level with decreasing average cardinality as the amplitude $A$ increases, especially with the SN detection algorithm. This empirically illustrates the validity of Theorem \ref{thm2} and \ref{thm3}. The BS algorithm is less favorable due to its nature as an approximation algorithm rather than an exact one; thus, the accuracy of detected change positions falls short of meeting the requirements in Condition \ref{con4}. Nevertheless, OPTICS with both SN and BS are superior to the quasi-confidence sets generated from COPSS, FDRreg and SMUCE in terms of coverage probabilities. Recall that these quasi-confidence sets do not possess any theoretical guarantee. On the other hand, as depicted in the last four rows of Table \ref{tab:2}, the point estimates exhibit low probabilities of correctly encompassing the true number of change-points. 

Furthermore, OPTICS demonstrates its superiority compared to SMUCE and FDRseg especially under non-Gaussian errors. This advantage stems from SMUCE and FDRseg utilizing the supremum of Brownian motion to establish cutoffs, which become misaligned under non-Gaussian assumptions. Notably, SMUCE lacks power when dealing with small amplitudes, elucidating the stringency of using the FWER-type criterion. Additionally, it is worth noting that OPTICS serves as a unified method applicable to the multivariate change-point models, while both SMUCE and FDRseg are limited to scenarios where $d = 1$.

\begin{table}[h]
	\caption{The coverage rates in the mean-change model: $d=1$; t-distributed error.\label{tab:2}}
	\centering
	\begin{tabular}{cccccc}
		\hline
		Amplitude  $A$          & 0.50       & 0.625      & 0.75       & 0.875      & 1.00          \\ \hline
		OPTICS(BS)  & 0.72(4.02) & 0.82(4.11) & 0.89(3.48) & 0.86(2.86) & 0.80(2.75) \\
		OPTICS(SN)  & 0.89(4.40) & 0.92(4.22) & 0.97(3.88) & 0.98(2.82) & 0.99(2.80) \\
		$\calA^1_{COPSS}$(BS)& 0.46(3.00)       & 0.58(3.00)       & 0.69(3.00)       & 0.80(3.00)       & 0.88(3.00)       \\
		$\calA^1_{COPSS}$(SN)& 0.38(3.00)      & 0.45(3.00)       & 0.80(3.00)       & 0.88(3.00)       & 0.97(3.00)      \\
		$\calA^1_{FDRseg}$ & 0.69(3.00)      & 0.64(3.00)      & 0.59(3.00)       & 0.58(3.00)      & 0.61(3.00)      \\
		$\calA^1_{SMUCE}$ & 0.28(3.00)      & 0.62(3.00)      & 0.94(3.00)       & 0.97(3.00)      & 0.96(3.00)      \\
		COPSS(BS) & 0.19       & 0.25       & 0.46       & 0.47       & 0.48       \\
		COPSS(SN) & 0.18       & 0.33       & 0.57       & 0.77       & 0.85       \\
		FDRseg    & 0.39       & 0.45       & 0.44       & 0.43       & 0.38       \\	
		SMUCE     & 0.07       & 0.30       & 0.58       & 0.81       & 0.89       \\
		\hline
	\end{tabular}
\end{table}

\begin{table}[h]
	\caption{The coverage rates in the mean-change model: $d=5$; t-distributed error.\label{tab:4}}
	\centering
	\begin{tabular}{cccccc}
		\hline
		Amplitude  $A$        & 0.50       & 0.625      & 0.75       & 0.875      & 1.00          \\ \hline
		OPTICS(BS)  & 0.60(3.37) & 0.54(2.84) & 0.61(2.17) & 0.58(2.29) & 0.70(2.06) \\
		OPTICS(SN)  & 0.73(4.09) & 0.82(3.40) & 0.78(2.33) & 0.91(2.29) & 0.97(2.37) \\
		$\calA^1_{COPSS}$(BS) & 0.51(3.00)       & 0.51(3.00)       & 0.65(3.00)       & 0.77(3.00)       & 0.84(3.00)       \\
		$\calA^1_{COPSS}$(SN) & 0.58(3.00)      & 0.66(3.00)       & 0.85(3.00)       & 0.92(3.00)       & 0.93(3.00)      \\
		COPSS(BS) & 0.20       & 0.23       & 0.38       & 0.40       & 0.54       \\
		COPSS(SN) & 0.24       & 0.39       & 0.68       & 0.84       & 0.89     \\    
		\hline
	\end{tabular}
\end{table}

\subsection{Linear regression model with coefficient structural-breaks}
\label{subsec:4.2}
The change-point detection problem can be naturally extended to the coefficient structural-change detection in linear regression models. In this subsection, we consider the following linear regression model with $K*$ potential coefficient structural-breaks:   
$$
y_i = \bx_i^\top \bbeta^*_{k} +\epsilon_i, \ \tau^*_{k-1} < i \le \tau_k^*, \ k = 1, \ldots, K^*+1, \ i = 1,\ldots, 2n,
$$
where $\bbeta^*_{k} = (-1)^{k-1}A\mathbf{1}_d, \ k = 1, \ldots, K^*+1$. We generate the covariate $\bx_i$ from the multivariate normal distribution $N(\mathbf{0}, \bI_d)$, with $d=5$, and the error term $\epsilon_i$ from $N(0,1)$ and $t(10)$, respectively. The sample size $n = 1000$, the true number of change-points $K^*=4$, and the true change-point set $\calT^* = \{200k, k = 1, \ldots, 4\}$. 
% \begin{table}[h]
% 	\caption{The coverage rates in linear model with coefficient structural-breaks; $t(10)$ errors.\label{tab:7}}
% 	\centering
% 	\begin{tabular}{cccccc}
% 		\hline
% 		Amplitude $A$         & 0.10       & 0.125      & 0.15       & 0.175      & 0.20        \\ \hline
% 		OPTICS(BS)  & 0.72(4.22) & 0.75(4.01) & 0.74(4.02) & 0.69(4.02) & 0.81(4.34) \\
% 		OPTICS(SN)  & 0.97(4.22) & 0.98(4.05) & 0.94(4.28) & 0.92(4.51) & 0.91(4.49) \\
% 		$\calA^1_{COPSS}$(BS) & 0.25(3.00)       & 0.22(3.00)       & 0.31(3.00)       & 0.33(3.00)       & 0.38(3.00)       \\
% 		$\calA^1_{COPSS}$(SN) & 0.32(3.00)      & 0.42(3.00)       & 0.49(3.00)       & 0.53(3.00)       & 0.65(3.00)      \\
% 		COPSS(BS) & 0.11       & 0.08       & 0.08       & 0.08       & 0.11       \\
% 		COPSS(SN) & 0.00       & 0.00       & 0.01       & 0.01       & 0.01       \\
% 		\hline
% 	\end{tabular}
% \end{table}

\begin{table}[h]
	\caption{The coverage rates in linear model with coefficient structural-breaks; $t(10)$ errors.\label{tab:7}}
	\centering
	\begin{tabular}{cccccc}
		\hline
		Amplitude $A$         & 0.10       & 0.125      & 0.15       & 0.175      & 0.20       \\ \hline
		OPTICS(BS)  & 0.76(3.94) & 0.89(3.67) & 0.79(3.26) & 0.94(2.80) & 0.85(2.39) \\
		OPTICS(SN)  & 0.85(3.22) & 0.95(2.65) & 0.94(2.19) & 0.99(1.95) & 0.98(1.62) \\
		$\calA^1_{COPSS}$(BS) & 0.44(3.00) & 0.46(3.00) & 0.52(3.00) & 0.52(3.00) & 0.53(3.00) \\
		$\calA^1_{COPSS}$(SN) & 0.67(3.00) & 0.72(3.00) & 0.78(3.00) & 0.75(3.00) & 0.78(3.00) \\
		COPSS(BS) & 0.14       & 0.24       & 0.21       & 0.19       & 0.28       \\
		COPSS(SN) & 0.46       & 0.46       & 0.58       & 0.59       & 0.66       \\
		\hline
	\end{tabular}
\end{table}

The coverage rates under the $t(10)$ distribution are presented in Table \ref{tab:7}. For the coverage rates under $N(0,1)$, please refer to Table \ref{tab:6} in Section \ref{sup:simu} of Supplementary Material. Since FDRseg and SMUCE are inapplicable for structural-break detection, our comparison focuses solely on OPTICS against COPSS and its related quasi-confidence sets. In Table \ref{tab:7}, a notably superior performance of OPTICS is observed in terms of coverage probability, particularly when SN is used as the base algorithm. Importantly, COPSS fails to achieve consistency under these weak signal settings, and the associated quasi-confidence intervals that match the cardinalities of OPTICS are not adequately expanded to ensure the coverage rate. In contrast, OPTICS adaptively adjusts its cardinality to achieve the desired confidence level, due to the nature of its originated testing framework. 

{
We also include expanded simulation studies covering variance, network, and covariance change-points, and multiple mean changes with heavy-tailed as well as $m$-dependent distributions in Section \ref{sup:simu} of the Supplementary Material
}

{
\section{Real Data Analysis}
\label{sec:5}

%\subsection{Bladder tumor microarray dataset}
In this section, we apply OPTICS to analyze the bladder tumor microarray dataset sourced from the \textit{ecp} R package \citep{james2013ecp}. We extract data from 10 individuals diagnosed with bladder tumors. For each individual, log-intensity-ratio measurements for 2215 distinct genetic loci were collected. Our primary objective is to identify the number of change-points within the genetic loci, allowing us to pinpoint potential influential genes associated with bladder tumors. 

We first demonstrate that the dataset may exhibit sub-Gaussian behavior, thereby satisfying the main assumption of OPTICS. An equivalent condition for sub-Gaussianity is that the $k$th moment is bounded by $c k^{k/2}$ for some universal constant $c > 0$. The left panel in Figure~\ref{newfig:01} presents the values of up to the $100$th moments for 10 individuals, which are bounded by $k^{k/2}$ (black curve). This confirms that, at least for each individual, the data exhibit sub-Gaussian properties.

\begin{figure}[h]
	\centering
	\includegraphics[width=6cm, height=4cm]{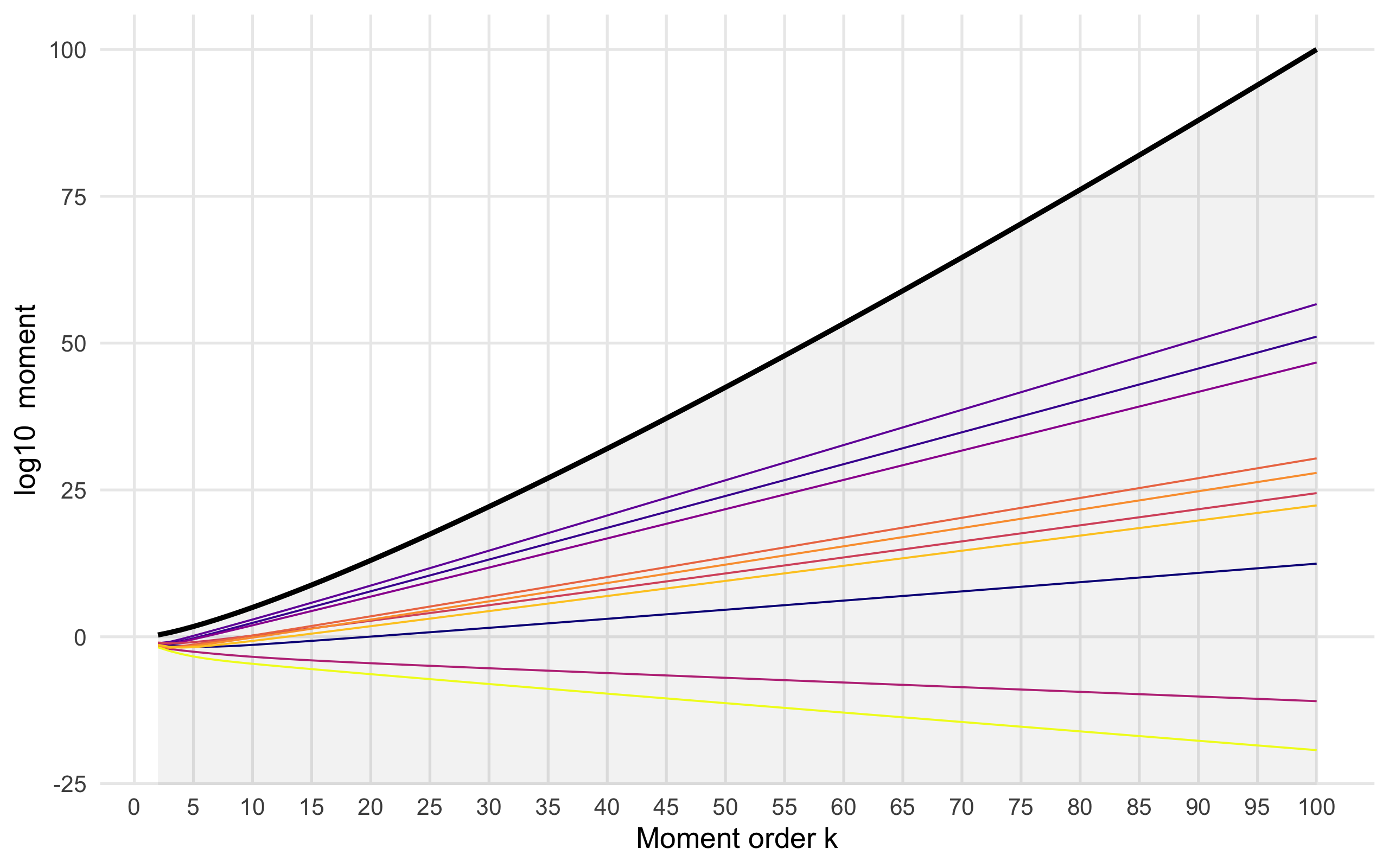}
      \includegraphics[width=5cm, height=3cm]{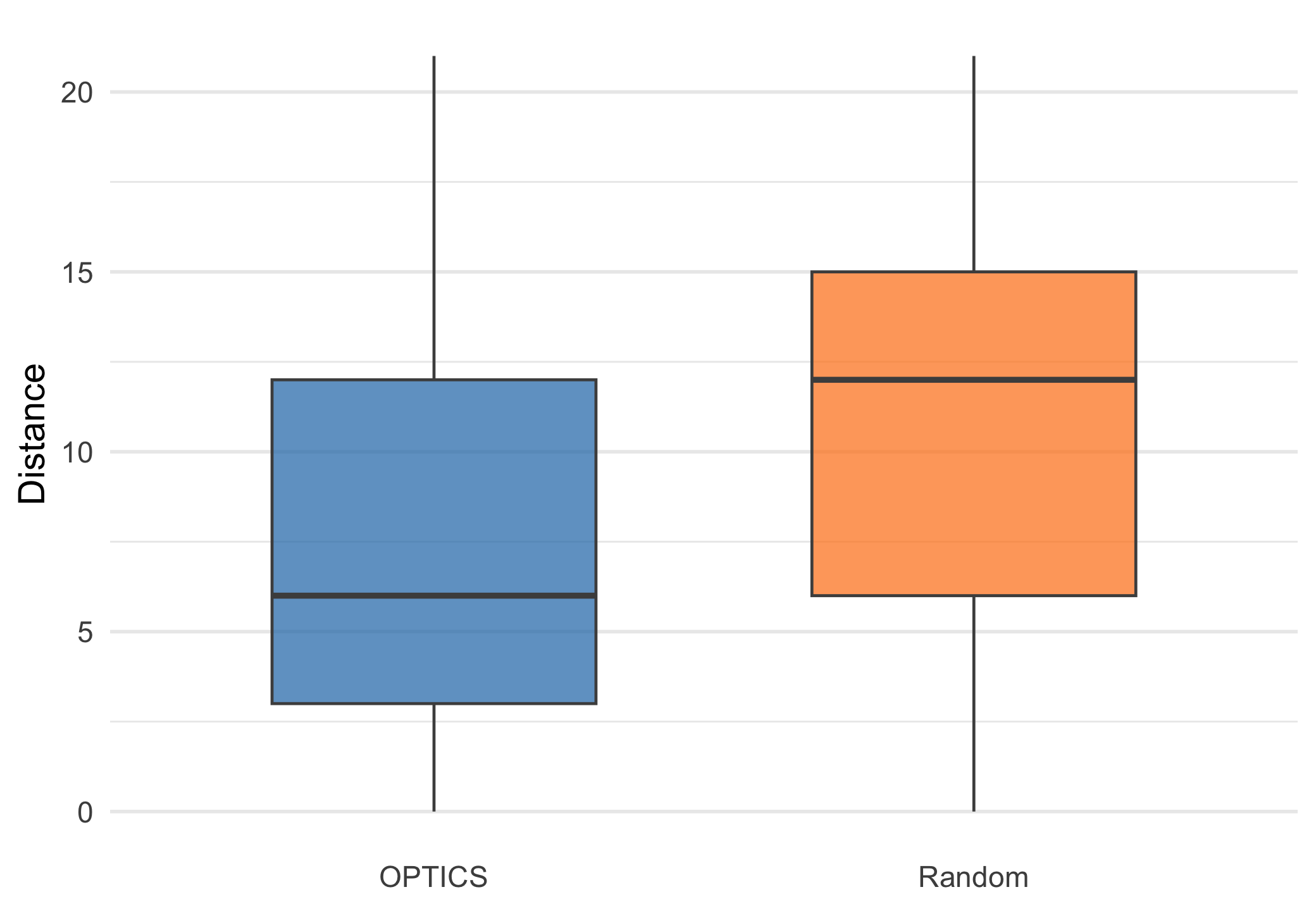}
	\caption{Left Panel: The $k$th moment plots of $10$ individuals, where the black curve represents $y = k^{k/2}$. The y-axis is rescaled using $\log_{10}$.  
Right Panel: Boxplots of Hausdorff distances between the confidence sets obtained by OPTICS and those generated randomly. {\label{newfig:01}}}
\end{figure}

We apply our OPTICS procedure for detecting multiple mean changes to the bladder tumor microarray dataset. This dataset is analyzed from two perspectives. First, we apply OPTICS to each individual and construct the corresponding confidence sets. Since all ten individuals share the same disease, their change-point patterns may exhibit similar structures. We set $K_{\max} = 3\log(10)$ and apply OPTICS to obtain $\calA_i$ for $i \in [10]$, where $\calA_i$ represents the confidence set for the number of change-points for individual $i$. The concrete values within each set can be found in Section~\ref{supp:real} in the Supplementary Material. To verify whether $\calA_i$ for $i \in [10]$ share similar patterns, we compute the pairwise Hausdorff distances between these sets and compare them to randomly generated sets. Specifically, we construct $\calB_i$ for $i \in [10]$, where $\calB_i$ is sampled from $\{1, \ldots, K_{\max}\}$ without replacement and has the same cardinality as $\calA_i$. The right panel of Figure~\ref{newfig:01} presents the boxplots of the distributions of Hausdorff distances for the OPTICS confidence sets and the randomly generated sets. It is evident that the OPTICS confidence sets have smaller Hausdorff distances, indicating greater consistency in detecting shared patterns across individuals. This shows that the OPTICS has power in covering the underlying true number of change-points.

We further detect the change-points using the Binary Segmentation algorithm. To control the family-wise error rate (FWER) (see the discussion in Section \ref{sec2:3}), we select the smallest value in each $\calA_i$ as the optimal number of change-points. The detected change-points, based on the corresponding number of change-points, are presented in Figure~\ref{fig:sep}. From the figure, it is evident that the number of change-points chosen by OPTICS provides a good fit for each individual.
\begin{figure}[H]
	\centering
	\includegraphics[width=16cm, height=15cm]{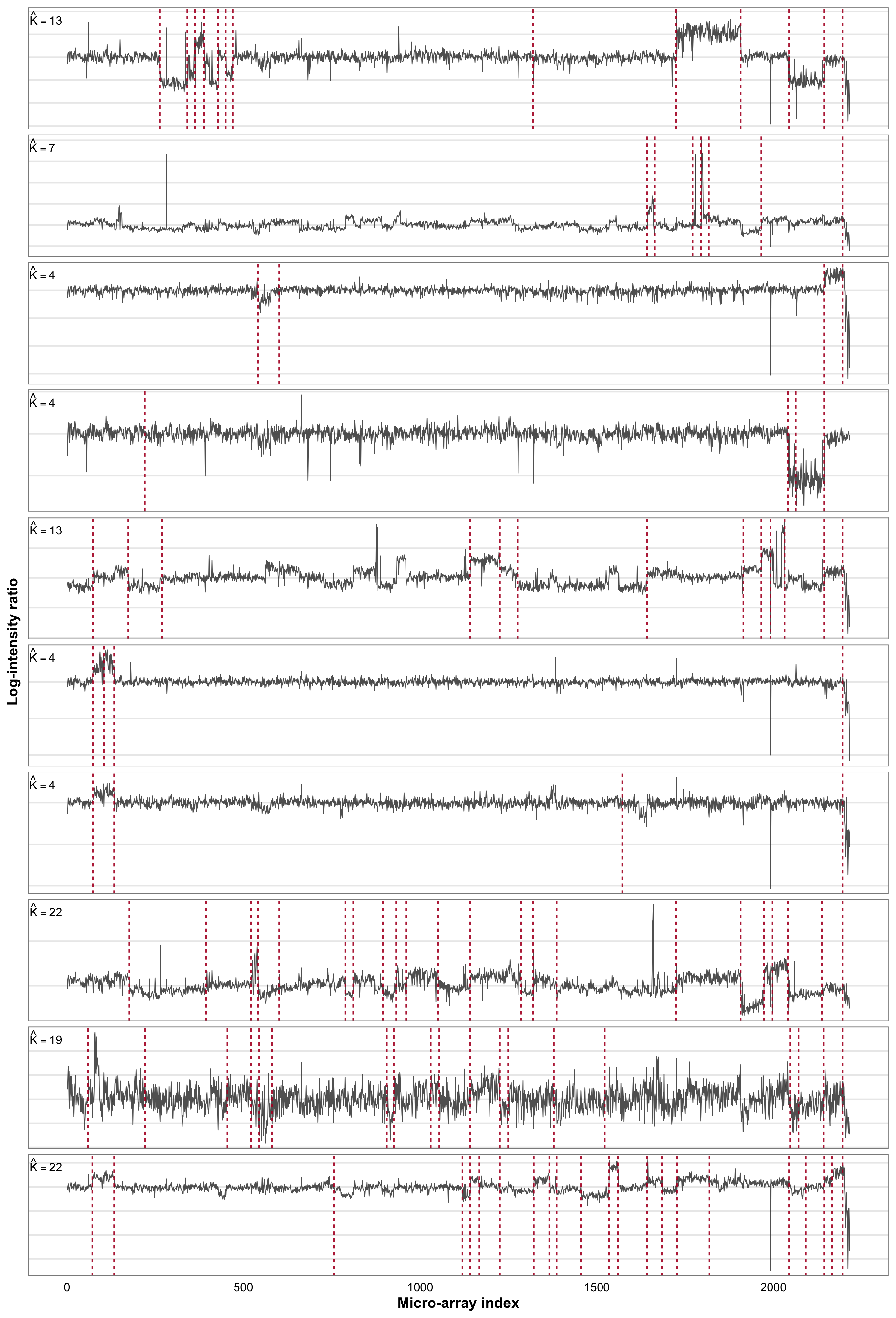}
	\caption{Detected change-points for each individual separately.{\label{fig:sep}}}
\end{figure}

We now perform a joint analysis of the 10 individuals. Using OPTICS for the multi-dimensional mean change-points model, we construct the confidence set for the number of change-points, $\calA_{\textit{joint}} = \{10,13,16, 19, 22\}$. To control the FWER, we select the minimum value within $\calA_{\textit{joint}}$. The change-points are then detected using the Wild Binary Segmentation (WBS) algorithm, which is $\mathcal{S}_{\text{joint}} = \{154, 358, 1140, 1268, 1534, 1724, 1906, 1966, 2052, 2142\}$, as shown in Figure~\ref{fig:joint}. The plot illustrates that each dark red bold vertical line, identified via the OPTICS method, corresponds to a location where a change occurs in at least one coordinate. Conversely, the light red vertical lines, representing the default WBS detections not selected by OPTICS, contain numerous false positives. This confirms that OPTICS-based selection effectively controls the false discovery rate while preserving detection accuracy.

\begin{figure}[H]
	\centering
	\includegraphics[width=16cm, height=10cm]{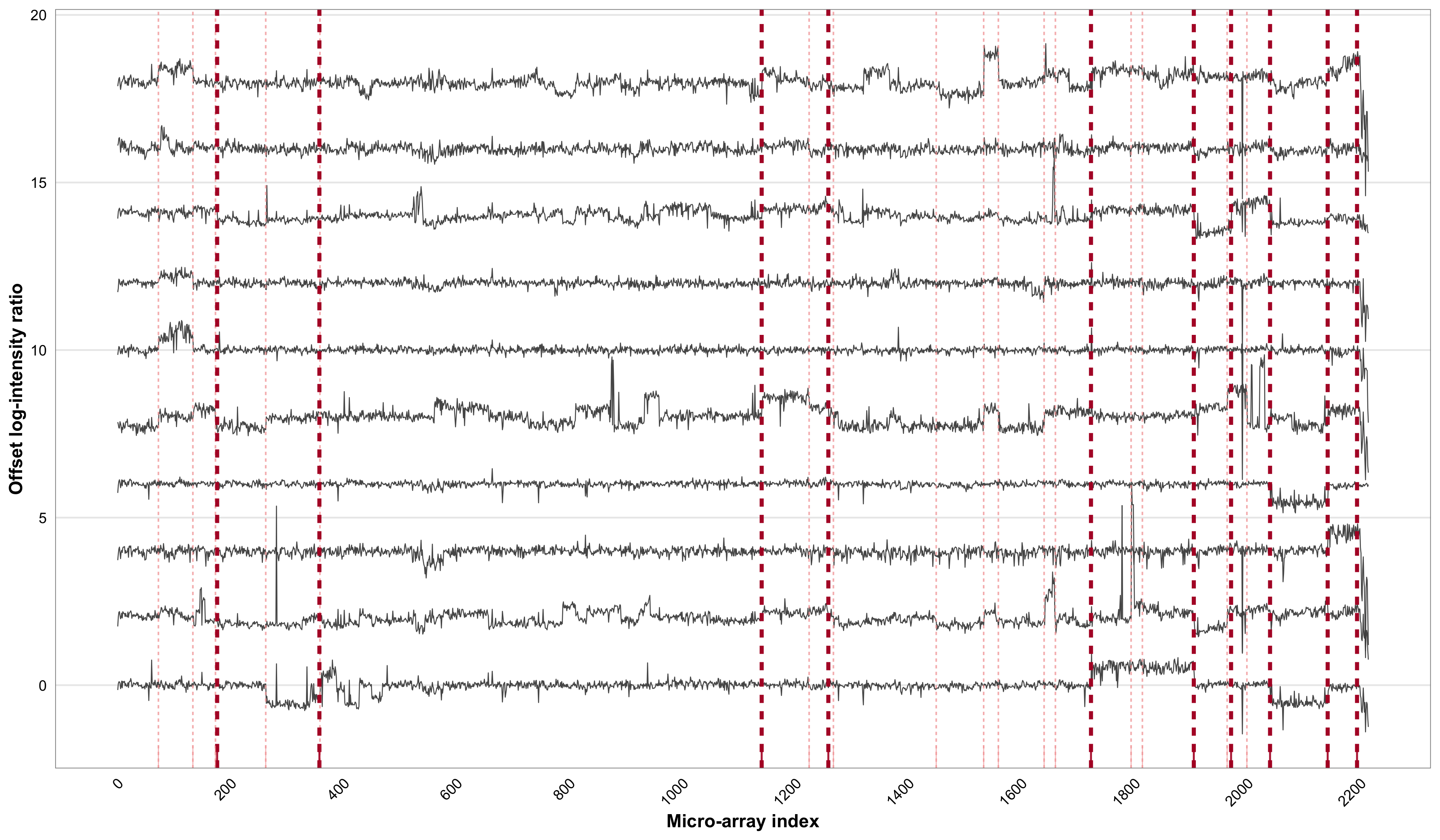}
	\caption{{Detected change-points for 10 individuals analyzed jointly. The bold dashed lines (dark red) represent change-points selected by the OPTICS FWER control criteria, while the solid light lines (light red) indicate additional change-points identified by the default stopping criteria in the \texttt{wbs} R package.}{\label{fig:joint}}}
\end{figure}

}

\section{Conclusion}
\label{sec:6}
Determining the number of change-points is a fundamental problem in literature. Rather than offering a single-point estimate, we propose a testing framework designed to construct a confidence set for the true number of change-points. The proposed method, named OPTICS, rigorously covers the true number with the predetermined confidence level under mild conditions. Additionally, we delve into studying the cardinality of OPTICS to ensure the obtained set is nontrivial and informative. The cardinality and coverage rate of OPTICS can also be utilized to assess the efficacy of base change-point detection algorithms. {Furthermore, we implement a multiple-splitting approach to stabilize OPTICS. We also extend this framework to accommodate high-dimensional datasets and develop a robust version capable of handling $m$-dependent and heavy-tailed distributions.}

{
\section{Supplementary Materials}
The Supplementary Material provides proofs for Theorems \ref{thm1}, \ref{thm2}, and \ref{thm3}, along with an extended literature review on statistical inference for change-point detection. We also include guidelines for selecting loss functions across various models and expanded simulation studies covering variance, network, and covariance change-points, multiple mean changes with heavy-tailed as well as $m$-dependent distributions. Finally, additional real-data results are provided; the dataset used in the real data analysis is available in the \texttt{ecp} R package.
}
%\appendix

\clearpage

	\def\spacingset#1{\renewcommand{\baselinestretch}%
		{#1}\small\normalsize} \spacingset{1}

	%%%%%%%%%%%%%%%%%%%%%%%%%%%%%%%%%%%%%%%%%%%%%%%%%%%%%%%%%%%%%%%%%%%%%%%%%%%%%%
	
	\if1\blind
	{
		\title{\bf Supplement to ``OPTICS: Order-Preserved Test-Inverse Confidence Set for Number of Change-Points"}
		\author{Ao Sun and Jingyuan Liu
			%\thanks{
			%The authors gratefully acknowledge \textit{please remember to list all relevant funding sources in the unblinded version}}\hspace{.2cm}\\
			MOE Key Laboratory of Econometrics, Department of Statistics, \\
			School of Economics, Wang Yanan Institute for Studies in Economics \\
			and Fujian Key Lab of Statistics, Xiamen University, P.R China}
		\maketitle
	} \fi
	
	\if0\blind
	{
		\bigskip
		\bigskip
		\bigskip
		\begin{center}
			{\LARGE\bf Supplement to ``OPTICS: Order-Preserved Test-Inverse Confidence Set for Number of Change-Points"}
		\end{center}
		\medskip
	} \fi
	
	\bigskip
	
	\spacingset{1.9} % DON'T change the spacing!

	% \renewcommand{\thecondition}{S.\arabic{condition}}
	% \renewcommand{\thesubsection}{S.\arabic{subsection}}
	% \renewcommand{\thelemma}{S.\arabic{lemma}}
	% \renewcommand{\thetheorem}{S.\arabic{theorem}}
	% \renewcommand{\theequation}{S.\arabic{equation}}
	% \renewcommand{\thetable}{S.\arabic{table}}
	% \renewcommand{\thefigure}{S.\arabic{figure}}

    % Reset the counters to start at 1 for the supplementary section
\setcounter{section}{0}
\setcounter{equation}{0}
\setcounter{table}{0}
\setcounter{figure}{0}
% (Add \setcounter{lemma}{0}, \setcounter{theorem}{0}, etc. if they don't reset automatically)

% Redefine the labeling formats
\renewcommand{\thesection}{S.\arabic{section}}
\renewcommand{\thesubsection}{\thesection.\arabic{subsection}} % Outputs S.1.1, S.1.2, etc.
\renewcommand{\thecondition}{S.\arabic{condition}}
\renewcommand{\thelemma}{S.\arabic{lemma}}
\renewcommand{\thetheorem}{S.\arabic{theorem}}
\renewcommand{\theequation}{S.\arabic{equation}}
\renewcommand{\thetable}{S.\arabic{table}}
\renewcommand{\thefigure}{S.\arabic{figure}}

This supplementary material contains the all technique proofs, additional literature review, additional simulations and real data results of the main paper.

\section{Proofs for Theorems in Section \ref{sec3.1}}
In the following proofs, we denote $c$ and $C$ as generic constants which may differ line to line. 
\subsection{Auxiliary lemmas}

The first lemma gives a uniform concentration of sample means $\bar{\bs}_{K,i}^O$ for all $i = 1,\ldots, n$ and $K \in \calM$.  Let $\| \bx \|_1 =  \sum_{j=1}^d |x_j|$ for a vector $\bx \in \mathbb{R}^d$.
\begin{lemma}\label{lemma1}
Under Conditions \ref{con1} and \ref{con2}, there exists a positive constant $c$ such that
$$
\Pr\left\{
\max_{K \in \calM}\max_{1 \le i \le n}
\bigl\|\bar{\bs}_{K,i}^O - \E(\bar{\bs}_{K,i}^O)\bigr\|_1
\gtrsim n^{-1/2}\log(n)
\right\}
\lesssim n^{-c}.
$$
\end{lemma}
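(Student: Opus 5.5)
Since the number of change-points $K_{\max}\asymp\log(n)$ is small, I would reduce the statement to a union bound over a modest collection of segment averages. For each $K\in\calM$, the quantity $\bar{\bs}_{K,i}^O$ takes at most $K+1$ distinct values as $i$ ranges over $1,\ldots,n$. These values are the averages $\bar{\bs}^O_{\tau_k^K,\tau_{k+1}^K}$ over the estimated segments. Hence the double maximum in Lemma \ref{lemma1} is a maximum over at most $\sum_{K\in\calM}(K+1)\lesssim \log^2(n)$ segment averages, each indexed by a pair $(K,k)$ and a coordinate $j\le d$. Since $d$ is fixed, $\|\cdot\|_1\le d\|\cdot\|_\infty$, so it suffices to control each coordinate separately.

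The main obstacle is that the segment endpoints $\tau_k^K$ are themselves estimated from $\calZ_O$, so the averages are not sums of independent terms over a fixed index set. I would remove this dependence by passing to a uniform bound over all admissible intervals. By Condition \ref{con2}, with probability approaching one every estimated segment has length at least $c\,n/\log(n)$. I would therefore bound
$$
\max_{0\le a<b\le n,\; b-a\ge c n/\log n}\ \max_{j\le d}\ \Bigl|\frac{1}{b-a}\sum_{i=a+1}^{b}\bigl(s^O_{ij}-\E s^O_{ij}\bigr)\Bigr|,
$$
which dominates the target quantity. This maximum runs over at most $n^2 d$ deterministic intervals, a polynomial-size family. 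I interpret $\E(\bar{\bs}^O_{K,i})$ as the corresponding average of the means $\bmu_i$ over the (random) segment, which is exactly what this reduction compares to.

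For a fixed interval of length $m=b-a\ge cn/\log n$ and a fixed coordinate $j$, the summands $s^O_{ij}-\E s^O_{ij}$ are independent, because the odd subsample consists of independent observations. They are also centered and sub-exponential with $\psi_1$-norm at most $2M_1$, by Condition \ref{con1}(i) applied with $\mathbf b=\be_j$. Bernstein's inequality then gives
$$
\Pr\Bigl(\Bigl|\frac1m\sum(s^O_{ij}-\E s^O_{ij})\Bigr|>t\Bigr)\le 2\exp\{-c\,m\min(t^2/M_1^2,t/M_1)\}.
$$
I take $t=C n^{-1/2}\log n$. With $m\ge cn/\log n$, we have $t\to0$, so the quadratic branch applies and the exponent is of order $C^2\log n$. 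Choosing $C$ large makes each tail probability at most $n^{-(2+c)}$. The union bound over the $n^2d$ intervals then yields a total of order $n^{-c}$. The event on which Condition \ref{con2} fails is absorbed in the ``$\gtrsim$'' convention, i.e.\ it holds with probability approaching one.

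Concerning the rate: with $m\ge cn/\log n$, Bernstein gives a standard deviation of order $\sqrt{\log n/m}\asymp \log(n)/\sqrt{n}$. This matches the claimed $n^{-1/2}\log(n)$ exactly, so the argument has no slack and Condition \ref{con2} is used sharply. If a finer argument is needed, the union could instead be restricted to intervals whose endpoints lie on a grid. However, the polynomial count is harmless here because it only changes constants.
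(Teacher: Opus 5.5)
Your proposal is correct and follows essentially the same route as the paper. Both arguments replace the data-dependent segments by a union over the at most $n^2$ (times $d$) deterministic intervals of length $\gtrsim n/\log(n)$ allowed by Condition \ref{con2}, apply Bernstein's inequality for sub-exponential variables in its quadratic regime at threshold $A n^{-1/2}\log(n)$, and take $A$ large so that $n^{2-c_2A^2}\lesssim n^{-c}$. Your explicit centering of $s_{ij}-\E s_{ij}$ and your handling of the event where Condition \ref{con2} fails are slightly more careful than the paper, which treats Condition \ref{con2} as holding deterministically.
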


\begin{proof}
Recall that $
\bs_i=\E(\bs_i)+\beps_i$,$
\beps_i:=\bs_i-\E(\bs_i)
=(\epsilon_{i1},\ldots,\epsilon_{id})^\top$ for $i=1,\ldots,n$.
By Condition \ref{con1}(i), for each $j=1,\ldots,d$, $
\|\epsilon_{ij}\|_{\psi_1}\le M_1,
\qquad i=1,\ldots,n$.
Since we do not consider the high-dimensional setting here, $d$ is fixed.

For each candidate segmentation $T_K=\{\tau_1^K,\ldots,\tau_K^K\}$, the quantity
$\bar{\bs}_{K,i}^O$ is the sample average of $\{\bs_l\}$ over the segment containing $i$.
Hence,
$$
\bar{\bs}_{K,i}^O-\E(\bar{\bs}_{K,i}^O)
=
\frac{1}{b-a+1}\sum_{l=a}^b \beps_l
$$
for some subset $\{a,\ldots,b\}\subset\{1,\ldots,n\}$ determined by $K$ and $i$. By Condition
\ref{con2}, every such interval has length at least of order $n/\log(n)$. Therefore, if we define
$$
\Theta_n
:=
\left\{
(a,b): 1\le a\le b\le n,\ \ b-a+1 \gtrsim \frac{n}{\log(n)}
\right\},
$$
then
\begin{equation}\label{eq:lemma1-step1}
\Pr\left\{
\max_{K\in\calM}\max_{1\le i\le n}
\|\bar{\bs}_{K,i}^O-\E(\bar{\bs}_{K,i}^O)\|_1
\ge x_n
\right\}
\le
\Pr\left\{
\max_{(a,b)\in\Theta_n}
\left\|
\frac{1}{b-a+1}\sum_{l=a}^b \beps_l
\right\|_1
\ge x_n
\right\}.
\end{equation}
Now fix $(a,b)\in\Theta_n$ and let $m=b-a+1$. Since
$$
\left\|
\frac{1}{m}\sum_{l=a}^b \beps_l
\right\|_1
\le
\sum_{j=1}^d
\left|
\frac{1}{m}\sum_{l=a}^b \epsilon_{l j}
\right|,
$$
we have, by the union bound,
\begin{align}
&\Pr\left\{
\max_{(a,b)\in\Theta_n}
\left\|
\frac{1}{b-a+1}\sum_{l=a}^b \beps_l
\right\|_1
\ge x_n
\right\}
\nonumber\\
&\qquad\le
\sum_{(a,b)\in\Theta_n}
\Pr\left\{
\left\|
\frac{1}{b-a+1}\sum_{l=a}^b \beps_l
\right\|_1
\ge x_n
\right\}
\nonumber\\
&\qquad\le
\sum_{(a,b)\in\Theta_n}\sum_{j=1}^d
\Pr\left\{
\left|
\frac{1}{b-a+1}\sum_{l=a}^b \epsilon_{l j}
\right|
\ge \frac{x_n}{d}
\right\}.
\label{eq:lemma1-step2}
\end{align}

For each fixed $(a,b)\in\Theta_n$ and $j\in\{1,\ldots,d\}$, the random variables
$\epsilon_{aj},\ldots,\epsilon_{bj}$ are independent, mean zero, and uniformly sub-exponential.
Hence, by Bernstein's inequality for sub-exponential random variables, there exist positive
constants $c_1$ and $c_2$, depending only on $M_1$, such that
$$
\Pr\left\{
\left|
\frac{1}{m}\sum_{l=a}^b \epsilon_{l j}
\right|
\ge t
\right\}
\le
2\exp\left[
-c_1 m \min\{t^2,t\}
\right],
\qquad t>0.
$$
Applying this bound with $t=x_n/d$ and using $m\gtrsim n/\log(n)$ for $(a,b)\in\Theta_n$,
we obtain
\begin{equation}\label{eq:lemma1-bernstein}
\Pr\left\{
\left|
\frac{1}{b-a+1}\sum_{l=a}^b \epsilon_{l j}
\right|
\ge \frac{x_n}{d}
\right\}
\lesssim
\exp\left\{
-c_2 \frac{n}{\log(n)} \min\{x_n^2,x_n\}
\right\}.
\end{equation}
Since $|\Theta_n|\le n^2$ and $d$ is fixed, combining
\eqref{eq:lemma1-step1}--\eqref{eq:lemma1-bernstein} yields
$$
\Pr\left\{
\max_{K\in\calM}\max_{1\le i\le n}
\|\bar{\bs}_{K,i}^O-\E(\bar{\bs}_{K,i}^O)\|_1
\ge x_n
\right\}
\lesssim
n^2
\exp\left\{
-c_2 \frac{n}{\log(n)} \min\{x_n^2,x_n\}
\right\}.
$$

Now choose $
x_n=A n^{-1/2}\log(n)$,
for a sufficiently large constant $A>0$. Then $x_n\to 0$, so for all sufficiently large $n$, $
\min\{x_n^2,x_n\}=x_n^2=A^2 n^{-1}\log^2(n)$,
and therefore
$$
\frac{n}{\log(n)}\min\{x_n^2,x_n\}
=
A^2\log(n).
$$
It follows that
$$
\Pr\left\{
\max_{K\in\calM}\max_{1\le i\le n}
\|\bar{\bs}_{K,i}^O-\E(\bar{\bs}_{K,i}^O)\|_1
\ge A n^{-1/2}\log(n)
\right\}
\lesssim
n^{\,2-c_2A^2}.
$$
By taking $A$ sufficiently large, we obtain
$$
\Pr\left\{
\max_{K\in\calM}\max_{1\le i\le n}
\|\bar{\bs}_{K,i}^O-\E(\bar{\bs}_{K,i}^O)\|_1
\gtrsim n^{-1/2}\log(n)
\right\}
\lesssim n^{-c}
$$
for some positive constant $c$. This completes the proof.
\end{proof}

Next lemma shows $(\xi_{K,J}^{(i)}  - \E[\xi_{K,J}^{(i)}])/ \sigma_{K,J}$ has sub-exponential tail under $H_{0,K}$ to fulfill our main theorem. 

\begin{lemma}\label{lemma2}
Let $
\xi_{K,J}^{(i)}
:=
\|\bs_i^E-\bar{\bs}_{K,i}^O\|_2^2-\|\bs_i^E-\bar{\bs}_{J,i}^O\|_2^2$,
and let $\mathcal F_O$ denote the $\sigma$-field generated by the observed sample used to construct
$\{\bar{\bs}_{K,i}^O: K\in\calM,\ 1\le i\le n\}$. Define the conditional variance scale $
\sigma_{K,J}^2
:=
\frac{1}{n}\sum_{i=1}^n \Var\!\left(\xi_{K,J}^{(i)}\mid \mathcal F_O\right)$.
Assume Conditions \ref{con1} and \ref{con2} hold. In addition, assume that for the pair $(K,J)$ under consideration,
the detected change-point sets $\calT_K$ and $\calT_J$ are nested. Then, under $H_{0,K}$, on the event
\begin{equation}\label{event:E}
 E_n
:=
\left\{
\max_{K\in\calM}\max_{1\le i\le n}
\|\bar{\bs}_{K,i}^O-\E(\bar{\bs}_{K,i}^O)\|_1
\le c n^{-1/2}\log(n)
\right\},   
\end{equation}
there exists a constant $C>0$ such that
$$
\max_{1\le i\le n}
\left\|
\frac{\xi_{K,J}^{(i)}-\E(\xi_{K,J}^{(i)}\mid \mathcal F_O)}{\sigma_{K,J}}
\right\|_{\psi_1\mid \mathcal F_O}
\le
C\sqrt{\log(n)}.
$$
Consequently,
$$
\Pr\left\{
\max_{1\le i\le n}
\left\|
\frac{\xi_{K,J}^{(i)}-\E(\xi_{K,J}^{(i)}\mid \mathcal F_O)}{\sigma_{K,J}}
\right\|_{\psi_1\mid \mathcal F_O}
\lesssim \sqrt{\log(n)}
\right\}
\to 1.
$$
\end{lemma}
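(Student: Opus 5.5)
Conditionally on $\mathcal F_O$, the quantities $\bar{\bs}_{K,i}^O$ and $\bar{\bs}_{J,i}^O$ are fixed vectors. We first expand the difference of squares:
$$
\xi_{K,J}^{(i)} = 2(\bs_i^E)^\top(\bar{\bs}_{J,i}^O-\bar{\bs}_{K,i}^O) + \|\bar{\bs}_{K,i}^O\|_2^2-\|\bar{\bs}_{J,i}^O\|_2^2 .
$$
Write $\bv_i:=\bar{\bs}_{J,i}^O-\bar{\bs}_{K,i}^O$. Then
$$
\xi_{K,J}^{(i)}-\E(\xi_{K,J}^{(i)}\mid\mathcal F_O)=2\bv_i^\top\beps_i^E .
$$
Here we use that the even sample is independent of $\mathcal F_O$. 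By Condition \ref{con1}(i), the conditional Orlicz norm of this quantity is at most $2M_1\|\bv_i\|_2$. The conditional variance is $4\bv_i^\top\bSigma_i\bv_i\ge 4M_2\|\bv_i\|_2^2$ by Condition \ref{con1}(iii), which gives
$$
\sigma_{K,J}^2\ge \frac{4M_2}{n}\sum_{i=1}^n\|\bv_i\|_2^2 .
$$
The claim therefore reduces to a deterministic ``flatness'' bound on the event $E_n$:
$$
\max_i\|\bv_i\|_2^2\lesssim \log(n)\cdot\frac1n\sum_{i=1}^n\|\bv_i\|_2^2 .
$$

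To prove this, we use the nested structure. Suppose $\calT_K\subset\calT_J$, or vice versa. Then $\bv_i$ is piecewise constant on the segments of the finer segmentation, and it vanishes on coarse segments that are not further split. The key observation is that every segment of the finer partition has length $\gtrsim n/\log(n)$ by Condition \ref{con2}. Let $i_0$ attain $\max_i\|\bv_i\|_2$. The segment containing $i_0$ contributes at least $(n/\log n)\|\bv_{i_0}\|_2^2$ to $\sum_i\|\bv_i\|_2^2$. Hence
$$
\frac1n\sum_i\|\bv_i\|_2^2\gtrsim \frac{\|\bv_{i_0}\|_2^2}{\log n},
$$
and so
$$
\max_i\frac{\|\bv_i\|_2}{\sigma_{K,J}}\lesssim \sqrt{\log(n)/M_2},
$$
which is the $C\sqrt{\log n}$ bound. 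This argument requires $\bv_i$ to be exactly constant on each finer segment. That holds because both $\bar{\bs}_{K,i}^O$ and $\bar{\bs}_{J,i}^O$ are segment averages, and nestedness guarantees that every finer segment lies inside a single coarser segment.

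The event $E_n$ and the null $H_{0,K}$ enter only to rule out degenerate cases. Under $H_{0,K}$ on $E_n$, we use $E_n$ to argue that the random $\bv_i$ is within $cn^{-1/2}\log n$ of its expectation uniformly over $i$. This is useful if one prefers to state the bound in terms of population quantities. It is not strictly needed for the ratio above, since the ratio is scale-free once $\bv\ne0$. If $\bv\equiv0$, the statistic is identically zero and the pair is trivially dropped. The probability statement then follows because $\Pr(E_n)\ge 1-O(n^{-c})\to1$ by Lemma \ref{lemma1}.

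The main obstacle is the length lower bound for the segment containing the maximizer. Condition \ref{con2} is stated for distinct change positions within a single $\calT_K$; in the union, that is the finer set, it is guaranteed by nestedness. The boundary segments $[1,\tau_1]$ and $[\tau_K,n]$ also need handling. I would address the latter by treating $\tau_0=0$ and $\tau_{K+1}=n$ as change positions covered by the same condition. If that is not acceptable, I would note that the boundary segments have length $\gtrsim n/\log n$ as well, which is implicit in Lemma \ref{lemma1}'s use of $\Theta_n$.
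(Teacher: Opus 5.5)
Your proposal is correct and follows the paper's proof essentially step by step. It uses the same expansion making $\xi_{K,J}^{(i)}-\E(\xi_{K,J}^{(i)}\mid\mathcal F_O)$ a linear form in $\bs_i^E$, the same $\psi_1$ bound by $\|\bar{\bs}_{K,i}^O-\bar{\bs}_{J,i}^O\|_2$ via Condition \ref{con1}(i), the same variance lower bound via Condition \ref{con1}(iii), and the same nesting plus Condition \ref{con2} argument: the block containing the maximizer has length $\gtrsim n/\log(n)$, so $\sigma_{K,J}\gtrsim \max_i\|\bar{\bs}_{K,i}^O-\bar{\bs}_{J,i}^O\|_2/\sqrt{\log(n)}$. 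The only slip is that your $2M_1$ should be $CM_1$, because Condition \ref{con1}(i) is stated for $\bs_i$ rather than the centered $\beps_i$; this is harmless. Your observation that $E_n$ and $H_{0,K}$ play no substantive role is accurate: the paper's proof says it works on $E_n$ but never uses it.
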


\begin{proof}
By Lemma \ref{lemma1}, $
\Pr(E_n)\to 1$. We work on the event $E_n$ throughout the proof. For fixed $K,J\in\calM$ and $1\le i\le n$, expand
$$
\begin{aligned}
\xi_{K,J}^{(i)}
&=
\|\bs_i^E-\bar{\bs}_{K,i}^O\|_2^2-\|\bs_i^E-\bar{\bs}_{J,i}^O\|_2^2 \\
&=
\|\bar{\bs}_{K,i}^O\|_2^2-\|\bar{\bs}_{J,i}^O\|_2^2
-2(\bs_i^E)^\top(\bar{\bs}_{K,i}^O-\bar{\bs}_{J,i}^O) \\
&=
(\bar{\bs}_{K,i}^O+\bar{\bs}_{J,i}^O)^\top \Delta_{K,J,i}
-2(\bs_i^E)^\top \Delta_{K,J,i},
\end{aligned}
$$ 
where $
\Delta_{K,J,i}:=\bar{\bs}_{K,i}^O-\bar{\bs}_{J,i}^O$. Conditioning on $\mathcal F_O$, the vectors $\bar{\bs}_{K,i}^O$, $\bar{\bs}_{J,i}^O$, and hence $\Delta_{K,J,i}$ are deterministic. Therefore,
$$
\xi_{K,J}^{(i)}-\E\!\left(\xi_{K,J}^{(i)}\mid \mathcal F_O\right)
=
-2\Bigl\{(\bs_i^E)^\top \Delta_{K,J,i}
-\E\!\left((\bs_i^E)^\top \Delta_{K,J,i}\mid \mathcal F_O\right)\Bigr\}.
$$
Since $\bar{\bs}_{K,i}^O$ and $\bar{\bs}_{J,i}^O$ are $\mathcal F_O$-measurable, it follows from Condition \ref{con1}(i) and the basic properties of the Orlicz norm that
\begin{equation}\label{eq:lemma2-psi1-upper}
\left\|
\xi_{K,J}^{(i)}-\E(\xi_{K,J}^{(i)}\mid \mathcal F_O)
\right\|_{\psi_1\mid \mathcal F_O}
\lesssim
\|\Delta_{K,J,i}\|_2.
\end{equation}
Indeed, writing $\Delta_{K,J,i}=(\delta_{i1},\ldots,\delta_{id})^\top$ and using that $d$ is fixed,
$$
\left\|(\bs_i^E)^\top\Delta_{K,J,i}\right\|_{\psi_1\mid \mathcal F_O}
\le
\sum_{j=1}^d |\delta_{ij}|\, \|s_{ij}^E\|_{\psi_1}
\lesssim
\|\Delta_{K,J,i}\|_1
\lesssim
\|\Delta_{K,J,i}\|_2.
$$

Next, we study the denominator. Since
$$
\xi_{K,J}^{(i)}-\E(\xi_{K,J}^{(i)}\mid \mathcal F_O)
=
-2\Bigl\{(\bs_i^E)^\top \Delta_{K,J,i}
-\E\!\left((\bs_i^E)^\top \Delta_{K,J,i}\mid \mathcal F_O\right)\Bigr\},
$$
we have
$$
\Var\!\left(\xi_{K,J}^{(i)}\mid \mathcal F_O\right)
=
4\,\Delta_{K,J,i}^\top \Var(\bs_i^E)\Delta_{K,J,i}.
$$
By Condition \ref{con1}(iii), there exists a constant $c_0>0$ such that
\begin{equation}\label{eq:lemma2-var-lower}
\Var\!\left(\xi_{K,J}^{(i)}\mid \mathcal F_O\right)
\ge
c_0 \|\Delta_{K,J,i}\|_2^2.
\end{equation}
Hence,
$$
\sigma_{K,J}^2
=
\frac{1}{n}\sum_{i=1}^n \Var\!\left(\xi_{K,J}^{(i)}\mid \mathcal F_O\right)
\gtrsim
\frac{1}{n}\sum_{i=1}^n \|\Delta_{K,J,i}\|_2^2.
$$

Now use the additional nesting assumption on $\calT_K$ and $\calT_J$. Since the two segmentations are nested,
the vector sequence $\{\Delta_{K,J,i}\}_{i=1}^n$ is piecewise constant on the finer partition, and each constant block has length at least of order $n/\log(n)$ by Condition \ref{con2}. Therefore, if
$$
M_{K,J}:=\max_{1\le i\le n}\|\Delta_{K,J,i}\|_2,
$$
then there are at least $c_1 n/\log(n)$ indices $i$ such that $\|\Delta_{K,J,i}\|_2=M_{K,J}$ for some constant $c_1>0$. It follows that
\begin{equation}\label{eq:lemma2-avg-max}
\frac{1}{n}\sum_{i=1}^n \|\Delta_{K,J,i}\|_2^2
\gtrsim
\frac{1}{\log(n)} M_{K,J}^2.
\end{equation}
Combining this with \eqref{eq:lemma2-var-lower}, we obtain
\begin{equation}\label{eq:lemma2-sigma-lower}
\sigma_{K,J}
\gtrsim
\frac{1}{\sqrt{\log(n)}}\, M_{K,J}.
\end{equation}

Finally, combining \eqref{eq:lemma2-psi1-upper} and \eqref{eq:lemma2-sigma-lower} yields, for every $1\le i\le n$,
$$
\left\|
\frac{\xi_{K,J}^{(i)}-\E(\xi_{K,J}^{(i)}\mid \mathcal F_O)}{\sigma_{K,J}}
\right\|_{\psi_1\mid \mathcal F_O}
\lesssim
\frac{\|\Delta_{K,J,i}\|_2}{M_{K,J}/\sqrt{\log(n)}}
\le
C\sqrt{\log(n)}
$$
for some constant $C>0$. Taking the maximum over $1\le i\le n$ gives
$$
\max_{1\le i\le n}
\left\|
\frac{\xi_{K,J}^{(i)}-\E(\xi_{K,J}^{(i)}\mid \mathcal F_O)}{\sigma_{K,J}}
\right\|_{\psi_1\mid \mathcal F_O}
\le
C\sqrt{\log(n)}
$$
on $E_n$. Since $\Pr(E_n)\to 1$, the conclusion follows.
\end{proof}

\begin{lemma}\label{lemma:mean}
Let $
\hat{\delta}_{K,J}
:=
\frac{1}{n}\sum_{i=1}^n \xi_{K,J}^{(i)}$, $\delta_{K,J}
:=
\frac{1}{n}\sum_{i=1}^n \E\!\left(\xi_{K,J}^{(i)} \mid \mathcal F_O\right)$,
where $\mathcal F_O$ denotes the $\sigma$-field generated by the observed sample used to construct
$\{\bar{\bs}_{K,i}^O: K\in\calM,\ 1\le i\le n\}$. Under Conditions \ref{con1} and \ref{con2}, and assume that for the pair $(K,J)$ under consideration,
the detected change-point sets $\calT_K$ and $\calT_J$ are nested, there exists a positive constant $c$ such that
$$
\Pr\left\{
\max_{J\ne K}
\left|
\frac{\hat{\delta}_{K,J}-\delta_{K,J}}{\sigma_{K,J}}
\right|
\gtrsim
n^{-1/2}\log(n)
\right\}
\lesssim n^{-c},
$$
where $
\sigma_{K,J}^2
:=
\frac{1}{n}\sum_{i=1}^n
\Var\!\left(\xi_{K,J}^{(i)} \mid \mathcal F_O\right)$ for  $J\in \calM\backslash\{K\}$.
\end{lemma}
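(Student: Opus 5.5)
Conditionally on $\mathcal F_O$ the summands $\xi_{K,J}^{(i)}$, $i=1,\ldots,n$, are independent, because they depend on the even sample only through $\bs_i^E$. So $\hat\delta_{K,J}-\delta_{K,J}=n^{-1}\sum_{i=1}^n\{\xi_{K,J}^{(i)}-\E(\xi_{K,J}^{(i)}\mid\mathcal F_O)\}$ is a normalized sum of conditionally independent, mean-zero variables. The plan is a conditional Bernstein inequality for sub-exponential variables, using Lemma \ref{lemma2} for the $\psi_1$ scale, followed by a union bound over $J\in\calM\backslash\{K\}$ and integration over $\mathcal F_O$.

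\textbf{Step 1.} Work on the event $E_n$ in \eqref{event:E}. By Lemma \ref{lemma1}, $\Pr(E_n^c)\lesssim n^{-c}$. On $E_n$, Lemma \ref{lemma2} gives that the standardized centered summands $\eta_i:=\{\xi_{K,J}^{(i)}-\E(\xi_{K,J}^{(i)}\mid\mathcal F_O)\}/\sigma_{K,J}$ satisfy $\max_i\|\eta_i\|_{\psi_1\mid\mathcal F_O}\le C\sqrt{\log n}$. This uses the nesting assumption. Moreover, by the definition of $\sigma_{K,J}$, $n^{-1}\sum_i\E(\eta_i^2\mid\mathcal F_O)=1$.

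\textbf{Step 2.} Apply the conditional Bernstein inequality for independent sub-exponential variables:
\begin{equation*}
\Pr\Bigl\{\Bigl|\frac1n\sum_{i=1}^n\eta_i\Bigr|\ge t\,\Big|\,\mathcal F_O\Bigr\}\le 2\exp\Bigl[-c\,n\min\Bigl\{\frac{t^2}{L_n^2},\frac{t}{L_n}\Bigr\}\Bigr],\qquad L_n:=C\sqrt{\log n}.
\end{equation*}
A sharper variant uses the unit average variance for the quadratic term and $L_n$ only in the linear term, but the crude version suffices. Take $t=An^{-1/2}\log n$. Then $t/L_n\asymp n^{-1/2}\sqrt{\log n}\to0$, so the minimum is the quadratic term. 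The exponent becomes $c\,n\cdot A^2n^{-1}\log^2 n/(C^2\log n)=c'A^2\log n$, and the bound is $2n^{-c'A^2}$.

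\textbf{Step 3.} Take a union bound over at most $K_{\max}\asymp\log n$ indices $J$; this costs only a $\log n$ factor. Then take expectation over $\mathcal F_O$, restricted to $E_n$, and add $\Pr(E_n^c)\lesssim n^{-c}$. Choosing $A$ large gives the claimed $n^{-c}$ bound.

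\textbf{Main obstacle.} The delicate point is Step 1, where the $\psi_1$ bound from Lemma \ref{lemma2} must be combined with Bernstein's inequality at the correct scale. The $\sqrt{\log n}$ inflation of the Orlicz norm comes from the ratio between the maximum and the average of $\|\Delta_{K,J,i}\|_2$. It is exactly absorbed because the target deviation $n^{-1/2}\log n$ is a $\sqrt{\log n}$ factor above the natural $n^{-1/2}\sqrt{\log n}$ scale.

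Two further points need checking. First, Lemma \ref{lemma2} should hold uniformly over the $J$ in the maximum, since nesting is assumed for each pair considered. Second, the bound is vacuous when $\sigma_{K,J}=0$; in that case $\xi_{K,J}^{(i)}$ is conditionally degenerate and that pair should simply be excluded.
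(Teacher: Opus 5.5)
Your proposal is correct and follows essentially the same route as the paper: condition on $\mathcal F_O$, work on the event $E_n$ from Lemma \ref{lemma1}, use the $\sqrt{\log(n)}$ $\psi_1$-scale supplied by Lemma \ref{lemma2}, and apply Bernstein's inequality at $t=An^{-1/2}\log(n)$ so that the quadratic branch gives an exponent of order $A^2\log(n)$. As in the paper, you then union-bound over the at most $K_{\max}$ indices $J$ and add $\Pr(E_n^c)$; your side remarks (the sharper variance-based Bernstein form and the degenerate case $\sigma_{K,J}=0$) are valid refinements that the paper does not spell out.
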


\begin{proof}
Recall that by Lemma \ref{lemma1}, $
\Pr(E_n^c)\lesssim n^{-c_1}$
for some positive constant $c_1$, where $E_n$ is defined in \eqref{event:E}. Fix $J\neq K$, and define
$$
Z_{i,J}
:=
\frac{\xi_{K,J}^{(i)}-\E(\xi_{K,J}^{(i)}\mid \mathcal F_O)}{\sigma_{K,J}},
\qquad i=1,\ldots,n.
$$
Then
$$
\frac{\hat{\delta}_{K,J}-\delta_{K,J}}{\sigma_{K,J}}
=
\frac{1}{n}\sum_{i=1}^n Z_{i,J}.
$$
Conditional on $\mathcal F_O$, the random variables $Z_{1,J},\ldots,Z_{n,J}$ are independent and satisfy $
\E(Z_{i,J}\mid \mathcal F_O)=0$ for $i=1,\ldots,n$.
Moreover, by Lemma \ref{lemma2}, on the event $E_n$, $
\max_{1\le i\le n}\|Z_{i,J}\|_{\psi_1\mid \mathcal F_O}
\lesssim \sqrt{\log(n)}$.

Therefore, conditional on $\mathcal F_O$ and on the event $E_n$, Bernstein's inequality for independent centered sub-exponential random variables yields
$$
\Pr\left\{
\left|
\frac{1}{n}\sum_{i=1}^n Z_{i,J}
\right|
\ge x_n
\ \middle|\ \mathcal F_O
\right\}
\le
2\exp\left[
-c_2 n
\min\left\{
\frac{x_n^2}{\log(n)},
\frac{x_n}{\sqrt{\log(n)}}
\right\}
\right]
$$
for some positive constant $c_2$. Hence, again on $E_n$,
$$
\Pr\left\{
\left|
\frac{\hat{\delta}_{K,J}-\delta_{K,J}}{\sigma_{K,J}}
\right|
\ge x_n
\ \middle|\ \mathcal F_O
\right\}
\le
2\exp\left[
-c_2 n
\min\left\{
\frac{x_n^2}{\log(n)},
\frac{x_n}{\sqrt{\log(n)}}
\right\}
\right].
$$
Applying the union bound over $J\in\calM\backslash\{K\}$, we obtain on $E_n$,
$$
\Pr\left\{
\max_{J\ne K}
\left|
\frac{\hat{\delta}_{K,J}-\delta_{K,J}}{\sigma_{K,J}}
\right|
\ge x_n
\ \middle|\ \mathcal F_O
\right\}
\le
2|\calM|
\exp\left[
-c_2 n
\min\left\{
\frac{x_n^2}{\log(n)},
\frac{x_n}{\sqrt{\log(n)}}
\right\}
\right].
$$
Since $|\calM|\le K_{\max}$, it follows that
$$
\Pr\left\{
\max_{J\ne K}
\left|
\frac{\hat{\delta}_{K,J}-\delta_{K,J}}{\sigma_{K,J}}
\right|
\ge x_n,
\ E_n
\right\}
\le
2K_{\max}
\exp\left[
-c_2 n
\min\left\{
\frac{x_n^2}{\log(n)},
\frac{x_n}{\sqrt{\log(n)}}
\right\}
\right].
$$

Now choose $
x_n=A n^{-1/2}\log(n)$, for a sufficiently large constant $A>0$. Then
$$
n\frac{x_n^2}{\log(n)}=A^2\log(n),
\qquad
n\frac{x_n}{\sqrt{\log(n)}}=A\sqrt{n\log(n)}.
$$
Hence, for all sufficiently large $n$,
$$
n
\min\left\{
\frac{x_n^2}{\log(n)},
\frac{x_n}{\sqrt{\log(n)}}
\right\}
=
A^2\log(n),
$$
and therefore
$$
\Pr\left\{
\max_{J\ne K}
\left|
\frac{\hat{\delta}_{K,J}-\delta_{K,J}}{\sigma_{K,J}}
\right|
\ge A n^{-1/2}\log(n),
\ E_n
\right\}
\le
2K_{\max} n^{-c_2A^2}.
$$
If $K_{\max}$ is fixed or grows at most polynomially in $n$, then by taking $A$ sufficiently large,
$2K_{\max} n^{-c_2A^2}\lesssim n^{-c_3}$ for some positive constant $c_3$. Finally,
$$
\begin{aligned}
 &\Pr\left\{
\max_{J\ne K}
\left|
\frac{\hat{\delta}_{K,J}-\delta_{K,J}}{\sigma_{K,J}}
\right|
\ge A n^{-1/2}\log(n)
\right\}\\
\le&
\Pr(E_n^c)
+
\Pr\left\{
\max_{J\ne K}
\left|
\frac{\hat{\delta}_{K,J}-\delta_{K,J}}{\sigma_{K,J}}
\right|
\ge A n^{-1/2}\log(n),
\ E_n
\right\}
\lesssim n^{-c}   
\end{aligned}
$$
for some positive constant $c$. This completes the proof.
\end{proof}

Let $\sigma^2_{K,J} = \sum_{i=1}^n\E[(\xi_{K,J}^{(i)} )^2]/n$ and $ \hat{\delta}_{K,J} = \sum_{i=1}^n \xi_{K,J}^{(i)} / n$ for $J \in \calM \backslash \{K\}$. The sample second moment matrix is defined as
$
\widehat{\Gamma}^{(K)} = \left(\frac{1}{n}  \sum_{i=1}^n  \xi_{K,J}^{(i)}  \xi_{K,J^\prime}^{(i)}\right)_{J, J^\prime} \in \mathbb{R}^{(|\calM| - 1) \times (|\calM| - 1)}$ with $J, J^\prime \in \calM \backslash \{K\} $ and its population analog $\Gamma^{(K)} = \left(\frac{1}{n} \sum_{i=1}^n\E(\xi_{K,J}^{(i)} \xi_{K,J^\prime}^{(i)}\mid \mathcal{F}_O) \right)_{J, J^\prime}$, where $\mathcal F_O$ denotes the $\sigma$-field generated by the observed sample used to construct
$\{\bar{\bs}_{K,i}^O: K\in\calM,\ 1\le i\le n\}$. Besides, let $D^{(K)} = \left(\diag(\Gamma^{(K)})\right)^{1/2} $ and $\widehat{D}^{(K)} = \left(\diag(\widehat{\Gamma}^{(K)})\right)^{1/2} $.
Furthermore, let $\widehat{H}^{(K)} = (\widehat{D}^{(K)})^{-1} \widehat{\Gamma}^{(K)} (\widehat{D}^{(K)})^{-1}$ and $H^{(K)} = (D^{(K)})^{-1}\Gamma^{(K)} (D^{(K)})^{-1} $.

\begin{lemma}\label{lemma3}
Assume Conditions \ref{con1} and \ref{con2} hold. In addition, assume the nesting condition in Lemma \ref{lemma2}, and that there exists a constant $c_0>0$ such that $
\min_{J\in\calM\backslash\{K\}} D^{(K)}_{JJ}\ge c_0$ with probability tending to one. If $K_{\max} \asymp \log(n)$, then there exists a positive constant $c$ such that
$$
\Pr\left\{
\max_{J,J^\prime\in\calM\backslash\{K\}}
\left|
\bigl(\widehat H^{(K)}-H^{(K)}\bigr)_{J,J^\prime}
\right|
\gtrsim
n^{-1/2}\log^{3/2}(n)
\right\}
\lesssim n^{-c}.
$$
\end{lemma}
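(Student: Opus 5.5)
The plan is to first prove an entrywise concentration bound for $\widehat\Gamma^{(K)}-\Gamma^{(K)}$, and then transfer it to the normalized matrices $\widehat H^{(K)}$ and $H^{(K)}$ by a perturbation argument.

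\textbf{Step 1: sub-exponential control of the products.} Conditionally on $\mathcal F_O$, each product $\xi_{K,J}^{(i)}\xi_{K,J'}^{(i)}$ is a product of two terms that are sub-exponential around their conditional means. Using the expansion from the proof of Lemma \ref{lemma2}, write
$$\xi_{K,J}^{(i)}=(\bar{\bs}_{K,i}^O+\bar{\bs}_{J,i}^O-2\bs_i^E)^\top\Delta_{K,J,i}.$$
Each factor is then a linear form in $\bs_i^E$ with an $\mathcal F_O$-measurable coefficient. On the event $E_n$ of Lemma \ref{lemma1}, $\bar{\bs}_{K,i}^O$ is bounded, since its mean is bounded by Condition \ref{con1}(iii). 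By Condition \ref{con1}(i), each factor therefore has $\psi_1$-norm $\lesssim\|\Delta_{K,J,i}\|_2$. A product of two $\psi_1$ variables is $\psi_{1/2}$, with $\|XY\|_{\psi_{1/2}}\le\|X\|_{\psi_1}\|Y\|_{\psi_1}$. After dividing by $D^{(K)}_{JJ}D^{(K)}_{J'J'}$, the nesting argument behind \eqref{eq:lemma2-sigma-lower} gives $\|\Delta_{K,J,i}\|_2/D^{(K)}_{JJ}\lesssim\sqrt{\log n}$, because $D^{(K)}_{JJ}\ge\sigma_{K,J}$ and the second moment dominates the variance. Hence the normalized products have $\psi_{1/2}$-norm $\lesssim\log n$, uniformly in $i,J,J'$.

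\textbf{Step 2: concentration of $\widehat\Gamma^{(K)}$.} Apply a Bernstein-type inequality for independent centered $\psi_{1/2}$ variables, conditional on $\mathcal F_O$. One option is the inequality of Kuchibhotla and Chakrabortty. Another is truncation at level $\log^{3}n$, combined with Bernstein for the bounded part and a union bound for the tail. Either route gives
$$\Pr\Bigl\{\bigl|(\widehat\Gamma^{(K)}-\Gamma^{(K)})_{JJ'}\bigr|/(D^{(K)}_{JJ}D^{(K)}_{J'J'})\ge t\mid\mathcal F_O\Bigr\}\le 2\exp\{-c\min(nt^2/\log^2 n,(\sqrt n t/\log n)^{1/2})\}.$$
Choosing $t=An^{-1/2}\log^{3/2}n$, the Gaussian term is $A^2\log n$, and the tail term is polynomially large and hence harmless. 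A union bound over the at most $K_{\max}^2\asymp\log^2 n$ pairs, together with $\Pr(E_n^c)\lesssim n^{-c}$, yields a relative error of $O(n^{-1/2}\log^{3/2}n)$ for every entry with probability $1-n^{-c}$. I expect this step to be the main obstacle: getting exactly the $\log^{3/2}$ rate requires careful handling of the heavy $\psi_{1/2}$ tail and of the extra $\sqrt{\log n}$ factor coming from the normalization.

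\textbf{Step 3: passage to correlations.} Let $\hat\gamma=\widehat\Gamma_{JJ'}/(D_{JJ}D_{J'J'})$ and $\gamma=H_{JJ'}$. Write $r_J=\widehat D_{JJ}^2/D_{JJ}^2$, so that $|r_J-1|\lesssim\epsilon_n:=n^{-1/2}\log^{3/2}n$ uniformly in $J$ by Step 2 with $J=J'$. Then
$$\widehat H_{JJ'}-H_{JJ'}=\frac{\hat\gamma}{\sqrt{r_Jr_{J'}}}-\gamma=(\hat\gamma-\gamma)(r_Jr_{J'})^{-1/2}+\gamma\bigl\{(r_Jr_{J'})^{-1/2}-1\bigr\}.$$
Since $|\gamma|\le1$ by Cauchy--Schwarz and $r_J\to1$, both terms are $O(\epsilon_n)$. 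The assumption $D_{JJ}\ge c_0$ is used only to guarantee that the normalizations are well defined and that the ratios are stable. This gives the claimed bound.
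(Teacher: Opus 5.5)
Your proposal is correct and follows the paper's skeleton: a $\psi_{1/2}$ bound of order $\log n$ for the normalized products, inherited from the $\sqrt{\log n}$ bound in Lemma \ref{lemma2}; a sub-Weibull Bernstein inequality at $t=An^{-1/2}\log^{3/2}n$ with a union bound over $K_{\max}^2$ pairs; and a perturbation step for the normalization. Where you differ, your route is cleaner.

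\textbf{Centering.} The paper centers first, setting $Z_{i,J}=\{\xi_{K,J}^{(i)}-\E(\xi_{K,J}^{(i)}\mid\mathcal F_O)\}/D^{(K)}_{JJ}$. It then bounds only the average of $Z_{i,J}Z_{i,J'}-\E(Z_{i,J}Z_{i,J'}\mid\mathcal F_O)$ and asserts that the remaining product-of-means term cancels. The exact identity, however, also contains the linear cross terms $n^{-1}\sum_i(m_{i,J'}Z_{i,J}+m_{i,J}Z_{i,J'})$, with $m_{i,J}=\E(\xi_{K,J}^{(i)}\mid\mathcal F_O)/D^{(K)}_{JJ}$. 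The paper never bounds these. They are harmless: since $|m_{i,J}|\lesssim\sqrt{\log n}$, sub-exponential Bernstein makes them $O(n^{-1/2}\log^{3/2}n)$. Your approach covers them automatically, because you bound the $\psi_1$-norm of the uncentered $\xi_{K,J}^{(i)}$ (bounded $\mathcal F_O$-measurable shift included) and center only the product.

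\textbf{Normalization.} The paper passes to absolute errors for $\widehat D^{(K)}_{JJ}$ and uses Lipschitz continuity of $x\mapsto x^{1/2}$ away from zero. That step leans on $D^{(K)}_{JJ}\ge c_0$, an event whose probability only tends to one, which sits uneasily with the claimed $n^{-c}$ bound. Your ratios $r_J$, together with $|H^{(K)}_{JJ'}|\le 1$, keep everything relative. You then need the lower bound on $D^{(K)}_{JJ}$ only for well-definedness.

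\textbf{One slip to fix.} The $\psi_{1/2}$ Bernstein exponent should be $\min\{nt^2/\log^2 n,(nt/\log n)^{1/2}\}$, not $\min\{nt^2/\log^2 n,(\sqrt n\,t/\log n)^{1/2}\}$.
\begin{itemize}
\item With your version, at $t=An^{-1/2}\log^{3/2}n$ the second term equals $A^{1/2}\log^{1/4}n$. This is not polynomially large and would not give an $n^{-c}$ bound.
\item With the correct exponent it is $A^{1/2}n^{1/4}\log^{1/4}n$, which is what your argument needs.
\end{itemize}
Similarly, if you use the truncation route, truncate at $C\log^3 n$ with $C$ large. If the $\psi_{1/2}$-norm is $C_1\log n$, truncating at exactly $\log^3 n$ leaves a per-summand tail of order $n^{-1/\sqrt{C_1}}$. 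The union bound over $n$ summands then need not vanish.
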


\begin{proof}
Recall that by Lemma \ref{lemma1}, $
\Pr(E_n^c)\lesssim n^{-c_1}$
for some $c_1>0$, where $E_n$ is defined in \ref{event:E}. We work on the event $E_n$ throughout the proof. For $J\in\calM\backslash\{K\}$, define
$$
Z_{i,J}
:=
\frac{\xi_{K,J}^{(i)}-\E(\xi_{K,J}^{(i)}\mid \mathcal F_O)}{D^{(K)}_{JJ}},
\qquad i=1,\ldots,n.
$$
By Lemma \ref{lemma2}, on $E_n$, $
\max_{J\in\calM\backslash\{K\}}\max_{1\le i\le n}
\|Z_{i,J}\|_{\psi_1\mid\mathcal F_O}
\lesssim \sqrt{\log(n)}$. Hence, for each pair $(J,J^\prime)$,
$$
U_{i,JJ^\prime}
:=
Z_{i,J}Z_{i,J^\prime}
-
\E\!\left(Z_{i,J}Z_{i,J^\prime}\mid\mathcal F_O\right)
$$
is conditionally centered and satisfies $
\max_{J,J^\prime}\max_{1\le i\le n}
\|U_{i,JJ^\prime}\|_{\psi_{1/2}\mid\mathcal F_O}
\lesssim \log(n)$,
because the product of two sub-exponential random variables is sub-Weibull of order $1/2$.

Now note that
$$
\begin{aligned}
\frac{\widehat{\Gamma}^{(K)}_{JJ^\prime}-\Gamma^{(K)}_{JJ^\prime}}
{D^{(K)}_{JJ}D^{(K)}_{J^\prime J^\prime}}
&=
\frac{1}{n}\sum_{i=1}^n
\left[
\frac{\xi_{K,J}^{(i)}\xi_{K,J^\prime}^{(i)}}{D^{(K)}_{JJ}D^{(K)}_{J^\prime J^\prime}}
-
\E\!\left(
\frac{\xi_{K,J}^{(i)}\xi_{K,J^\prime}^{(i)}}{D^{(K)}_{JJ}D^{(K)}_{J^\prime J^\prime}}
\middle|\mathcal F_O
\right)
\right] \\
&=
\frac{1}{n}\sum_{i=1}^n U_{i,JJ^\prime}
+
R_{JJ^\prime},
\end{aligned}
$$
where
$$
R_{JJ^\prime}
=
\frac{1}{n}\sum_{i=1}^n
\frac{
\E(\xi_{K,J}^{(i)}\mid\mathcal F_O)\E(\xi_{K,J^\prime}^{(i)}\mid\mathcal F_O)
}{
D^{(K)}_{JJ}D^{(K)}_{J^\prime J^\prime}
}.
$$
By the proof of Lemma \ref{lemma2}, on $E_n$,
$$
\max_{J}\max_{1\le i\le n}
\left|
\frac{\E(\xi_{K,J}^{(i)}\mid\mathcal F_O)}{D^{(K)}_{JJ}}
\right|
\lesssim \sqrt{\log(n)},
$$
so $|R_{JJ^\prime}|\lesssim \log(n)$ uniformly in $(J,J^\prime)$. Since $R_{JJ^\prime}$ appears in both
$\widehat{\Gamma}^{(K)}_{JJ^\prime}$ and $\Gamma^{(K)}_{JJ^\prime}$, it cancels in the centered difference above. Therefore it suffices to control the average of $\{U_{i,JJ^\prime}\}_{i=1}^n$.

Conditional on $\mathcal F_O$, the variables $U_{1,JJ^\prime},\ldots,U_{n,JJ^\prime},$ are independent, centered, and uniformly sub-Weibull of order $1/2$ with parameter of order $\log(n)$. Hence, Bernstein's inequality for sub-Weibull random variables yields
$$
\Pr\left\{
\left|
\frac{1}{n}\sum_{i=1}^n U_{i,JJ^\prime}
\right|
\ge t
\ \middle|\ \mathcal F_O
\right\}
\le
2\exp\left[
-c_2
\min\left\{
\frac{nt^2}{\log^2(n)},
\left(\frac{nt}{\log(n)}\right)^{1/2}
\right\}
\right]
$$
for some $c_2>0$. Applying the union bound over all $(J,J^\prime)\in(\calM\backslash\{K\})^2$, we obtain on $E_n$,
\begin{equation}\label{eq:lemma3-gamma-bound}
\Pr\left\{
\max_{J,J^\prime}
\left|
\frac{\widehat{\Gamma}^{(K)}_{JJ^\prime}-\Gamma^{(K)}_{JJ^\prime}}
{D^{(K)}_{JJ}D^{(K)}_{J^\prime J^\prime}}
\right|
\ge t
\ \middle|\ \mathcal F_O
\right\}
\le
2(K_{\max}-1)^2
\exp\left[
-c_2
\min\left\{
\frac{nt^2}{\log^2(n)},
\left(\frac{nt}{\log(n)}\right)^{1/2}
\right\}
\right].
\end{equation}

Choose $t=A n^{-1/2}\log^{3/2}(n)$,
for a sufficiently large constant $A>0$. Then
$$
\frac{nt^2}{\log^2(n)}=A^2\log(n),
\qquad
\left(\frac{nt}{\log(n)}\right)^{1/2}
=
A^{1/2} n^{1/4}\log^{1/4}(n),
$$
so the first term determines the rate for all sufficiently large $n$. Since $K_{\max}$ is fixed or grows at most polynomially in $n$, \eqref{eq:lemma3-gamma-bound} implies that
\begin{equation}\label{eq:lemma3-middle}
\Pr\left\{
\max_{J,J^\prime}
\left|
\frac{\widehat{\Gamma}^{(K)}_{JJ^\prime}-\Gamma^{(K)}_{JJ^\prime}}
{D^{(K)}_{JJ}D^{(K)}_{J^\prime J^\prime}}
\right|
\gtrsim
n^{-1/2}\log^{3/2}(n),
\ E_n
\right\}
\lesssim n^{-c_3}
\end{equation}
for some $c_3>0$.

Next, for each $J$, $
\widehat D^{(K)}_{JJ}
=
\bigl(\widehat\Gamma^{(K)}_{JJ}\bigr)^{1/2}$, and $
D^{(K)}_{JJ}
=
\bigl(\Gamma^{(K)}_{JJ}\bigr)^{1/2}$.
By \eqref{eq:lemma3-middle} with $J=J^\prime$,
\[
\max_{J}
\left|
\widehat\Gamma^{(K)}_{JJ}-\Gamma^{(K)}_{JJ}
\right|
\lesssim
n^{-1/2}\log^{3/2}(n)
\]
with probability at least $1-O(n^{-c_3})$ on $E_n$. Since $\min_J D^{(K)}_{JJ}\ge c_0>0$ with probability tending to one, the map $x\mapsto x^{1/2}$ is Lipschitz on a neighborhood of $\{\Gamma^{(K)}_{JJ}\}_J$, and therefore
\[
\max_J
\left|
\widehat D^{(K)}_{JJ}-D^{(K)}_{JJ}
\right|
\lesssim
n^{-1/2}\log^{3/2}(n)
\]
with probability at least $1-O(n^{-c_4})$ on $E_n$. Consequently,
$$
\max_J
\left|
(\widehat D^{(K)})^{-1}_{JJ}-(D^{(K)})^{-1}_{JJ}
\right|
\lesssim
n^{-1/2}\log^{3/2}(n)
$$
with probability at least $1-O(n^{-c_4})$ on $E_n$.

Finally, write
\begin{align*}
\widehat H^{(K)}-H^{(K)}
&=
(\widehat D^{(K)})^{-1}\widehat\Gamma^{(K)}(\widehat D^{(K)})^{-1}
-
(D^{(K)})^{-1}\Gamma^{(K)}(D^{(K)})^{-1} \\
&=
\bigl[(\widehat D^{(K)})^{-1}-(D^{(K)})^{-1}\bigr]\widehat\Gamma^{(K)}(\widehat D^{(K)})^{-1} \\
&\quad
+(D^{(K)})^{-1}\bigl[\widehat\Gamma^{(K)}-\Gamma^{(K)}\bigr](\widehat D^{(K)})^{-1} \\
&\quad
+(D^{(K)})^{-1}\Gamma^{(K)}\bigl[(\widehat D^{(K)})^{-1}-(D^{(K)})^{-1}\bigr].
\end{align*}
Because $(K_{\max}-1)$ is finite or polynomially growing, and all diagonal entries of $D^{(K)}$ are bounded away from zero with probability tending to one, each term on the right-hand side is of order
$$
O_{P}\!\left(n^{-1/2}\log^{3/2}(n)\right)
$$
in the elementwise maximum norm. Therefore,
$$
\Pr\left\{
\max_{J,J^\prime}
\left|
\bigl(\widehat H^{(K)}-H^{(K)}\bigr)_{J,J^\prime}
\right|
\gtrsim
n^{-1/2}\log^{3/2}(n)
\right\}
\lesssim
\Pr(E_n^c)+n^{-c_5}
\lesssim n^{-c}
$$
for some positive constant $c$. This completes the proof.
\end{proof}

\subsection{Proof of Theorem \ref{thm1}}
We ignore the Monte Carlo variability from bootstrap resampling and regard $\hat p_K$ as the limiting bootstrap $p$-value when the bootstrap sample size $B\to\infty$. Let $
\gamma_n=n^{-1/2}\log^{3/2}(n)$ and $\eta_n=n^{-1/2}\log(n)$.
Define the event
$$
F_n = \left\{ \|\widehat H^{(K)}-H^{(K)}\|_\infty \le C_1\gamma_n,\ \max_{J\in\calM\backslash\{K\}} \left| \frac{\hat\delta_{K,J}-\delta_{K,J}}{D^{(K)}_{JJ}} \right| \le C_1\eta_n \right\}.
$$
By Lemma \ref{lemma:mean} and Lemma \ref{lemma3}, there exists a constant $c_1>0$ such that
$$
\Pr(F_n^c)\lesssim n^{-c_1}.
$$

Moreover, by the diagonal part of Lemma \ref{lemma3} and the assumption $\min_{J\in\calM\backslash\{K\}} D^{(K)}_{JJ}\ge c_0>0$ with probability tending to one, we have on $F_n$,
$$
\max_{J\in\calM\backslash\{K\}} \left| \frac{\widehat D^{(K)}_{JJ}}{D^{(K)}_{JJ}}-1 \right| \lesssim \gamma_n, \qquad \max_{J\in\calM\backslash\{K\}} \left| \frac{D^{(K)}_{JJ}}{\widehat D^{(K)}_{JJ}}-1 \right| \lesssim \gamma_n.
$$

Let $G^{(K)} = \left(\frac{1}{n} \sum_{i=1}^n \Cov\bigl(\xi_{K,J}^{(i)}, \xi_{K,J^\prime}^{(i)} \mid \mathcal{F}_O\bigr)\right)_{J,J^\prime \in \calM\backslash\{K\}}$ denote the conditional covariance matrix, and define its normalized counterpart $F^{(K)} = (D^{(K)})^{-1} G^{(K)} (D^{(K)})^{-1}$. Let $T^*(A)$ denote a centered Gaussian vector with covariance matrix $A$, and let $z_{\max}(\alpha,A)$ be the upper $\alpha$-quantile of $\|T^*(A)\|_\infty$.

We first prove part (1). On $F_n$,
$$
\begin{aligned}
&\sqrt n \max_{J\in\calM\backslash\{K\}} \frac{\hat\delta_{K,J}}{\widehat D^{(K)}_{JJ}} \\
\le& \sqrt n \max_{J\in\calM\backslash\{K\}} \frac{\hat\delta_{K,J}-\delta_{K,J}}{D^{(K)}_{JJ}} \left(\frac{D^{(K)}_{JJ}}{\widehat D^{(K)}_{JJ}}\right) + \sqrt n \max_{J\in\calM\backslash\{K\}} \frac{\delta_{K,J}}{\widehat D^{(K)}_{JJ}} \\ 
\le& \sqrt n \max_{J\in\calM\backslash\{K\}} \frac{\hat\delta_{K,J}-\delta_{K,J}}{D^{(K)}_{JJ}} + C_2\sqrt n\,\eta_n\gamma_n + \sqrt n \max_{J\in\calM\backslash\{K\}} \frac{\delta_{K,J}}{D^{(K)}_{JJ}}(1+C_2\gamma_n)
\end{aligned}
$$
for some constant $C_2>0$.

Under the approximate null assumption, $\max_{J\in\calM\backslash\{K\}} \frac{\delta_{K,J}}{D^{(K)}_{JJ}} \le x_n (n\log n)^{-1/2}$ where $x_n=o(1)$, and therefore
$$
\sqrt n \max_{J\in\calM\backslash\{K\}} \frac{\delta_{K,J}}{D^{(K)}_{JJ}} = o\!\left((\log n)^{-1/2}\right) = o(1).
$$
Also,
$$
\sqrt n\,\eta_n\gamma_n = \sqrt n\cdot n^{-1/2}\log(n)\cdot n^{-1/2}\log^{3/2}(n) = n^{-1/2}\log^{5/2}(n) = o(1).
$$
Hence the previous display implies
$$
\sqrt n \max_{J\in\calM\backslash\{K\}} \frac{\hat\delta_{K,J}}{\widehat D^{(K)}_{JJ}} \le \sqrt n \max_{J\in\calM\backslash\{K\}} \frac{\hat\delta_{K,J}-\delta_{K,J}}{D^{(K)}_{JJ}} +o(1)
$$
uniformly on $F_n$.

Now define the centered normalized sum vector
$$
S_K = n^{-1/2} \left( \sum_{i=1}^n \frac{\xi_{K,J}^{(i)}-\E(\xi_{K,J}^{(i)}\mid \mathcal F_O)}{D^{(K)}_{JJ}} \right)_{J\in\calM\backslash\{K\}}.
$$
Conditional on $\mathcal F_O$, Lemma \ref{lemma2} provides the required tail control, and a Gaussian approximation for maxima of sums of independent random vectors yields
$$
\sup_{t\in\mathbb R} \left| \Pr\bigl(\|S_K\|_\infty\le t \mid \mathcal F_O\bigr) - \Pr\bigl(\|T^*(F^{(K)})\|_\infty\le t \mid \mathcal F_O\bigr) \right| = o(1).
$$
Consequently, for some sequence $\varpi_n\to 0$,
$$
\Pr\left\{ \sqrt n \max_{J\in\calM\backslash\{K\}} \frac{\hat\delta_{K,J}-\delta_{K,J}}{D^{(K)}_{JJ}} \ge z_{\max}(\alpha+\varpi_n,F^{(K)}) \right\} \le \alpha+\varpi_n+o(1).
$$

Next, note that $\Gamma^{(K)} = G^{(K)} + \mu^{(K)}(\mu^{(K)})^\top$, where $\mu^{(K)} = \left( \frac{\delta_{K,J}}{D^{(K)}_{JJ}} \right)_{J\in\calM\backslash\{K\}}$, so
$$
H^{(K)}-F^{(K)} = \mu^{(K)}(\mu^{(K)})^\top.
$$
Under the approximate null assumption, $\|\mu^{(K)}\|_\infty \le x_n(n\log n)^{-1/2} = o(1)$. Therefore, $\|H^{(K)}-F^{(K)}\|_\infty = o(1)$. By the continuity of the Gaussian max quantiles with respect to the covariance matrix entries, this implies
$$
z_{\max}(\alpha+\varpi_n,H^{(K)}) \ge z_{\max}(\alpha+\varpi_n,F^{(K)})-o(1).
$$
Furthermore, on $F_n$, we have $\|\widehat H^{(K)}-H^{(K)}\|_\infty \le C_1\gamma_n = o(1)$, and therefore
$$
z_{\max}(\alpha,\widehat H^{(K)}) \ge z_{\max}(\alpha+\varpi_n,H^{(K)})-o(1)
$$
for a possibly different sequence $\varpi_n\to 0$.

Combining the previous displays, we obtain
$$
\Pr\{\hat p_K\le \alpha\} = \Pr\left\{ \sqrt n \max_{J\in\calM\backslash\{K\}} \frac{\hat\delta_{K,J}}{\widehat D^{(K)}_{JJ}} \ge z_{\max}(\alpha,\widehat H^{(K)}) \right\} \le \alpha+o(1).
$$
Equivalently,
$$
\Pr\left\{H_{0,K}\text{ is not rejected at level }\alpha\right\} \ge 1-\alpha+o(1).
$$

We now prove part (2). Suppose there exists some $J^\ast\in\calM\backslash\{K\}$ such that
$$
\frac{\delta_{K,J^\ast}}{D^{(K)}_{J^\ast J^\ast}} \ge c\, n^{-1/2}\log(n)
$$
for a sufficiently large constant $c>0$. On $F_n$,
$$
\frac{\hat\delta_{K,J^\ast}}{\widehat D^{(K)}_{J^\ast J^\ast}} = \left( \frac{\delta_{K,J^\ast}}{D^{(K)}_{J^\ast J^\ast}} + \frac{\hat\delta_{K,J^\ast}-\delta_{K,J^\ast}}{D^{(K)}_{J^\ast J^\ast}} \right) \frac{D^{(K)}_{J^\ast J^\ast}}{\widehat D^{(K)}_{J^\ast J^\ast}}.
$$
Hence, on $F_n$,
$$
\frac{\hat\delta_{K,J^\ast}}{\widehat D^{(K)}_{J^\ast J^\ast}} \ge \left( c\,n^{-1/2}\log(n)-C_1 n^{-1/2}\log(n) \right)(1-C_2\gamma_n).
$$
By choosing $c$ sufficiently large, we obtain for all sufficiently large $n$,
$$
\frac{\hat\delta_{K,J^\ast}}{\widehat D^{(K)}_{J^\ast J^\ast}} \ge c_2 n^{-1/2}\log(n)
$$
for some constant $c_2>0$. Therefore,
$$
\sqrt n \max_{J\in\calM\backslash\{K\}} \frac{\hat\delta_{K,J}}{\widehat D^{(K)}_{JJ}} \ge c_2 \log(n)
$$
on $F_n$.

On the other hand, since $\alpha\ge n^{-1}$ and $K_{\max}\asymp \log(n)$, a union bound together with Mills' inequality implies
$$
z_{\max}(\alpha,\widehat H^{(K)}) \lesssim \sqrt{\log(K_{\max})+\log(\alpha^{-1})} \lesssim \sqrt{\log(n)}
$$
with probability tending to one. Consequently, on $F_n$ and for all sufficiently large $n$,
$$
\sqrt n \max_{J\in\calM\backslash\{K\}} \frac{\hat\delta_{K,J}}{\widehat D^{(K)}_{JJ}} > z_{\max}(\alpha,\widehat H^{(K)}).
$$
Thus,
$$
\Pr\{\hat p_K\le \alpha\} \ge \Pr(F_n)-o(1) = 1-o(1),
$$
which is equivalent to
$$
\Pr\left\{H_{0,K}\text{ is not rejected at level }\alpha\right\} = o(1).
$$
This completes the proof.

\section{Proofs for Theorems in Section \ref{sec3.2}}

\begin{lemma}
	\label{tech_lemma1}
	Suppose that Conditions \ref{con1}-\ref{con5} hold. Then, \\
	(i) $\max _{K \in \calM}\E\left\{\calS_{\epsilon^E}\left(\mathcal{T}_K\right)-\calS_{\epsilon^E}\left(\mathcal{T}_K \cup \mathcal{T}^{*}\right)\right\}=o\left(M_1 \log \log \bar{\lambda}\right)$.\\
	(ii) $\max _{K \in \calM}\E\left\{\calS_{\epsilon^E}\left(\mathcal{T}^*\right)-\calS_{\epsilon^E}\left(\mathcal{T}_K \cup \mathcal{T}^*\right)\right\}=o\left(M_1 \log \log \bar{\lambda}\right).$
\end{lemma}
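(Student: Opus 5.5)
The plan is to use the explicit projection form of the segmentation cost and reduce both expectations to variances of a few partial sums near true change-points. Because $\calT_K$ is built from the odd sample, it is independent of $\beps^E$. So I would condition on $\calT_K$, bound the conditional expectation uniformly over all admissible segmentations, and then take the maximum over $K\in\calM$.

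For a fixed partition $\calT$, write $\calS_{\beps^E}(\calT)=\sum_i\|\beps_i^E\|_2^2-\sum_{\text{segments }I}|I|^{-1}\|\sum_{i\in I}\beps_i^E\|_2^2$. Then refining $\calT$ to $\calT\cup\calT^*$ gives
\[
\calS_{\beps^E}(\calT)-\calS_{\beps^E}(\calT\cup\calT^*)=\sum_{I}\Bigl\{\sum_{I'\subset I}|I'|^{-1}\Bigl\|\sum_{i\in I'}\beps_i^E\Bigr\|_2^2-|I|^{-1}\Bigl\|\sum_{i\in I}\beps_i^E\Bigr\|_2^2\Bigr\}.
\]
Here the inner sum runs over the sub-pieces $I'$ into which the true change-points cut the segment $I$. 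Only segments $I$ that contain at least one $\tau_k^*$ contribute, and there are at most $K^*$ of them. Taking conditional expectations, each term $|I'|^{-1}\E\|\sum_{i\in I'}\beps_i^E\|_2^2$ is at most $\lambda_{\max}(\E\bs_i\bs_i^\top)\,d\le dM_1$ by Condition \ref{con1}(iii). The subtracted term is nonnegative. A fixed segment is cut into at most $K^*+1$ pieces, and the total number of cuts is $K^*$, so the expected difference is at most $2dK^*M_1$ uniformly in $\calT_K$.

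This crude bound is $O(K^*M_1)$, while the target is $o(M_1\log\log\bar\lambda)$. Condition \ref{con3}(i) gives $K^*(\log(K^*\vee e))^2=o(\log\log\bar\lambda)$, so $K^*M_1=o(M_1\log\log\bar\lambda)$ and part (i) follows. No maximal inequality is needed, because the expectation is taken after conditioning on the independent odd-sample segmentation. The iterated-log terms are there only because the paper's lemma is stated in the form that covers the in-sample odd version (where $\calT_K$ depends on the noise). For the even-sample statement, the conditioning argument already suffices.

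Part (ii) is analogous. Now the refinement is $\calT^*\to\calT^*\cup\calT_K$, and the number of added cuts is $K\le K_{\max}$. Repeating the computation gives an expected difference of at most $2dKM_1\le 2dK_{\max}M_1$. This is the main obstacle: $K_{\max}\asymp\log n$ need not be $o(\log\log\bar\lambda)$, so the crude count fails. I would instead use the conditional independence more carefully. Conditional on the segmentation, $\calS_{\beps^E}(\calT^*)-\calS_{\beps^E}(\calT^*\cup\calT_K)$ is a quadratic form $\beps^{E\top}(P_{\text{fine}}-P_{\text{coarse}})\beps^E$ in a difference of projection matrices. Its expectation is $\tr\{(P_{\text{fine}}-P_{\text{coarse}})\bSigma\}$, which is bounded by $M_1$ times the rank difference, that is, $dK M_1$.

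So the honest bound is $O(K_{\max}M_1)$. I would then invoke Condition \ref{con3}(ii), $(K_{\max}\log K_{\max})^{1/2}=o(\log\log\bar\lambda)$. This is weaker than what is required: it controls $K_{\max}^{1/2}$ rather than $K_{\max}$, so the rank count still falls short by a factor of $K_{\max}^{1/2}$. I would try to close that gap using Condition \ref{con4}(i), by showing that among the cuts of $\calT_K$, only the $K^*$ matched ones near true change-points carry extra variance. The plan is to argue that the trace contribution of the remaining cuts is cancelled in expectation against the corresponding terms of $\calS_{\beps^E}(\calT_K)$. If that cancellation cannot be established, I would fall back to stating (ii) with rate $O(K_{\max}M_1)$.
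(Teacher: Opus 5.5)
Your part (i) is correct, and it is sharper than the paper's argument. The paper applies Lemma 14 of \cite{pein2021cross} and sums a variance term over \emph{all} $K+K^*+1\le 2K_{\max}+1$ segments of the merged partition, which gives $\lesssim M_1K_{\max}$. You note instead that a segment of $\calT_K$ contributes only if it contains a true change-point in its interior. That gives $2dK^*M_1$ and closes (i) from the stated Condition \ref{con3}(i) alone, with no assumption on $K_{\max}$.

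For part (ii), the cancellation step you propose cannot work, and no argument can give the stated rate under the stated conditions. Conditional on the odd sample, the quantity in (ii) involves only $\calS_{\beps^E}(\calT^*)$ and $\calS_{\beps^E}(\calT_K\cup\calT^*)$. Since $\Cov(\beps_i)=\bSigma^{(k)}$ is constant on each true segment, its expectation is \emph{exactly} $\sum_k m_k\tr\bSigma^{(k)}$, where $m_k$ is the number of points of $\calT_K$ strictly inside the $k$th true segment. By Condition \ref{con1}(iii) this is at least $dM_2|\calT_K\setminus\calT^*|\ge dM_2(K-K^*)$. There is no $\calS_{\beps^E}(\calT_K)$ term in (ii) to cancel against. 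Condition \ref{con4}(i) only locates $K^*$ of the estimated cuts, so it cannot offset the other $K-K^*$.

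Hence, given Condition \ref{con3}(i), (ii) holds precisely when $K_{\max}=o(\log\log\bar\lambda)$. Condition \ref{con3}(ii) does not imply this; it only yields $K_{\max}\log K_{\max}=o((\log\log\bar\lambda)^2)$. The paper's proof uses exactly your crude count for both parts and then invokes ``$K_{\max}=o(\log\log\bar\lambda)$ in Condition 3 (iii)''. That hypothesis does not appear among the stated conditions, and it is incompatible with $K_{\max}\asymp\log n$ in Theorem \ref{thm1}, since $\bar\lambda\le 2n$. So your fallback is the right conclusion: drop the cancellation plan and state (ii) under the added hypothesis $K_{\max}=o(\log\log\bar\lambda)$. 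It then follows at once from your trace bound $dK_{\max}M_1$.
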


\begin{proof}
	For a fixed $K$, let $m_k$ be the number of true change-points strictly between $\tau_{k}^K$ and $\tau_{k+1}^K$. We denote the merged partition points within this interval as $\tau_{k,0}^K := \tau_{k}^K < \tau_{k, 1}^K < \ldots < \tau_{k, m_k}^K < \tau_{k, m_k+1}^K := \tau_{k+1}^K$. Given $\calZ_O$, $\calT_K$ is a fixed change-point set. According to Lemma 14 in \cite{pein2021cross}, we have
	$$
	\E\left\{\calS_{\epsilon^E}\left(\calT_K\right)-\calS_{\epsilon^E}\left(\mathcal{T}_K \cup \mathcal{T}^*\right) \right\} \leq 2 \sum_{k=0}^K \sum_{l=0}^{m_k}\left(\tau_{k, l+1}^K-\tau_{k,l}^K\right)\E\left(\bar{\epsilon}_{\tau_{k, l}^K: \tau_{k, l+1}^K}\right)^2.
	$$
	Based on the independence of the errors $\epsilon_i$, for each term we have
	$$
	\left(\tau_{k, l+1}^K-\tau_{k,l }^K\right)\E\left(\bar{\epsilon}_{\tau_{k, l}^K: \tau_{k, l+1}^K}\right)^2 =\left(\tau_{k, l+1}^K-\tau_{k,l }^K\right)^{-1}\sum_{i =\tau_{k,l }^K+ 1 }^{\tau_{k, l+1}^K}\E[\epsilon_i^2] \le M_1.
	$$
	Note that the total number of segments in the merged partition $\calT_K \cup \calT^*$ is exactly $\sum_{k = 0}^K (m_k + 1) \le K + K^* + 1 \le 2K_{\max} + 1$. Hence, we have
	$$
	\E\left\{\calS_{\epsilon^E}\left(\mathcal{T}_K\right)-\calS_{\epsilon^E}\left(\mathcal{T}_K \cup \mathcal{T}^*\right) \right\} \leq 2M_1 (2K_{\max} + 1) \lesssim M_1 K_{\max}.
	$$
	Therefore,
	$
	\max_{K \in \calM} \E\left\{\calS_{\epsilon^E}\left(\mathcal{T}_K\right)-\calS_{\epsilon^E}\left(\mathcal{T}_K \cup \mathcal{T}^*\right) \right\} \lesssim M_1 K_{\max}
	$. By assuming $K_{\max} = o(\log\log\bar{\lambda})$ in Condition \ref{con3} (iii), we obtain $M_1 K_{\max} = o(M_1\log\log\bar{\lambda})$, which implies statement (i). The exact same deduction applies to the true change-point set $\calT^*$, yielding statement (ii), which we omit here.
\end{proof}

Let $\calT^*$ be the true change positions, $\calT_{K^*} $ be the estimated change positions when the number of change-points is correctly specified.

Let $\calT^*$ be the true change positions, and $\calT_{K^*} $ be the estimated change positions when the number of change-points is correctly specified.

\begin{lemma}
	\label{tech_lemma2}
	Suppose that Conditions \ref{con1}-\ref{con5} hold. Then\\
	(i) $\min _{K \in \calM_l}\E\left\{\calS_{s^E}\left(\mathcal{T}_K\right)-\calS_{s^E}\left(\mathcal{T}^*\right)\right\} \geq \underline{\lambda}M_3\min _{K \in \calM_l} \left(\sum_{\tau_k^* \in \mathcal{I}^*_{lK}} \Delta_k^2\right).$\\
	(ii) $\max _{K \in \calM}\E\left\{\calS_{s^E}\left(\mathcal{T}^*\right)-\calS_{s^E}\left(\mathcal{T}_K \cup \mathcal{T}^*\right)\right\}=o\left(M_1\log \log \bar{\lambda}\right).$\\
	(iii) $\E\left\{\calS_{s^E}\left(\mathcal{T}_{K^*}\right)-\calS_{s^E}\left(\mathcal{T}^*\right)\right\}=o\left(M_1 \log \log \bar{\lambda}\right).$
\end{lemma}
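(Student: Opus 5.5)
The plan is to reduce each part to a signal-plus-noise decomposition of $\calS_{s^E}$ and then apply Lemma \ref{tech_lemma1} to the noise. Write $\bs_i^E=\bmu_i^E+\beps_i^E$. For any partition $\calT$, $\calS_{s^E}(\calT)$ splits into three pieces:
\[
\calS_{s^E}(\calT)=\calS_{\mu^E}(\calT)+\calS_{\epsilon^E}(\calT)+2\sum_{k}\sum_{i}(\bmu_i-\bar{\bmu}_{\calT,i})^\top(\beps_i-\bar{\beps}_{\calT,i}).
\]
Conditional on $\calZ_O$, every partition involved ($\calT_K$, $\calT^*$, $\calT_K\cup\calT^*$) is fixed, and the errors are independent and mean zero. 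So the cross term has zero conditional expectation. In each part it therefore suffices to compare the signal parts $\calS_{\mu^E}$ exactly and to bound the noise parts $\calS_{\epsilon^E}$ by Lemma \ref{tech_lemma1}. Throughout, the even-sample $\bmu$ sequence is piecewise constant on $\calT^*$, up to the halving of indices, which I treat as exact.

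\textbf{Part (ii).} Since $\bmu$ is constant on each segment of $\calT^*$, it is also constant on each segment of the refinement $\calT_K\cup\calT^*$. Hence
\[
\calS_{\mu^E}(\calT^*)=\calS_{\mu^E}(\calT_K\cup\calT^*)=0 .
\]
The expectation in (ii) then reduces to $\E\{\calS_{\epsilon^E}(\calT^*)-\calS_{\epsilon^E}(\calT_K\cup\calT^*)\}$. This is $o(M_1\log\log\bar\lambda)$ uniformly in $K$ by Lemma \ref{tech_lemma1}(ii).

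\textbf{Part (iii).} Write
\[
\calS(\calT_{K^*})-\calS(\calT^*)=\bigl[\calS(\calT_{K^*})-\calS(\calT_{K^*}\cup\calT^*)\bigr]+\bigl[\calS(\calT_{K^*}\cup\calT^*)-\calS(\calT^*)\bigr].
\]
The second bracket is nonpositive, because a refinement lowers the in-sample RSS, so it can be dropped. For the first bracket, the noise part is handled by Lemma \ref{tech_lemma1}(i). The signal part is $\calS_{\mu^E}(\calT_{K^*})$, since $\calS_{\mu^E}(\calT_{K^*}\cup\calT^*)=0$.

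To bound $\calS_{\mu^E}(\calT_{K^*})$, I use Condition \ref{con4}(i) in its $K=K^*$ analogue: each true $\tau_k^*$ has an estimated point within $b_n$. Then every segment of $\calT_{K^*}$ contains at most a contaminating piece of length at most $b_n$ adjacent to a jump of size $\Delta_k$. A direct computation of the within-segment variance of a two-level step gives a bound of $\lesssim b_n\Delta_k^2$ per change. This yields
\[
\calS_{\mu^E}(\calT_{K^*})\lesssim \sum_{k} b_n\Delta_k^2=o(M_1\log\log\bar\lambda).
\]
The events of probability tending to one will be handled by working on them. The expectation on the complement needs some care; I would argue that $\calS_{\mu^E}$ is bounded by $n\Delta_{(K^*)}^2$ and that the probability of the bad event is small. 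I regard this as a technical point rather than the crux.

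\textbf{Part (i).} Write
\[
\calS(\calT_K)-\calS(\calT^*)=\bigl[\calS(\calT_K)-\calS(\calT_K\cup\calT^*)\bigr]-\bigl[\calS(\calT^*)-\calS(\calT_K\cup\calT^*)\bigr].
\]
Taking expectations, the noise contributions are $o(M_1\log\log\bar\lambda)$ by Lemma \ref{tech_lemma1}. The signal contribution is $\calS_{\mu^E}(\calT_K)=\sum_i\|\bmu_i-\bar{\bmu}_{K,i}\|_2^2$. Restricting the sum to the windows $(\tau_k^*-\underline\lambda/4,\tau_k^*+\underline\lambda/4]$ for $\tau_k^*\in\calI^*_{lK}$, which are disjoint, and applying \eqref{con:LB2} gives
\[
\calS_{\mu^E}(\calT_K)\ge M_3\underline\lambda\sum_{\calI^*_{lK}}\Delta_k^2 .
\]
The remaining task, and the main obstacle, is absorbing the $o(M_1\log\log\bar\lambda)$ noise term into this lower bound. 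Condition \ref{con5} gives
\[
\underline\lambda\Delta_{(1)}^2\gg K^*M_1(\log\bar\lambda)^2\gg M_1\log\log\bar\lambda ,
\]
so the noise term is negligible relative to the signal. The cleanest route would absorb it into a slightly smaller constant $M_3$, which is all the statement needs up to constants. The step I would scrutinize most is that the windows are centred in the even-sample index scale consistently with how $\calI^*_{lK}$ was defined on the odd sample.
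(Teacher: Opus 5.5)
Your proposal is correct. For (i) and (ii) it is the paper's argument. In (ii) the signal vanishes on $\calT^*$ and on its refinement $\calT_K\cup\calT^*$, so Lemma \ref{tech_lemma1}(ii) applies directly. In (i) the paper makes the same split through $\calT_K\cup\calT^*$ for the noise, uses the same disjoint windows with \eqref{con:LB2} for the signal, and uses Condition \ref{con5} to absorb the $o(M_1\log\log\bar{\lambda})$ remainder. It, too, really obtains the bound only up to a $1-o(1)$ factor on $M_3$.

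The one real difference is in (iii). The paper expands $\calS_{s^E}(\calT_{K^*})-\calS_{s^E}(\calT^*)$ directly as $A_1-A_2+A_3$:
\begin{itemize}
\item $A_1=\calS_{\mu^E}(\calT_{K^*})\lesssim\sum_k b_n\Delta_k^2$, which it takes from \cite{pein2021cross} rather than computing as you do;
\item $A_2,A_3\lesssim M_1K^*$ are the segment-mean noise terms on $\calT_{K^*}$ and $\calT^*$.
\end{itemize}
That expansion needs only $K^*=o(\log\log\bar{\lambda})$ from Condition \ref{con3}(i). Since all three terms are nonnegative, it also gives a two-sided bound at once.

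Your route through $\calT_{K^*}\cup\calT^*$ uses Lemma \ref{tech_lemma1}(i) instead. After you discard the nonpositive bracket, it yields only the upper bound. That is the direction actually needed in the proof of Theorem \ref{thm2}, and the matching lower bound follows at once from your part (ii).

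Two side remarks.
\begin{itemize}
\item The paper also applies Condition \ref{con4}(i) at $K=K^*$ without comment, so your explicit flag is warranted.
\item No argument on the complement event is needed. $\E$ is taken conditionally on $\calZ_O$, with $\calT_{K^*}$ fixed, and the bound is asserted on the high-probability event. The fix you sketch would not work in any case, because Condition \ref{con4} gives no rate for the bad event.
\end{itemize}
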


\begin{proof}
	The proof of this lemma is inspired by Lemma 17 in \cite{pein2021cross}.
	For statement (ii), we observe that $\calS_{s^E}\left(\mathcal{T}^*\right)-\calS_{s^E}\left(\mathcal{T}_K \cup \mathcal{T}^*\right)=\calS_{\epsilon^E}\left(\mathcal{T}^*\right)-\calS_{\epsilon^E}\left(\mathcal{T}_K \cup \mathcal{T}^*\right)$, since the true signal $\mu_i$ is constant on both partitions and cancels out. Hence, statement (ii) directly follows from Lemma \ref{tech_lemma1} (ii).
	
	Now, we show (iii). By definition,
	\begin{equation}
		\label{app:ref}
		\begin{aligned}
			\calS_{s^E}\left(\mathcal{T}_{K^*}\right) =& \sum_{k=0}^{K^*} \sum_{i=\tau_{k}^{K^*}+1}^{\tau^{K^*}_{ k+1}}\left(s_i^E-\bar{s}^E_{\tau_{k}^{K^*}: \tau_{ k+1}^{K^*}}\right)^2 =\sum_{k=0}^{K^*} \sum_{i=\tau_{k}^{K^*}+1}^{\tau^{K^*}_{ k+1}}\left(\mu_i + \epsilon^E_i-\bar{\mu}_{\tau_{k}^{K^*}: \tau_{ k+1}^{K^*}} - \bar{\epsilon}^E_{\tau_{k}^{K^*}: \tau_{ k+1}^{K^*}}\right)^2  \\
			=&\sum_{k=0}^{K^*} \sum_{i=\tau_{k}^{K^*}+1}^{\tau^{K^*}_{ k+1}}\left(\mu_i -\bar{\mu}_{\tau_{k}^{K^*}: \tau_{ k+1}^{K^*}} \right)^2 +2 \sum_{k=0}^{K^*} \sum_{i=\tau_{k}^{K^*}+1}^{\tau^{K^*}_{ k+1}} \epsilon_i^E\left( \mu_i - \bar{\mu}_{\tau_{k}^{K^*}: \tau_{ k+1}^{K^*}}\right)  + \sum_{i=1}^n (\epsilon_i^E)^2\\
			& - \sum_{k=0}^{K^*}\left(\tau_{ k+1}^{K^*}- \tau_{ k}^{K^*}\right)(\bar{\epsilon}_{\tau_{ k}^{K^*}: \tau_{ k+1}^{K^*}}^E)^2. 
		\end{aligned}
	\end{equation}
	Similarly, for $\calS_{s^E}\left(\mathcal{T}^*\right)$,
	$$
	\begin{aligned}
		\calS_{s^E}\left(\mathcal{T}^*\right) =& \sum_{k=0}^{K^*} \sum_{i=\tau_{k}^{*}+1}^{\tau^{*}_{ k+1}}\left(s_i^E-\bar{s}^E_{\tau_{k}^{*}: \tau_{ k+1}^{*}}\right)^2 = \sum_{k=0}^{K^*} \sum_{i=\tau_{k}^{*}+1}^{\tau^{*}_{ k+1}}\left(\epsilon_i^E-\bar{\epsilon}^E_{\tau_{k}^{*}: \tau_{ k+1}^{*}}\right)^2\\
		=& \sum_{i=1}^n (\epsilon_i^E)^2- \sum_{k=0}^{K^*}\left(\tau_{ k+1}^{*}-\tau_{ k}^{*}\right)(\bar{\epsilon}_{\tau_{ k}^{*}: \tau_{ k+1}^{*}}^E)^2.
	\end{aligned}
	$$ 
	The expectation of their difference is
	$$
	\begin{aligned}
		\E\left\{\calS_{s^E}\left(\mathcal{T}_{K^*}\right)-\calS_{s^E}\left(\mathcal{T}^*\right)\right\} 
		=&\sum_{k=0}^{K^*} \sum_{i=\tau_{k}^{K^*}+1}^{\tau^{K^*}_{ k+1}}\left(\mu_i-\bar{\mu}_{\tau_{k}^{K^*}: \tau_{ k+1}^{K^*}}\right)^2
		-\sum_{k=0}^{K^*}\left(\tau_{ k+1}^{K^*}-\tau_{ k}^{K^*}\right)\E\left(\bar{\epsilon}_{\tau_{ k}^{K^*}: \tau_{ k+1}^{K^*}}^E\right)^2\\
		&+\sum_{k=0}^{K^*}\left(\tau_{ k+1}^*-\tau_{ k}^*\right)\E\left(\bar{\epsilon}_{\tau_{ k}^*: \tau_{ k+1}^*}^E\right)^2 \\
		:=& A_1 - A_2 + A_3.
	\end{aligned}
	$$
	For $A_1$, according to the deduction in Lemma 17 in \cite{pein2021cross}, the first term is $A_1 = O\big( \sum_{k=1}^{K^*} b_n \Delta_k^2 \big)$. 
	For $A_2$, we notice that by Condition \ref{con1},
	$$
	A_2=\sum_{k=0}^{K^*} \left(\tau_{ k+1}^{K^*}-\tau_{ k}^{K^*}\right)^{-1} \sum_{i = \tau_{ k}^{K^*}+1}^{\tau_{ k+1}^{K^*}}\E(\epsilon_i^2) \lesssim M_1K^*.
	$$
	The exact same reasoning gives $A_3 \lesssim M_1 K^* $. Hence, 
	$$
	\E\left\{\calS_{s^E}\left(\mathcal{T}_{K^*}\right)-\calS_{s^E}\left(\mathcal{T}^*\right)\right\} \lesssim \sum_{k=1}^{K^*} b_n \Delta_k^2 + M_1 K^*.
	$$
	Statement (iii) then follows from Condition \ref{con4} (i), which guarantees $\sum_{k=1}^{K^*} b_n \Delta_k^2 = o(M_1 \log\log\bar{\lambda})$, and Condition \ref{con3} (i), which ensures $K^* = o(\log\log\bar{\lambda})$.
	
	We now show statement (i). Based on a similar deduction as \eqref{app:ref}, we can decompose the difference directly as:
	$$
	\E\left\{\calS_{s^E}\left(\mathcal{T}_K\right)-\calS_{s^E}\left(\mathcal{T}^*\right)\right\} = \sum_{k=0}^K \sum_{i=\tau_{k}^K+1}^{\tau_{ k+1}^K}\left(\mu_i-\bar{\mu}_{\tau_{k}^K: \tau_{k+1}^K}\right)^2 + \E\left\{\calS_{\epsilon^E}\left(\mathcal{T}_K\right)-\calS_{\epsilon^E}\left(\mathcal{T}^*\right)\right\}.
	$$
	For the noise component, Lemma \ref{tech_lemma1} yields
	\begin{equation}
		\label{tech_eq1}
		\begin{aligned}
			\max_{K \in \calM_l} \left|\E\left\{\calS_{\epsilon^E}\left(\mathcal{T}_K\right)-\calS_{\epsilon^E}\left(\mathcal{T}^*\right)\right\} \right| &\leq \max _{K \in \calM_l}\E\left\{\calS_{\epsilon^E}\left(\mathcal{T}_K\right)-\calS_{\epsilon^E}\left(\mathcal{T}_K \cup \mathcal{T}^*\right)\right\}\\
			&\quad +\max _{K \in \calM_l}\E\left\{\calS_{\epsilon^E}\left(\mathcal{T}^*\right)-\calS_{\epsilon^E}\left(\mathcal{T}_K \cup \mathcal{T}^*\right)\right\} \\
			&= o(M_1 \log \log \bar{\lambda}).
		\end{aligned}
	\end{equation}
	For the signal component, since $K \in \calM_l$ is a lack-of-fit model, Condition \ref{con4} (ii) guarantees there exists a subset of change-points $\tau_{k}^* \in \calI_{lK}^*$ such that there is no estimated change-point within $[\tau_{k}^* - \underline{\lambda}/4, \tau_{k}^* + \underline{\lambda}/4]$. Hence, using Condition \ref{con:LB2},
	$$
    \begin{aligned}
      \min_{K \in \calM_l}\sum_{k=0}^K \sum_{i=\tau_{k}^K+1}^{\tau_{ k+1}^K}\left(\mu_i-\bar{\mu}_{\tau_{k}^K: \tau_{k+1}^K}\right)^2  \ge& \min_{K \in \calM_l} \sum_{\tau_k^* \in \calI^*_{lK}}\sum_{i=\tau_k^*-\frac{\underline{\lambda}}{4}+1}^{\tau_k^*+\frac{\underline{\lambda}}{4}}\left(\mu_i-\bar{\mu}_{\tau_{k}^K: \tau_{k+1}^K}\right)^2 \\
      \ge& \min_{K \in \calM_l} M_3 \underline{\lambda} \sum_{\tau_k^* \in \calI^*_{lK}} \Delta_k^2.  
    \end{aligned}
	$$
	Summarizing the above results, we have
	$$
	\min_{K \in \calM_l} \E\left\{\calS_{s^E}\left(\mathcal{T}_K\right)-\calS_{s^E}\left(\mathcal{T}^*\right)\right\} \ge \min_{K \in \calM_l} M_3 \underline{\lambda} \sum_{\tau_k^* \in \calI^*_{lK}} \Delta_k^2 - o(M_1 \log \log \bar{\lambda}).
	$$
	By Condition \ref{con5}, the right-hand side is strictly dominated by the positive first term. Hence, statement (i) concludes.
\end{proof}

\subsection{Proof of Theorem \ref{thm2}}

Now, we focus on the main theorem. Our strategy is to apply Theorem \ref{thm1}, which requires showing that
$$
\max_{K \in \calM\backslash \{K^*\}} \frac{\delta_{K^*,K}}{\sigma_{K^*,K}} \le x_n \sqrt{\frac{1}{n \log(n)}},
$$
for sufficiently large $n$. Recall that $\sigma^2_{K^*,K}$ can be lower bounded as follows:
$$
\begin{aligned}
	\sigma^2_{K^*,K} \ge &\frac{1}{n} \sum_{i=1}^n \Var\left( \|\bs_i^E - \bar{\bs}_{K^*,i}^O\|_2^2 - \|\bs_i^E - \bar{\bs}_{K,i}^O \|_2^2\right)\\
	=& \frac{1}{n} \sum_{i=1}^n \Var\left(\|\bar{\bs}^O_{K^*,i} \|_2^2 - \|\bar{\bs}^O_{K,i}\|^2_2 - 2 (\bs_i^E)^{\top} (\bar{\bs}^O_{K^*,i}  - \bar{\bs}^O_{K,i})\right)\\
	=& \frac{4}{n} \sum_{i=1}^n (\bar{\bs}^O_{K^*,i}  - \bar{\bs}^O_{K,i})^\top \Var\left(  \bs_i^E\right)(\bar{\bs}^O_{K^*,i}  - \bar{\bs}^O_{K,i}).
\end{aligned}
$$
Hence, for any $K \ne K^*$, we have $\sigma_{K^*,K}^2 \ge  \frac{4 M_2}{n} \sum_{i=1}^n \|\bar{\bs}^O_{K^*,i}- \bar{\bs}_{K,i}^O \|^2_2$. By the assumptions stated in the theorem, $\min_{K \ne K^*} \max_{i=1,\ldots,n}\|\bar{\bs}^O_{K^*,i}- \bar{\bs}_{K,i}^O \|^2_2 \gtrsim \Delta_{(K^*)}^2$ with asymptotic probability $1$. Furthermore, because the estimated change-points set is nested and the distance between adjoining change-points is at least $cn/\log(n)$, summing this maximum discrepancy over the fraction of points $n/\log(n)$ divided by the total $n$ yields a factor of $1/\log(n)$. Thus, $\sigma_{K^*,K}^2 \gtrsim M_2 \log^{-1}(n)\Delta_{(K^*)}^2$, which strictly implies:
$$
\sigma_{K^*,K} \gtrsim \sqrt{M_2} \Delta_{(K^*)} \frac{1}{\sqrt{\log(n)}},
$$
for sufficiently large $n$. 

Because $\delta_{K^*,K} = -\delta_{K,K^*}$, obtaining an upper bound for the ratio $\delta_{K^*,K} / \sigma_{K^*,K}$ is equivalent to bounding $\delta_{K,K^*}$ from below. Therefore, using our bound on $\sigma_{K^*,K}$, it suffices to show that
\begin{equation}
	\label{equ:thm2goal}
	\min_{K \in \calM\backslash \{K^*\}} {\delta_{K,K^*}} \gtrsim - \sqrt{M_2} x_n \Delta_{(K^*)} \frac{1}{\sqrt{n\log^2(n)}},
\end{equation}
with asymptotic probability $1$. Let $\calS_{x, y}\left(\mathcal{T}_K\right) = \sum_{k=0}^K \sum_{i=\tau_k +1}^{\tau_{k+1}} (x_i - \bar{x}_{\tau_k, \tau_{k+1}})(y_i - \bar{y}_{\tau_k, \tau_{k+1}})$. Taking the expectation $\E_E[\cdot]$ with respect to the even sample (conditional on the odd sample so that $\mathcal{T}_K$ is fixed), algebraic expansion yields:
$$
\begin{aligned}
	n\delta_{K, K^*} =& \E_E\left\{\calS_{s^E}\left(\mathcal{T}_K\right)-\calS_{s^E}\left(\mathcal{T}_{K^*}\right)\right\}-\left\{\calS_{\epsilon^O}\left(\mathcal{T}_K\right)-\calS_{\epsilon^O}\left(\mathcal{T}_{K^*}\right)\right\}-\E_E\left\{\calS_{\epsilon^E}\left(\mathcal{T}_K\right)-\calS_{\epsilon^E}\left(\mathcal{T}_{K^*}\right)\right\} \\
	& +2\E_E\left\{\calS_{\epsilon^O, \epsilon^E}\left(\mathcal{T}_K\right)-\calS_{\epsilon^O, \epsilon^E}\left(\mathcal{T}_{K^*}\right)\right\} \\
	:=&  A_K - B_K - C_K + 2D_K.
\end{aligned}
$$
In the following, we aim to bound $A_K$, $B_K$, $C_K$, and $D_K$ such that \eqref{equ:thm2goal} holds. We consider two cases: lack-of-fit $K \in \calM_l$ (controlled primarily via the minimum jump $\Delta_{(1)}$) and over-fit $K \in \calM_o$ (controlled via the maximum jump $\Delta_{(K^*)}$). 

\textbf{Step 1:} (Lack-of-fit scenario).
First, we focus on $K \in \calM_l$. For the signal term $A_K$, we decompose it as:
$$
\min _{K\in \calM_l} A_K=\min _{K \in \calM_l} \E_E\left\{\calS_{s^E}\left(\mathcal{T}_K\right)-\calS_{s^E}\left(\mathcal{T}^*\right)\right\} - \E_E\left\{\calS_{s^E}\left(\mathcal{T}_{K^*}\right)-\calS_{s^E}\left(\mathcal{T}^*\right)\right\}.
$$
Based on Lemma \ref{tech_lemma2} (i) and Condition \ref{con5} (which ensures sufficient signal strength via the minimum jump size $\Delta_{(1)}$),
$$
\min _{K \in \calM_l}\E_E\left\{\calS_{s^E}\left(\mathcal{T}_K\right)-\calS_{s^E}\left(\mathcal{T}^{*}\right)\right\} \geq \underline{\lambda}M_3\min _{K \in \calM_l} \left(\sum_{k \in \mathcal{I}^*_{lK}} \Delta_k^2\right).
$$
The second term is bounded by $o(M_1 \log\log\bar{\lambda})$ according to Lemma \ref{tech_lemma2} (iii). Hence,
$$
\min _{K \in \calM_l} A_K \ge \underline{\lambda}M_3\min _{K\in \calM_l} \left(\sum_{k \in \mathcal{I}^*_{lK}} \Delta_k^2\right) - o(M_1\log\log\bar{\lambda}).
$$
To lower bound $-B_K$, we bound the maximum absolute deviation of $B_K$ using the triangle inequality over the merged partition $\mathcal{T}_K \cup \calT^*$:
$$
\begin{aligned}
	\max_{K \in \calM_l} |B_K| \leq& \max_{K \in \calM_l} \left|\calS_{\epsilon^O}\left(\mathcal{T}_K\right)-\calS_{\epsilon^O}\left(\mathcal{T}_{K} \cup \calT^*\right)\right| + \max_{K \in \calM_l} \left|\calS_{\epsilon^O}\left(\mathcal{T}^*\right)-\calS_{\epsilon^O}\left(\mathcal{T}_{K} \cup \calT^*\right)\right|   \\
	&+ \left|\calS_{\epsilon^O}\left(\mathcal{T}^*\right)-\calS_{\epsilon^O}\left(\mathcal{T}_{K^*} \cup \calT^* \right)\right|  +  \left|\calS_{\epsilon^O}\left(\mathcal{T}_{K^*}\right)-\calS_{\epsilon^O}\left(\mathcal{T}_{K^*} \cup \calT^*\right)\right|. 
\end{aligned}
$$
It follows from Lemma 18 in \cite{pein2021cross} that $\max_{K \in \calM_l} |B_K| = O_P(M_1 K^* (\log\bar{\lambda})^2)$. For the third term, Lemma \ref{tech_eq1} yields $\max_{K \in \calM_l} |C_K|  = o(M_1\log\log\bar{\lambda})$. For the cross term $D_K$, applying a similar triangle inequality expansion and Lemma 15 in \cite{pein2021cross}, we have $\max_{K \in \calM_l} |D_K| \le o(M_1 \log\log\bar{\lambda}) + O_P(K^* M_1 (\log\bar{\lambda})^2)$. 

Noting that $M_1 \log\log\bar{\lambda} \lesssim  M_1 K^* (\log\bar{\lambda})^2$, we combine the above bounds to obtain:
\begin{equation}
	\label{thm2:e1}
	n\delta_{K, K^*} \gtrsim \underline{\lambda}M_3\min _{K\in \calM_l} \left(\sum_{k \in \mathcal{I}^*_{lK}} \Delta_k^2\right) - M_1 K^* (\log\bar{\lambda})^2,
\end{equation}
with asymptotic probability $1$ as $n \to \infty$. According to Condition \ref{con5}, the signal term strictly dominates the noise term, guaranteeing that $n\delta_{K, K^*} > 0$. Because $\delta_{K,K^*} > 0$ implies $\delta_{K^*,K} < 0$, the ratio $\delta_{K^*,K} / \sigma_{K^*,K}$ is strictly negative. Thus, under-fit models trivially satisfy the approximate-null upper bound requirement of Theorem \ref{thm1}.

\textbf{Step 2:} (Over-fit scenario). Next, we consider $K \in \calM_o$. For the signal term $A_K$,
$$
\begin{aligned}
	\min _{K \in \calM_o}A_K=& \min _{K \in \calM_o}\E_E\left\{\calS_{s^E}\left(\mathcal{T}_K\right)-\calS_{s^E}\left(\mathcal{T}_{K^*}\right)\right\} \\
	\geq & \min _{K \in \calM_o}\E_E\left\{\calS_{s^E}\left(\mathcal{T}_{K}\right)-\calS_{s^E}\left(\mathcal{T}_{K} \cup \mathcal{T}^*\right)\right\} \\
	& -\max _{K \in \calM_o}\E_E\left\{\calS_{s^E}\left(\mathcal{T}^*\right)-\calS_{s^E}\left(\mathcal{T}_{K} \cup \mathcal{T}^*\right)\right\} - \left|\E_E\left\{\calS_{s^E}\left(\mathcal{T}_{K^*}\right)-\calS_{s^E}\left(\mathcal{T}^*\right)\right\}\right|.
\end{aligned}
$$
Based on Lemma \ref{tech_lemma2} (ii) and (iii), the last two terms are both $o(M_1\log\log\bar{\lambda})$. Hence, $\min _{K \in \calM_o} A_K \ge  \min _{K \in \calM_o}\E_E\left\{\calS_{s^E}\left(\mathcal{T}_{K}\right)-\calS_{s^E}\left(\mathcal{T}_{K} \cup \mathcal{T}^*\right)\right\}  - o(M_1 \log\log\bar{\lambda})$.

For the second term $B_K$, applying the decomposition through $\calT_K \cup \calT^*$ and using Lemma 18 in \cite{pein2021cross}, we obtain:
$$
\min_{K\in \calM_o} B_K \ge \min _{K \in \calM_o}\left\{\calS_{\epsilon^O}\left(\mathcal{T}^*\right)-\calS_{\epsilon^O}\left(\mathcal{T}_{K} \cup \mathcal{T}^*\right)\right\} - o_P\left(M_1 \log \log \bar{\lambda}\right).
$$
For the third term $C_K$, a similar decomposition using Lemma \ref{tech_lemma1} gives $\max_{K \in \calM_o} |C_K| = O\left(M_1 \log \log \bar{\lambda}\right)$. 
For the last term $D_K$, according to Lemma 19 in \cite{pein2021cross}, we have
$$
\min_{K \in \calM_o} D_K \ge -o\left(\left|\min _{K \in \calM_o}\left\{\calS_{\epsilon^O}\left(\mathcal{T}^*\right)-\calS_{\epsilon^O}\left(\mathcal{T}_{K} \cup \mathcal{T}^*\right)\right\}\right|\right) - o_P(M_1 \log\log\bar{\lambda}).
$$
Combining these elements for sufficiently large $n$, we have:
\begin{equation}
	\begin{aligned}
		\label{thm2:e2}
		\min_{K \in \calM_o} n\delta_{K, K^*} \gtrsim& \min _{K \in \calM_o}\E_E\left\{\calS_{s^E}\left(\mathcal{T}_{K}\right)-\calS_{s^E}\left(\mathcal{T}_{K} \cup \mathcal{T}^*\right)\right\} \\
		&+\min _{K \in \calM_o}\left\{\calS_{\epsilon^O}\left(\mathcal{T}^*\right)-\calS_{\epsilon^O}\left(\mathcal{T}_{K} \cup \mathcal{T}^*\right)\right\}-M_1 \log \log \bar{\lambda} \\
		\gtrsim& -M_1\log \log \bar{\lambda}.
	\end{aligned}
\end{equation}
The final inequality holds with asymptotic probability $1$ because the terms $\E_E\{\calS_{s^E}(\mathcal{T}_{K}) - \calS_{s^E}(\mathcal{T}_{K} \cup \mathcal{T}^*)\}$ and $\{\calS_{\epsilon^O}(\mathcal{T}^*) - \calS_{\epsilon^O}(\mathcal{T}_{K} \cup \mathcal{T}^*)\}$ represent the deterministic reduction in signal sum of squares and the realized noise-part sum of squares improvement from partition refinement, respectively, both of which can be bounded below by zero. Dividing by $n$, we obtain $\min_{K \in \calM_o} \delta_{K,K^*}\gtrsim -M_1\log \log \bar{\lambda} / n$. 

To satisfy condition \eqref{equ:thm2goal}, we require:
$$
\frac{M_1\log \log \bar{\lambda}}{n} \le \sqrt{M_2} x_n  \Delta_{(K^*)} \frac{1}{\sqrt{n\log^2(n)}},
$$
which simplifies to the requirement that $\Delta_{(K^*)} \gtrsim x_n^{-1} \log\log(\bar{\lambda}) \frac{\log(n)}{\sqrt{n}}$. Squaring this yields the necessary lower bound on the maximum jump size given in the theorem statement, implying $\Pr\left\{K^* \in \calA \right\} \ge 1- \alpha + o(1)$.

\subsection{Proof of Theorem \ref{thm3}}

Let $\calB_{1n}$ and $\calB_{2n}$ be defined as in Definition \ref{con6}. First, we establish an upper bound for the standard deviation $\sigma_{K, K^*}$. Recalling the variance expansion, we have:
$$
\sigma^2_{K, K^*} \lesssim  \frac{1}{n} \sum_{i=1}^n  \| \bar{\bs}^O_{K,i}- \bar{\bs}^O_{K^*,i}\|_2^2.
$$
Under the theorem assumption that $\max_{i=1,\ldots, n}\| \bar{\bs}^O_{K,i}- \bar{\bs}^O_{K^*,i}\|_2^2 \lesssim \Delta_{(K^*)}^2$ for all $K$, this yields $\sigma^2_{K, K^*} \lesssim \Delta_{(K^*)}^2$. Therefore, taking the square root gives $\sigma_{K, K*} \lesssim \Delta_{(K^*)}$ with asymptotic probability $1$.

Next, we establish the behavior of $\delta_{K,K^*}$ for models in $\calB_{1n}$ and $\calB_{2n}$ separately:

\textbf{Case 1: $K \in \calB_{1n} \subset \calM_l$.} 
From the lack-of-fit bound established in equation \eqref{thm2:e1}, we have:
$$
n\delta_{K, K^*} \gtrsim \underline{\lambda}M_3 \left(\sum_{k \in \mathcal{I}^*_{lK}} \Delta_k^2\right) - M_1 K^* (\log\bar{\lambda})^2.
$$
For $K \in \calB_{1n}$, the definition of the set implies that the signal term $\underline{\lambda} \sum_{k \in \calI^*_{lK}} \Delta_k^2$ dominates the $M_1 K^* (\log\bar{\lambda})^2$ noise remainder, yielding $n\delta_{K,K^*} \gtrsim M_1 \Delta_{(K^*)}^2 \sqrt{n\log(n)}$. Dividing by $n$, we obtain $\delta_{K,K^*} \gtrsim M_1 \Delta_{(K^*)}^2 \sqrt{\frac{\log(n)}{n}}$.

\textbf{Case 2: $K \in \calB_{2n} \subset \calM_o$.}
From the over-fit bound established in equation \eqref{thm2:e2}, we have:
$$
n\delta_{K, K^*} \gtrsim \E_E\left\{\calS_{s^E}\left(\mathcal{T}_{K}\right)-\calS_{s^E}\left(\mathcal{T}_{K} \cup \mathcal{T}^*\right)\right\} +\left\{\calS_{\boldsymbol\epsilon^O}\left(\mathcal{T}^*\right)-\calS_{\boldsymbol\epsilon^O}\left(\mathcal{T}_{K} \cup \mathcal{T}^*\right)\right\} - M_1 \log \log \bar{\lambda}.
$$
For $K \in \calB_{2n}$, the definition of the set ensures that the first two terms exceed the stochastic remainder by a sufficient margin, yielding $n\delta_{K,K^*} \gtrsim M_1 \Delta_{(K^*)}^2 \sqrt{n\log(n)}$. Dividing by $n$, we again obtain $\delta_{K,K^*} \gtrsim M_1 \Delta_{(K^*)}^2 \sqrt{\frac{\log(n)}{n}}$.

Therefore, for any $K \in \calB_{1n} \cup \calB_{2n}$, combining the lower bound on $\delta_{K,K^*}$ with the upper bound on $\sigma_{K,K^*}$ yields:
$$
\max_{J \ne K} \frac{\delta_{K,J}}{\sigma_{K,J}} \ge \frac{\delta_{K,K^*}}{\sigma_{K,K^*}} \gtrsim \frac{M_1 \Delta_{(K^*)}^2 \sqrt{\frac{\log(n)}{n}}}{\Delta_{(K^*)}} = M_1 \Delta_{(K^*)} \sqrt{\frac{\log(n)}{n}}.
$$
Because $\Delta_{(K^*)} \sqrt{\frac{\log(n)}{n}} \gg \frac{x_n}{\sqrt{n \log(n)}}$ for any $x_n = o(1)$, the ratio strictly violates the threshold required for inclusion in $\calA$. Consequently, every model $K \in \calB_{1n} \cup \calB_{2n}$ is excluded from $\calA$ with probability tending to $1$. It immediately follows that at least $|\calB_{1n}| + |\calB_{2n}|$ candidate models are absent from the selected set, establishing $\Pr\{|\calA| \le K_{\max} - |\calB_{1n} | - |\calB_{2n} |\} \ge 1 - o(1)$.
% Let $\calB_1 := \left\{ K \in \calM_{l}: \sum_{k \in \calI^*_{lK}} \Delta^2_k \ge M_1 \Delta_{(K^*)}^2\sqrt{n\log(n)}/\underline{\lambda}\right\}$, and \\
% {\footnotesize $\calB_2 := \left\{K \in \calM_{o}:   \E\left\{\calS_{s^E}\left(\mathcal{T}_{K}\right)-\calS_{s^E}\left(\mathcal{T}_{K} \cup \mathcal{T}^*\right)\right\}  + \left\{\calS_{\varepsilon^O}\left(\mathcal{T}^*\right)-\calS_{\varepsilon}^O\left(\mathcal{T}_{K} \cup \mathcal{T}^*\right)\right\}  \ge M_1 \Delta_{(K^*)}^2 \sqrt{n \log(n)}\right\}$}. \\ By inequality \eqref{equ:upperxi} and $\max_{i=1,\ldots, n}\| \bar{\bs}^O_{K,i}- \bar{\bs}^O_{J,i}\|_2^2 \lesssim \Delta_{(1)}^2$ for all pair $K \ne J$ with probability goes to 1,
% $$
% \sigma_{K, K^*} \lesssim  n^{-1} \sum_{i=1}^n  \| \bar{\bs}^O_{K,i}- \bar{\bs}^O_{J,i}\|_2^2 +  \Var\left\{(\bs_i^E)^\top \left(\bar{\bs}^O_{K,i} - \bar{\bs}^O_{J,i} \right)\right\} \lesssim M_1 \Delta_{(1)}^2,
% $$
% with {\blue asymptotic probability $1$}. According to equation \eqref{thm2:e1} and \eqref{thm2:e2} and direct calculations, we can show that for $K \in \calB_1 \cup \calB_2$
% $$
% \delta_{K,K^*} \ge  M_1 \Delta_{(1)}^2 \sqrt{\frac{\log(n)}{n}}.
% $$
% Therefore,
% $$
% \begin{aligned}
% 	\max_{J \ne K} \frac{\delta_{K,J}}{\sigma_{K,J}} \ge \frac{\delta_{K,K^*}}{\sigma_{K,K^*}}\ge \frac{\delta_{K,K^*}}{M_1 \Delta_{(1)}^2} \ge c \sqrt{\frac{\log(n)}{n}}.
% \end{aligned}
% $$
% Hence, $\Pr\{|\calA| \le K_{\max} - |\calB_1 | - |\calB_2 |\} \ge 1 - o(1)$  based on Theorem \ref{thm1} (ii).

\section{Additional related literature}
\label{sup:extra_lit}
In the field of change-point detection literature, while there is no shortage of research on statistical inference related to the number of change-points, the majority, if not all, of these works have focused on controlling either the Familywise Error Rate (FWER) or the False Discovery Rate (FDR). \cite{frick2014multiscale} first introduced a novel method called SMUCE to control FWER for data from one-dimensional exponential family.  \cite{pein2017heterogeneous} extended this approach to the heterogeneous change-point detection problem. The FWER in change-point detection context is typically defined as the probability that the estimated number of change-points $\hK$ strictly exceeds the true number $K^*$, i.e., $\Pr\{\hK > K^*\}$. By specifying a small predetermined level $\alpha$ for FWER, one can have certain confidence that $\hK$ does not overestimate the true number of change-points.

On the flip side, however, the SMUCE-type procedures can be overly stringent, resulting in low statistical power and a tendency to significantly underestimate the true number of change-points in practical applications \citep{li2016fdr, chen2021data}. Instead of FWER, \cite{li2016fdr} suggested controlling a less stringent criterion FDR. FDR is defined as the proportion of false discoveries among the selected change-points, i.e., $\E[\hK^F/\hK]$, where $\hK^F$ represents the number of false discoveries. See \cite{li2016fdr} for a more comprehensive definition of FDR in change-point problems.  In the realm of FDR-based methods, \cite{hao2013multiple} introduced a screening and ranking algorithm (SaRa) for detecting one-dimensional normal mean change-points. \cite{li2016fdr} proposed a multiscale change-point segmentation approach (FDRseg) based on the same model setup. \cite{cheng2020multiple} advocated a differential smoothing and testing of maxima/minima algorithm (dSTEM) for continuous time series. \cite{chen2021data} developed a mirror with order-preserved splitting procedure called MOPS,  to address a broader range of change-point models, including structural changes and variance changes. \cite{liu2022generalized} extended the knockoff framework to control FDR in structural change detection, which in turn can be modified for change-point detection. \cite{sun2025synthetic} provided a data-splitting approach for FDR control. While FDR control may provide greater power than FWER, it does not guarantee either consistent estimation of true number of change-points, or the finite-sample confidence of recovering the true number. At its core, FDR only concerns the expectation of false discoveries rather than probability. For instance, consider the scenario of overfitting. Note that $\hK = K^* + \hK^F$, and FDR control aims to ensure that $\E[\hK^F/\hK]\leq\alpha$. Under some mild conditions, it implies that $\E[\hK^F]\leq(\alpha/(1-\alpha)) K^*$ approximately, which further means that $\E[\hK^F]$ can increase as the true number $K^*$ increases. Consequently, if $K^*$ is sufficiently large, $\E[\hK^F]$ may also become substantial. As a result, the point estimate $\hK$ may deviate significantly from the true value in such cases.

The concept of confidence set studied in this paper is fundamentally distinct with FWER and FDR, as it revolves around the probability of recovering the true number of change-points, and the proposed confidence does not rely on the true number $K^*$. Furthermore, letting $\alpha$ approaches zero, a well-designed detection algorithm should lead to the cardinality of the constructed confidence set equal to one, thus further result in consistency, highlighting the robustness and reliability of the algorithm in accurately identifying the true number of change-points.

{
\section{Choice of loss function and discussion of related conditions}
\label{sup:loss}
Table 1 in \cite{zou2020consistent} have discussed several important settings. Here, we discuss two extra important settings here, as detailed below.
\begin{itemize}
\item \textbf{Covariance change-points model.}
Consider the setting $\bz_i \in \R^p \sim (\mathbf{0}, \bSigma_k^*)$ where $\tau_{k-1}^* < i \le \tau_k^*$ for $k = 1, \ldots, K^*+1$ and $i = 1, \ldots, 2n$. For this setup, we can choose the loss function $l(\bbeta; \bz_i) = \|\bz_i \bz_i^\top - \bbeta\|_F^2$ for $\bbeta \in \R^{p \times p}$. The gradient of the loss function is $\partial l(\bbeta; \bz_i)/\partial \bbeta = \bz_i \bz_i^\top$, and with $\bbeta = \bgamma = 0$, we define the score $\bs_i = \operatorname{vech}(\partial l(\bbeta; \bz_i)/\partial \bbeta) = \operatorname{vech}(\bz_i \bz_i^\top)$. The method from \cite{aue2009break} can then be applied to detect change-points in this setting.

    \item \textbf{Network change-points model.}
In the network change-point setting, we assume $\bz_i \in \R^{p \times p} \sim \operatorname{Bern}(\boldsymbol{\Theta}^*_k)$, where $\tau_{k-1}^* < i \le \tau_k^*$ for $k = 1, \ldots, K^*+1$ and $i = 1, \ldots, 2n$. Here, $\bz \sim \operatorname{Bern}(\boldsymbol{\Theta})$ means that $\bz_{i,j} \stackrel{i.i.d.}{\sim} \operatorname{Bern}(\boldsymbol{\theta}_{i,j})$. For this model, we may choose the loss function $l(\bbeta; \bz_i) = \|\bz_i - \bbeta\|_F^2$ for $\bbeta \in \R^{p \times p}$. The gradient is $\partial l(\bbeta; \bz_i)/\partial \bbeta = \bz_i$, and with $\bbeta = \bgamma = 0$, the score becomes $\bs_i = \operatorname{vech}(\partial l(\bbeta; \bz_i)/\partial \bbeta) = \operatorname{vech}(\bz_i)$. The method from \cite{wang2021optimal} can be used to detect the change-points in this context.
\end{itemize}
Let $\boldsymbol{\eta}_i \in \R^p \overset{i.i.d.}{\sim} (0, \mathbf{I})$, and denote it as $\eta_i$ when it reduces to the scalar case (i.e., $p=1$). Model (2.1) can be specified into the following different models. Let $\boldsymbol{\Sigma}$ be a fixed covariance matrix. For the network model, $\mathbf{Z}_i = (z_{i,j,l})_{1 \le j,l \le n} \sim \operatorname{Bern}(\boldsymbol{\Theta}_i)$ represents $z_{i,j,l} \overset{i.i.d.}{\sim} \operatorname{Bern}(\beta_{i,j,l})$, and let $\mathbf{W}_i = \mathbf{Z}_i - \boldsymbol{\Theta}_i$.

\begin{table}[h]
	\caption{Detailed choice of score functions for OPTICS under different models.\label{tab:condition}}
	\centering
	\begin{tabular}{cccc}
		\hline
		Name & Formula  &$l(\bbeta; \bz_i)$     & $\bs_i$         \\ \hline
           Mean  &$\bz_i = \bbeta^*_k + \boldsymbol{\Sigma}^{1/2} \boldsymbol{\eta}_i$ &$\|\bz_i -\bbeta \|_2^2$ & $\bs_i = \bz_i$  \\
           Variance &$z_i = \beta^*_k \eta_i$ &$\|\log(z^2_i) - \beta \|_2^2$  &$s_i = 2\log(z_i)$  \\
           Regression &$z_i = \bx_i^\top \bbeta^*_k + \eta_i$ &$\|z_i -\bbeta^\top \bx_i \|_2^2$ & $\bs_i = z_i(1, \bx_i^\top)^\top$ \\
           Covariance &$\bz_i = (\boldsymbol{\Theta}^*_k)^{1/2} \boldsymbol{\eta}_i, \bbeta^*_k \in \R^{p\times p}$ &$\|\bz_i\bz^\top_i -\boldsymbol{\Theta} \|_F^2$  & $\bs_i = \operatorname{vech}(\bz_i \bz^\top_i)$  \\
           Network &$\mathbf{Z}_i =  \boldsymbol{\Theta}^*_k+\mathbf{W}_i, \boldsymbol{\Theta}^*_k \in \R^{p\times p}$ &$\|\mathbf{Z}_i - \boldsymbol{\Theta}\|_F^2$ & $\bs_i = \operatorname{vech}(\mathbf{Z}_i)$  \\
		\hline
	\end{tabular}
\end{table}

Based on this table, we can clearly identify the sufficient conditions for $\bz_i$ to satisfy Condition 3.1 in the main content. For instance, in the case of the multiple mean change-point model, the original data $\bz_i$ should be a sub-Gaussian vector. Similarly, for multiple variance change-point models, it is required that a transformation of $z_i$, specifically $\log(z_i)$, is sub-Gaussian.
}

\section{More simulation results}

\label{sup:simu}
This section devotes to more simulation results. Table \ref{tab:1} and \ref{tab:3} report the coverage rates when $d=1$ and $d=5$, respectively, when the error term follows the standard normal distribution in Section \ref{subsec:4.1}.  Table \ref{tab:6} provides the coverage rates in Section \ref{subsec:4.2}, with standard normal errors. 

\begin{table}[h]
	\caption{The coverage rates in the mean-change model: $d=1$; normal error.\label{tab:1}}
	\centering
	\begin{tabular}{cccccc}
		\hline
		Amplitude  $A$    & 0.50       & 0.625      & 0.75       & 0.875      & 1.00          \\ \hline
		OPTICS(BS)  & 0.77(4.10) & 0.84(3.75) & 0.87(3.10) & 0.84(2.66) & 0.84(2.68) \\
		OPTICS(SN)  &0.82(4.00) & 0.95(3.42) & 0.98(2.56) & 0.99(2.39) & 1.00(2.43) \\
		$\calA^1_{COPSS}$(BS) & 0.45(3.00)       & 0.74(3.00)       & 0.85(3.00)       & 0.89(3.00)       & 0.97(3.00)       \\
		$\calA^1_{COPSS}$(SN) & 0.57(3.00)      & 0.91(3.00)       & 0.92(3.00)       & 0.97(3.00)       & 0.98(3.00)      \\
		$\calA^1_{FDRseg}$ & 0.83(3.00)      & 0.99(3.00)      & 0.98(3.00)       & 0.96(3.00)      & 0.96(3.00)      \\
		$\calA^1_{SMUCE}$ & 0.24(3.00)      & 0.86(3.00)      & 0.98(3.00)       & 1.00(3.00)      & 1.00(3.00)      \\
		COPSS(BS) & 0.23       & 0.38       & 0.52       & 0.60       & 0.52       \\
		COPSS(SN) & 0.30       & 0.61       & 0.84       & 0.89       & 0.89       \\
		FDRseg    & 0.53       & 0.85       & 0.89       & 0.82       & 0.82       \\
		SMUCE     & 0.01       & 0.36       & 0.92       & 1          & 1          \\
		\hline
	\end{tabular}
\end{table}

\begin{table}[h]
	\caption{The coverage rates in the mean-change model: $d=5$; normal error.\label{tab:3}}
	\centering
	\begin{tabular}{cccccc}
		\hline
		Amplitude  $A$       & 0.50       & 0.625      & 0.75       & 0.875      & 1.00          \\ \hline
		OPTICS(BS)  & 0.73(3.27) & 0.55(2.34) & 0.60(2.19) & 0.56(2.04) & 0.67(2.20) \\
		OPTICS(SN)  & 0.83(3.17) & 0.83(2.04) & 0.94(1.91) & 0.92(1.85) & 0.93(2.04) \\
		$\calA^1_{COPSS}$(BS) & 0.65(3.00)       & 0.70(3.00)       & 0.76(3.00)       & 0.80(3.00)       & 0.88(3.00)       \\
		$\calA^1_{COPSS}$(SN) & 0.77(3.00)      & 0.83(3.00)       & 0.96(3.00)       & 0.93(3.00)       & 0.95(3.00)      \\
		COPSS(BS) & 0.35       & 0.33       & 0.36       & 0.42       & 0.43       \\
		COPSS(SN) & 0.48       & 0.75       & 0.80       & 0.80       & 0.69       \\
		\hline
	\end{tabular}
\end{table}

\begin{table}[h]
	\caption{The coverage rates in linear model with coefficient structural-breaks; $N(0,1)$ errors.\label{tab:6}}
	\centering
	\begin{tabular}{cccccc}
		\hline
				Amplitude $A$   & 0.10       & 0.125      & 0.15       & 0.175      & 0.20        \\ \hline
		OPTICS(BS)  & 0.83(3.52) & 0.90(3.13) & 0.89(2.68) & 0.80(2.23) & 0.78(1.97) \\
		OPTICS(SN)  & 0.92(2.76) & 0.97(2.28) & 0.98(1.90) & 0.97(1.61) & 1.00(1.51) \\
		$\calA^1_{COPSS}$(BS) & 0.50(3.00) & 0.59(3.00) & 0.60(3.00) & 0.59(3.00) & 0.61(3.00) \\
		$\calA^1_{COPSS}$(SN) & 0.72(3.00) & 0.76(3.00) & 0.72(3.00) & 0.74(3.00) & 0.77(3.00) \\
		COPSS(BS) & 0.22       & 0.31       & 0.26       & 0.33       & 0.32       \\
		COPSS(SN) & 0.51       & 0.57       & 0.56       & 0.62       & 0.58       \\
		\hline
	\end{tabular}
\end{table}

\newpage

\subsection{Variance change-point model}

In this subsection, we consider the variance change-point model  
$$
y_i = \sigma_k \epsilon^*_i, \ \tau^*_{k-1} < i \le \tau_k^*, \ k=1,\ldots, K^*+1, \  i =1,\ldots, 2n,
$$
where $\sigma^*_{k+1}/ \sigma^*_{k} = A^{(-1)^{k-1}}, \ k=1,\ldots, K^*$ and $\sigma_1=1$. The error terms $\epsilon_i \sim N(0,0.25)$ following \cite{chen1997testing}. In this model, we set the sample size $n=1000$,  the true change-point set $\calT^* = \{200k, k = 1, \ldots, 4\}$ and the change amplitude $A = \{2, 3, 4, 5, 6 \}$. 
\begin{table}[h]
	\caption{The coverage rates in the variance change-point model\label{tab:5}.}
	\centering
	\begin{tabular}{cccccc}
		\hline
		Amplitude $A$   & 2       & 3      & 4       & 5      & 6       \\ \hline
		OPTICS(BS)  & 0.86(3.99) & 0.85(2.73) & 0.93(2.63) & 0.88(2.71) & 0.88(2.71) \\
		OPTICS(SN)  & 0.83(4.55) & 0.98(2.84) & 0.99(2.66) & 1.00(2.83) & 0.99(3.02) \\
		$\calA^1_{COPSS}$(BS) & 0.51(3.00)       & 0.77(3.00)       & 0.88(3.00)       & 0.87(3.00)       & 0.94(3.00)       \\
		$\calA^1_{COPSS}$(SN) & 0.53(3.00)      & 0.83(3.00)       & 0.94(3.00)       & 0.96(3.00)       & 0.97(3.00)      \\
		$\calA^1_{FDRseg}$ & 0(3.00)      & 0(3.00)      & 0(3.00)       & 0(3.00)      & 0.01(3.00)      \\
		$\calA^1_{SMUCE}$ & 0.24(3.00)      & 0.13(3.00)      & 0.13(3.00)       & 0.15(3.00)      & 0.19(3.00)      \\
		COPSS(BS) & 0.28       & 0.53       & 0.63       & 0.60       & 0.55       \\
		COPSS(SN) & 0.18       & 0.74       & 0.87       & 0.89       & 0.97      \\ 
		FDRseg &0& 0& 0& 0& 0 \\
		SMUCE &0.11& 0.06& 0.06& 0.04& 0.04  \\
		\hline
	\end{tabular}
\end{table}

Table \ref{tab:5} compares the coverage rates of OPTICS with the state-to-art methods. Similar as in the multiple mean-change model, OPTICS with SN and BS achieve the nominal confidence level. Their coverage rates consistently surpass those of quasi-confidence sets created from COPSS. This implies OPTICS is desirable from both empirical and theoretical standpoints. Additionally, as expected, FDRseg and SMUCE, along with their respective quasi-confidence sets, lack power as they are tailored for detecting mean changes, rendering them insensitive to variance changes.

{
\subsection{Network change-points model}
Consider the network change-points model
$$
\mathbf{Z}_i = \boldsymbol{\Theta}^*_k + \mathbf{W}_i, \ \tau_{k-1}^*  < i  \le \tau^*_{k}, \ k = 1, \ldots, K^{*}+1, \ i=1,\ldots,2n,
$$ 
where $\tau_k^*, \ k = 1, \ldots, K^*$ are the true network change-points, $\boldsymbol{\Theta}^*_k$ is the $d \times d$-dimensional network mean matrix for subject $i$ when $\tau_{k-1}^*  < i  \le \tau^*_{k}$; $\mathbf{W}_i$ is the independently distributed Bernoulli error. The sample size is taken to be $2n = 1000$, and the set of true change-points $\calT^* = \{\tau_k^* = 200k, k =1, \ldots, 4\}$, hence $K^*=4$. The $k$th mean vector $\boldsymbol{\Theta}^*_k$ is generated from stochastic block model \citep{wang2021optimal} with connectivity matrices $\mathbf{Q}_k = \mathbf{Q}_l$ with $l = \mod(k/2)+1$, and
$$
\mathbf{Q}_1 = A \times \begin{bmatrix}
    0.6, &1,&0.6\\
    1, &0.6, &0.5\\
    0.6, &0.5,&0.6\\
\end{bmatrix} \text{ and } \mathbf{Q}_2 = A \times \begin{bmatrix}
    0.6, &0.5,&0.6\\
    0.5, &0.6,&1\\
    0.6, &1,&0.6\\
\end{bmatrix}
$$
where $A$ is a scaler, varying among $\{0.50, 0.60, 0.70, 0.80, 0.90\}$. Each network is generated from a $3$-community stochastic block model and node size $d=5$. At the change points, membership are reshuffled randomly. This simulation setting mimics the situation in \cite{wang2021optimal}, and we choose the Network Binary Segmentation (NBS) proposed therein as our change-point detection procedure.

We compare OPTICS with COPSS and its naive confidence set $\calA^1_{COPSS}$. Table \ref{tab:sup:3} report the coverage rates. The quantities in parentheses are average cardinalities of estimated sets. We take $q=1$ for the quasi-confidence sets for COPSS, i.e., $\calA^{1} = \{\hK-1, \hK, \hK+1\}$ with cardinality $3$, and denote the generated sets by $\calA^1_{COPSS}$. We can see that OPTICS is the only method closely reaches the specified confidence level.

\begin{table}[h]
	\caption{The coverage rates in the network change-points model.\label{tab:sup:3}}
	\centering
	\begin{tabular}{ccccccc}
		\hline
		Amplitude  $A$        & 0.50       & 0.60       & 0.70       & 0.80       & 0.90       & 1.00       \\ \hline
		OPTICS(BS)  & 0.79(2.88) & 0.75(2.57) & 0.79(2.59) & 0.70(1.83) & 0.65(1.79) & 0.75(1.85) \\
		$\calA^1_{COPSS}$(BS) & 0.75(3.00) & 0.84(3.00) & 0.89(3.00) & 0.91(3.00) & 0.86(3.00) & 0.87(3.00) \\
		COPSS(BS) & 0.37       & 0.39       & 0.47       & 0.55       & 0.55       & 0.62       \\
		\hline
	\end{tabular}
\end{table}

\subsection{Covariance change-point model}
Consider the covariance change-points model 
%$\bz_i = (\boldsymbol{\Theta}^*_k)^{1/2} \boldsymbol{\eta}_i, \bbeta^*_k \in \R^{p\times p}$
$$
\mathbf{z}_i = (\boldsymbol{\Theta}^*_k)^{1/2} \boldsymbol{\eta}_i, \tau_{k-1}^*  < i  \le \tau^*_{k}, \ k = 1, \ldots, K^{*}+1, \ i=1,\ldots,2n,
$$ 
where $\boldsymbol{\Theta}^*_k$ is the $d \times d$-dimensional covariance matrix for subject $i$ when $\tau_{k-1}^*  < i  \le \tau^*_{k}$; $\boldsymbol{\eta}_i$ is the independently and identically distributed standard Gaussian vector. The sample size is taken to be $2n = 1000$, and the set of true change-points $\calT^* = \{\tau_k^* = 200k, k =1, \ldots, 4\}$, hence $K^*=4$. The $k$th mean vector $\boldsymbol{\Theta}^*_k = \boldsymbol{\Theta}^*_l$ with $l=\mod(k/2)+1$, and
$$
\boldsymbol{\Theta}^*_1 = \mathbf{I}_d \text{ and }  \boldsymbol{\Theta}^*_1 = \{A^{|i-j|}\}_{1 \le i, j \le n},
$$
where $A$ is a scaler, varying among $\{0.10, 0.20, 0.30, 0.40, 0.50\}$. We choose the Wild Binary Segmentation (WBS) proposed in \cite{fryzlewicz2014wild} as our change-point detection procedure.

We compare OPTICS with COPSS and its naive confidence set $\calA^1_{COPSS}$. Table \ref{tab:sup:4} reports the coverage rates, with the quantities in parentheses representing the average cardinalities of the estimated sets. From the results, we observe that OPTICS performs best among all methods, although it still does not achieve the specified confidence level. This may be due to the Frobenius measure not being an ideal choice. We leave this issue for future research.

\begin{table}[h]
	\caption{The coverage rates in the network change-points model.\label{tab:sup:4}}
	\centering
	\begin{tabular}{cccccc}
		\hline
		Amplitude  $A$        & 0.30       & 0.35       & 0.40       & 0.45       & 0.50       \\ \hline
		OPTICS(BS)  & 0.57(4.52) & 0.63(5.17) & 0.70(5.33) & 0.68(5.18) & 0.64(4.29) \\
		$\calA^1_{COPSS}$(BS) & 0.42(3.00) & 0.44(3.00) & 0.22(3.00) & 0.17(3.00) & 0.29(3.00) \\
		COPSS(BS) & 0.09       & 0.05       & 0.06       & 0.03       & 0.08       \\
		\hline
	\end{tabular}
\end{table}

\subsection{Multiple mean-change with heavy-tail error}
\label{app:simu:heavy}
The data-generating process for the heavy-tailed case is nearly identical to the multiple mean-change model with a $t$-distribution described in Section \ref{subsec:4.1}, with the only difference being that we set the degrees of freedom to $df=1$ in the $t$-distribution. From the Table~\ref{tab:coverage_ro}, we observe that Huber-OPTICS (H-OPTICS) with $\kappa=1.5$ and OPTICS consistently achieve higher coverage rates compared to existing methods. Furthermore, H-OPTICS produces narrower confidence sets than OPTICS, demonstrating greater power under heavy-tailed data.

\begin{table}[h]
    \caption{Coverage rates in the mean-change model ($d=1$) with heavy-tail errors.\label{tab:coverage_ro}}
    \centering
    \begin{tabular}{cccccc}
        \hline
        Amplitude $A$      & 0.50       & 0.625      & 0.75       & 0.875      & 1.00          \\ \hline
        H-OPTICS(BS)     & 0.58(3.01) & 0.57(3.07) & 0.54(3.13) & 0.58(3.27) & 0.63(3.00) \\
        H-OPTICS(SN)     & 0.97(2.99) & 0.92(2.95) & 0.93(3.03) & 0.97(3.28) & 0.95(3.05) \\
        OPTICS(BS)         & 0.82(4.21) & 0.73(4.23) & 0.74(4.15) & 0.79(4.25) & 0.80(4.09) \\
        OPTICS(SN)         & 1.00(4.32) & 1.00(4.32) & 1.00(4.54) & 0.99(4.31) & 1.00(4.32) \\
        $\calA^1_{COPSS}$(BS) & 0.29(3.00) & 0.34(3.00) & 0.34(3.00) & 0.31(3.00) & 0.36(3.00) \\
        $\calA^1_{COPSS}$(SN) & 0.60(3.00) & 0.51(3.00) & 0.52(3.00) & 0.45(3.00) & 0.45(3.00) \\
        COPSS(BS)          & 0.02       & 0.10       & 0.04       & 0.05       & 0.05       \\
        COPSS(SN)          & 0.04       & 0.01       & 0.03       & 0.08       & 0.00       \\
        \hline
    \end{tabular}
\end{table}

{
\subsection{Multiple mean-change with $m$-dependent errors}
\label{app:simu:dependent}

The data-generating process for this setting is nearly identical to the multiple mean-change model with a normal distribution described in Section \ref{subsec:4.1}, except that the error terms $\beps_t$ are $m$-dependent. Specifically, they are defined as 
$$
\beps_i = \sum_{l=1}^{m} \phi_l \boldsymbol{\eta}_{t+l}, \quad i = 1, \ldots, 2n, \quad \text{with } \boldsymbol{\eta}_t \sim N(0,\mathbf{I}_d), \, t = 1, \ldots, 2n+m,
$$
and $\phi_l = \sqrt{1/M}$. It follows that the $\beps_i$'s exhibit an $m$-dependent structure. We evaluate the performance of the methods under both a single change-point setting ($d=1$) and a multi-dimensional mean-change setting ($d=5$).

From Table~\ref{tab:m_optics} ($d=1$), we observe that $m$-dependent OPTICS (M-OPTICS) achieves significantly higher coverage rates compared to standard OPTICS and existing competing methods, which struggle heavily with the dependent errors. M-OPTICS(SN), in particular, maintains near-perfect coverage. This phenomenon demonstrates that M-OPTICS can effectively adapt to $m$-dependent data while maintaining the pre-specified coverage rate, albeit with slightly wider confidence sets.

Table~\ref{tab:m_optics_d5} extends this analysis to the more complex multi-dimensional case ($d=5$). As the number of true change-points increases, the estimation problem becomes substantially harder. Consequently, standard OPTICS and all competing baseline methods fail almost completely, yielding coverage rates near zero. While the coverage of M-OPTICS(BS) decreases in this challenging setting, M-OPTICS(SN) exhibits remarkable robustness. By combining the multiple-splitting procedure with self-normalization, M-OPTICS(SN) successfully preserves high coverage rates (ranging from 0.88 to 0.98) across all tested amplitudes, highlighting its distinct advantage in handling multi-dimensional, dependent data structures.

\begin{table}[h]
  \caption{Coverage rates in the mean–change model ($d=1$) with $m$–dependent errors.\label{tab:m_optics}}
  \centering
  \begin{tabular}{cccccc}
    \hline
    Amplitude $A$                & 0.50       & 0.625      & 0.75       & 0.875      & 1.00       \\ \hline
    M-OPTICS(BS)                 & 0.87(4.11) & 0.79(2.58) & 0.85(2.51) & 0.81(2.75) & 0.78(2.55) \\
    M-OPTICS(SN)                 & 0.93(3.23) & 0.96(2.27) & 1.00(1.99) & 1.00(2.23) & 1.00(2.27) \\
    OPTICS(BS)                   & 0.22(2.61) & 0.25(2.72) & 0.29(2.71) & 0.23(2.57) & 0.21(2.59) \\
    OPTICS(SN)                   & 0.41(2.92) & 0.46(3.08) & 0.39(2.91) & 0.31(2.74) & 0.43(3.00) \\
    $\calA^1_{\text{COPSS}}$(BS) & 0.05(3.00) & 0.08(3.00) & 0.06(3.00) & 0.06(3.00) & 0.02(3.00) \\
    $\calA^1_{\text{COPSS}}$(SN) & 0.04(3.00) & 0.06(3.00) & 0.05(3.00) & 0.05(3.00) & 0.06(3.00) \\
    $\calA^1_{\text{FDRseg}}$    & 0.00(3.00) & 0.00(3.00) & 0.00(3.00) & 0.00(3.00) & 0.00(3.00) \\
    $\calA^1_{\text{SMUCE}}$     & 0.00(3.00) & 0.00(3.00) & 0.00(3.00) & 0.00(3.00) & 0.00(3.00) \\
    COPSS(BS)                    & 0.00       & 0.05       & 0.03       & 0.02       & 0.01       \\
    COPSS(SN)                    & 0.01       & 0.01       & 0.01       & 0.03       & 0.04       \\
    FDRseg                       & 0.00       & 0.00       & 0.00       & 0.00       & 0.00       \\
    SMUCE                        & 0.00       & 0.00       & 0.00       & 0.00       & 0.00       \\
    \hline
  \end{tabular}
\end{table}

\begin{table}[h]
  \caption{Coverage rates in the mean–change model ($d=5$) with $m$–dependent errors.\label{tab:m_optics_d5}}
  \centering
  \begin{tabular}{cccccc}
    \hline
    Amplitude $A$                & 0.50       & 0.625      & 0.75       & 0.875      & 1.00       \\ \hline
    M-OPTICS(BS)                 & 0.48(2.16) & 0.48(1.97) & 0.49(1.93) & 0.53(2.06) & 0.66(2.12) \\
    M-OPTICS(SN)                 & 0.88(2.17) & 0.94(1.93) & 0.94(1.92) & 0.97(1.94) & 0.98(1.93) \\
    OPTICS(BS)                   & 0.05(1.88) & 0.04(1.89) & 0.11(2.03) & 0.10(2.03) & 0.07(1.90) \\
    OPTICS(SN)                   & 0.16(2.32) & 0.15(2.27) & 0.20(2.32) & 0.15(2.22) & 0.11(2.22) \\
    $\calA^1_{\text{COPSS}}$(BS) & 0.00       & 0.00       & 0.01       & 0.01       & 0.00       \\
    $\calA^1_{\text{COPSS}}$(SN) & 0.01       & 0.01       & 0.03       & 0.02       & 0.00       \\
    $\calA^1_{\text{FDRseg}}$    & 0.00(3.00) & 0.00(3.00) & 0.00(3.00) & 0.00(3.00) & 0.00(3.00) \\
    $\calA^1_{\text{SMUCE}}$     & 0.00(3.00) & 0.00(3.00) & 0.00(3.00) & 0.00(3.00) & 0.00(3.00) \\
    COPSS(BS)                    & 0.00       & 0.00       & 0.00       & 0.00       & 0.00       \\
    COPSS(SN)                    & 0.00       & 0.00       & 0.01       & 0.00       & 0.00       \\
    \hline
  \end{tabular}
\end{table}
}

}

{
\subsection{Varying dependence structure}
\label{app:simu:vary_dependent}
In this subsection, we use simulations to investigate the effect of the dependence structure $m$. The simulation setting is the same as in Subsection \ref{app:simu:dependent}, except we fix the amplitude at $A=0.75$ and vary the dependence parameter $m \in \{1, 2, 3, 5, 8\}$ to observe its impact.

\begin{table}[h]
  \caption{Coverage rates in the mean–change model ($d=1$) with $m$–dependent errors.\label{tab:vary_m_optics}}
  \centering
  \begin{tabular}{cccccc}
    \hline
    $m$                          & 1          & 2          & 3          & 5          & 8          \\ \hline
    M-OPTICS(BS)                 & 0.84(2.97) & 0.86(2.52) & 0.79(2.47) & 0.81(2.59) & 0.78(2.86) \\
    M-OPTICS(SN)                 & 0.99(2.53) & 1.00(2.16) & 1.00(2.10) & 1.00(2.18) & 0.94(2.18) \\
    OPTICS(BS)                   & 0.83(2.96) & 0.08(2.87) & 0.13(3.04) & 0.01(1.51) & 0.00(1.06) \\
    OPTICS(SN)                   & 0.99(2.47) & 0.25(3.59) & 0.27(3.37) & 0.01(1.80) & 0.00(1.40) \\
    $\calA^1_{\text{COPSS}}$(BS) & 0.83       & 0.00       & 0.02       & 0.00       & 0.00       \\
    $\calA^1_{\text{COPSS}}$(SN) & 0.99       & 0.03       & 0.01       & 0.00       & 0.00       \\
    $\calA^1_{\text{FDRseg}}$    & 0.94       & 0.00       & 0.00       & 0.00       & 0.00       \\
    $\calA^1_{\text{SMUCE}}$     & 1.00       & 0.00       & 0.00       & 0.00       & 0.00       \\
    COPSS(BS)                    & 0.50       & 0.00       & 0.00       & 0.00       & 0.00       \\
    COPSS(SN)                    & 0.94       & 0.00       & 0.00       & 0.00       & 0.00       \\
    FDRseg                       & 0.78       & 0.00       & 0.00       & 0.00       & 0.00       \\
    SMUCE                        & 1.00       & 0.00       & 0.00       & 0.00       & 0.00       \\
    \hline
  \end{tabular}
\end{table}

From Table~\ref{tab:vary_m_optics}, we observe that while most methods perform adequately when $m=1$, their coverage rates collapse to near zero as the dependence increases ($m \ge 2$). In contrast, M-OPTICS consistently maintains high coverage rates across all tested values of $m$, demonstrating strong robustness to dependent errors. Furthermore, M-OPTICS(SN) achieves superior coverage (remaining near 1.00) while generally producing narrower confidence sets compared to M-OPTICS(BS).

\subsection{Comparison of OPTICS and Multiple-Splitting OPTICS}
\label{app:simu:ms-optics}
In this subsection, we evaluate the performance of two variants of our proposed methodology: the base OPTICS method and the multiple-splitting OPTICS (MS-OPTICS). The simulation setup mirrors the data generating process described in Subsection \ref{subsec:4.1}, focusing on a single change-point setting ($d=1$). To investigate the impact of sample size and error distribution on the confidence sets, we fix the signal amplitude at $A=0.75$ and vary the total sample size $N \in \{400, 800, 1000, 1600\}$. Furthermore, to assess the robustness of the methods, the error terms are generated from both a standard Normal distribution and a heavy-tailed $t_{10}$ distribution.

\begin{table}[h]
  \caption{Coverage rates (lengths) for amplitude $A=0.75$ across varying sample sizes $N$.\label{tab:vary_n_optics}}
  \centering
  \begin{tabular}{clcccc}
    \hline
    Distribution & Method & $N=400$ & $N=800$ & $N=1000$ & $N=1600$ \\ \hline
    \multirow{4}{*}{Normal} 
    & OPTICS(BS)    & 0.88(2.41) & 0.80(2.12) & 0.89(2.59) & 0.74(2.33) \\
    & MS-OPTICS(BS) & 0.88(3.59) & 0.92(2.47) & 0.90(2.67) & 0.87(2.56) \\
    & OPTICS(SN)    & 1.00(2.20) & 1.00(2.00) & 1.00(2.41) & 0.99(2.18) \\
    & MS-OPTICS(SN) & 0.98(3.45) & 0.99(2.09) & 1.00(2.36) & 1.00(1.99) \\ \hline
    \multirow{4}{*}{$t-{distribution}$} 
    & OPTICS(BS)    & 0.74(2.43) & 0.84(2.13) & 0.85(2.54) & 0.77(2.52) \\
    & MS-OPTICS(BS) & 0.88(4.10) & 0.83(2.53) & 0.91(2.67) & 0.90(2.62) \\
    & OPTICS(SN)    & 1.00(2.45) & 0.98(2.34) & 1.00(2.68) & 1.00(2.74) \\
    & MS-OPTICS(SN) & 0.93(4.08) & 0.99(2.42) & 0.99(2.53) & 0.99(2.40) \\ \hline
  \end{tabular}
\end{table}

Table~\ref{tab:vary_n_optics} summarizes the empirical coverage rates and average lengths of the estimated confidence sets. We observe that for moderate sample sizes (e.g., $N \le 1000$), both OPTICS and MS-OPTICS deliver comparable and highly satisfactory coverage rates across both distributions. As the sample size increases to $N=1600$, MS-OPTICS exhibits an added layer of stability, maintaining a coverages under the Normal distribution and the heavy-tailed $t_{10}$ distribution. This suggests that while the base OPTICS is highly effective and efficient for typical sample sizes, the multiple-splitting procedure can serve as a robust alternative when the sample size grows exceptionally large.
}

\subsection{Effectiveness of change-point detection algorithm}
\label{subsec:4.3}

As aforementioned in Section \ref{sec:intro}, the cardinality of OPTICS can evaluate the efficacy of various change-point detection algorithms when utilized as the base algorithms for constructing OPTICS. An efficient detection algorithm should demonstrate superior capability in distinguishing the true number of change-points from others. Therefore, OPTICS coupled with a powerful detection algorithm is expected to achieve the coverage rate with a smaller cardinality. In this subsection, we conduct a comparison between the effectiveness of SN and BS under the univariate mean change-point model setting in Section \ref{subsec:4.1} 

The left panel of Figure \ref{fig:car} depicts the boxplot of cardinalities of OPTICS using BS and SN as base algorithm across 100 simulation runs, where the error term follows $N(0,1)$. As the amplitude $A$ increases, the average cardinalities decrease, with OPTICS with SN exhibiting a notably faster decline. Additionally, the right panel of Figure \ref{fig:car} shows the coverage rates of OPTICS with the two base algorithms, where OPTICS with SN demonstrates a higher coverage rate compared to that with BS. Hence, under the univariate mean change-point model setting, the SN method proves to be more effective than the BS method. Figure \ref{fig:car_t} is devoted to the parallel results with $t(10)$ errors. Similar phenomena are observed. 

\begin{figure}[h]
	\centering
	\includegraphics[width=6cm, height=5cm]{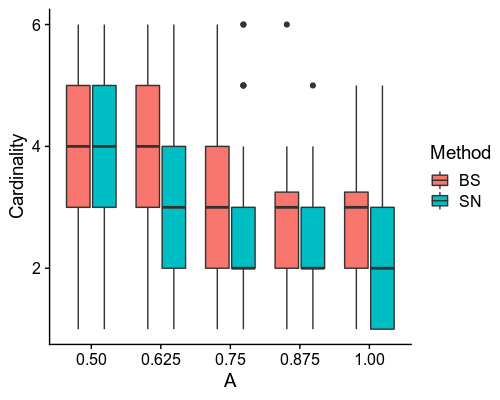}
	\includegraphics[width=6cm, height=5cm]{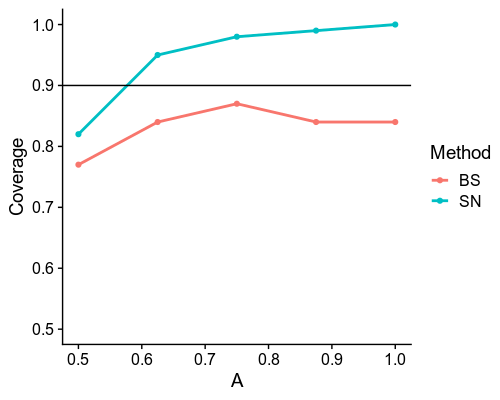}
	\caption{The boxplot of cardinalities of OPTICS and the line plot of coverage rates in univariate  mean change-point model: $N(0,1)$ error.}
	\label{fig:car}
\end{figure}

\begin{figure}[h]
	\centering
	\includegraphics[width=6cm, height=5cm]{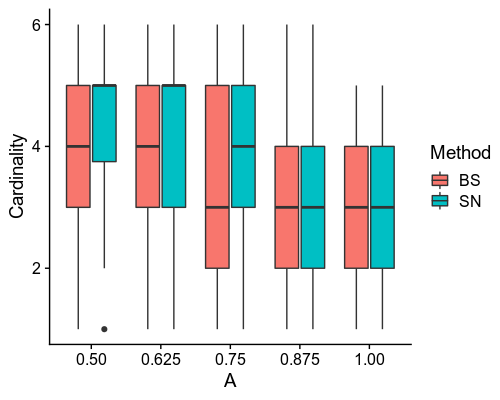}
	\includegraphics[width=6cm, height=5cm]{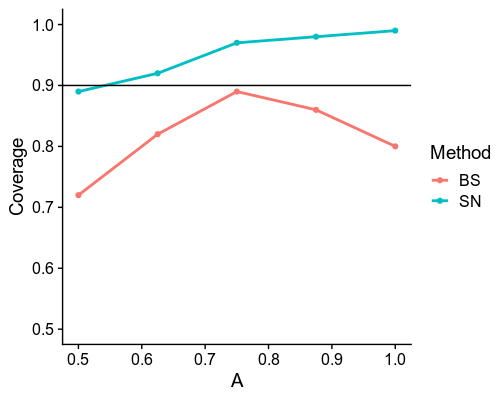}
	\caption{The boxplot of cardinalities of OPTICS and the line plot of coverage rates in univariate  mean change-point model: $t(10)$ error.}
	\label{fig:car_t}
\end{figure}

\section{More real data analysis results}\label{supp:real}
We provide the estimated change-position set $\calA_i, i=1,\ldots, 10$:
$$
\begin{aligned}
\mathcal{A}_1 & = \{13, 16\}, \\
\mathcal{A}_2 & = \{7, 10, 19, 22\}, \\
\mathcal{A}_3 & = \{4, 16, 19, 22\}, \\
\mathcal{A}_4 & = \{4, 7, 10, 13, 16, 19, 22\}, \\
\mathcal{A}_5 & = \{13, 22\}, \\
\mathcal{A}_6 & = \{4, 7, 10, 13, 16\}, \\
\mathcal{A}_7 & = \{4, 13, 16, 19, 22\}, \\
\mathcal{A}_8 & = \{22\}, \\
\mathcal{A}_9 & = \{19, 22\}, \\
\mathcal{A}_{10} & = \{22\}.
\end{aligned}
$$
The detected change-points for each individual are as follows:

$$
\begin{aligned}
\mathcal{S}_1 & = \{263, 341, 363, 388, 428, 449, 469, 1319, 1724, 1906, 2044, 2143, 2195\}, \\
\mathcal{S}_2 & = \{1642, 1663, 1771, 1795, 1816, 1965, 2195\}, \\
\mathcal{S}_3 & = \{540, 601, 2143, 2195\}, \\
\mathcal{S}_4 & = \{220, 2041, 2062, 2143\}, \\
\mathcal{S}_5 & = \{73, 174, 269, 1141, 1225, 1276, 1641, 1915, 1965, 1991, 2031, 2143, 2195\}, \\
\mathcal{S}_6 & = \{73, 105, 134, 2195\}, \\
\mathcal{S}_7 & = \{74, 134, 1572, 2195\}, \\
\mathcal{S}_8 & = \{177, 393, 521, 541, 601, 788, 811, 895, 932, 960, 1051, 1141, 1285, 1319, 1386, \\
             & \quad 1724, 1906, 1973, 1997, 2041, 2137, 2195\}, \\
\mathcal{S}_9 & = \{60, 221, 454, 521, 544, 581, 905, 925, 1029, 1054, 1141, 1225, 1249, 1378, 1522, \\
             & \quad 2047, 2071, 2141, 2195\}, \\
\mathcal{S}_{10} & = \{72, 134, 756, 1119, 1141, 1167, 1225, 1321, 1366, 1386, 1455, 1534, 1560, 1642, \\
                 & \quad 1685, 1726, 1818, 2044, 2091, 2143, 2166, 2195\}.
\end{aligned}
$$

% We provide the estimated change-position set when the number of change-points is taken as 5 in the real data analysis. The estimated set is 
% \begin{eqnarray*}
% 	\hat{\mathcal{S}}^5_{cpi}&=&\{05/2005, 08/2008, 11/2008, 01/2016, 01/2020\},\\
% 	\hat{\mathcal{S}}^5_{unemp}&=&\{09/2008, 11/2013, 03/2020,  09/2020, 08/2021\}.
% \end{eqnarray*} 
% And the changes are also described in Figure \ref{fig:cpi5}. 

% \begin{figure}[h]
% 	\centering
% 	\includegraphics[width=10cm, height=3.5cm]{.//figure//cpi_var5}
% 	\includegraphics[width=10cm, height=3.5cm]{.//figure//unem_mean5}
% 	\caption{Change-point detection for CPI and unemployment rate when $\hK=5$.{\label{fig:cpi5}}}
% \end{figure}

\bibliographystyle{Chicago}
\bibliography{wpref}
\end{document}